\def\eqref#1{equation~\ref{#1}}
\def\1{\bm{1}}
\def\rq{{\textnormal{q}}}
\DeclareMathAlphabet{\mathsfit}{\encodingdefault}{\sfdefault}{m}{sl}
\SetMathAlphabet{\mathsfit}{bold}{\encodingdefault}{\sfdefault}{bx}{n}
\DeclareMathOperator*{\argmin}{arg\,min}
\newcommand{\cA}{{\mathcal A}}
\newcommand{\cG}{{\mathcal G}}
\newcommand{\cP}{{\mathcal P}} 
\newcommand{\cR}{{\mathcal R}} 
\newcommand{\cS}{{\mathcal S}} 
\newcommand{\cM}{{\mathcal M}}
\newcommand{\cY}{{\mathcal Y}}
\newcommand{\cX}{{\mathcal X}}
\newcommand{\bma}{{\bm a}}
\newcommand{\bmr}{{\bm r}}
\newcommand{\bms}{{\bm s}}
\newcommand{\bmz}{{\bm z}}
\newcommand{\bmQ}{{\bm Q}}
\newcommand{\cN}{\mathcal N}
\newenvironment{sproof}{%
  \proof}{\endproof}
\theoremstyle{plain}
\newtheorem{theorem}{Theorem}[section]
\newtheorem{proposition}[theorem]{Proposition}
\newtheorem{lemma}[theorem]{Lemma}
\theoremstyle{definition}
\newtheorem{definition}[theorem]{Definition}
\theoremstyle{remark}
\icmltitlerunning{Learning Robust Multi-Agent Policies}
\begin{document}

\twocolumn[
  \icmltitle{Learning Robust Multi-Agent Policies via Selective Adversarial Fault Induction}

  % It is OKAY to include author information, even for blind submissions: the
  % style file will automatically remove it for you unless you've provided
  % the [accepted] option to the icml2026 package.

  % List of affiliations: The first argument should be a (short) identifier you
  % will use later to specify author affiliations Academic affiliations
  % should list Department, University, City, Region, Country Industry
  % affiliations should list Company, City, Region, Country

  % You can specify symbols, otherwise they are numbered in order. Ideally, you
  % should not use this facility. Affiliations will be numbered in order of
  % appearance and this is the preferred way.
  \icmlsetsymbol{equal}{*}

  \begin{icmlauthorlist}
    \icmlauthor{David Mguni}{qmul}
    \icmlauthor{Yaqi Sun}{qmul}
    \icmlauthor{Haojun Chen}{pku}
    \icmlauthor{Wanrong Yang}{liv}
    \icmlauthor{Amir Darabi}{snw}
    \icmlauthor{Larry Olanrewaju Orimoloye}{snw}
    \icmlauthor{Yaodong Yang}{pku}
    %\icmlauthor{}{sch}
    % \icmlauthor{Firstname8 Lastname8}{sch}
    % \icmlauthor{Firstname8 Lastname8}{yyy,comp}
    %\icmlauthor{}{sch}
    %\icmlauthor{}{sch}
  \end{icmlauthorlist}

  \icmlaffiliation{qmul}{Queen Mary University London}
  \icmlaffiliation{pku}{Peking University}
  \icmlaffiliation{liv}{University of Liverpool}
    \icmlaffiliation{snw}{Snowflake Inc}

  \icmlcorrespondingauthor{David Mguni}{d.mguni@qmul.ac.uk}
  % \icmlcorrespondingauthor{Firstname2 Lastname2}{first2.last2@www.uk}

  % You may provide any keywords that you find helpful for describing your
  % paper; these are used to populate the "keywords" metadata in the PDF but
  % will not be shown in the document
  \icmlkeywords{Machine Learning, ICML}

  \vskip 0.3in
]

% this must go after the closing bracket ] following \twocolumn[ ...

% This command actually creates the footnote in the first column listing the
% affiliations and the copyright notice. The command takes one argument, which
% is text to display at the start of the footnote. The \icmlEqualContribution
% command is standard text for equal contribution. Remove it (just {}) if you
% do not need this facility.

% Use ONE of the following lines. DO NOT remove the command.
% If you have no special notice, KEEP empty braces:
\printAffiliationsAndNotice{}  % no special notice (required even if empty)
% Or, if applicable, use the standard equal contribution text:
% \printAffiliationsAndNotice{\icmlEqualContribution}

\begin{abstract}
We study robustness to agent malfunctions in cooperative multi-agent reinforcement learning (MARL), a failure mode that is critical in practice yet underexplored in existing theory. We introduce MARTA, a plug-and-play robustness layer that augments standard MARL algorithms with a {\fontfamily{cmss}\selectfont Switcher}–{\fontfamily{cmss}\selectfont Adversary} mechanism which selectively induces malfunctions in performance-critical states. This formulation defines a fault-switching $(N+2)$-player Markov game in which the {\fontfamily{cmss}\selectfont Switcher} chooses when and which agent fails, and the {\fontfamily{cmss}\selectfont Adversary} controls the resulting faulty behaviour via random or worst-case policies. We develop a Q-learning-type scheme and show that the associated Bellman operator is a contraction, yielding existence and uniqueness of the minimax value, convergence to a Markov perfect equilibrium. MARTA integrates seamlessly with MARL algorithms without architectural modification and consistently improves robustness across Traffic Junction (TJ), Level-Based Foraging (LBF), MPE SimpleTag, and SMAC (v2). In these domains, MARTA achieves large gains in final performance of up to \textbf{116.7\%} in SMAC, \textbf{21.4\%} in MPE SimpleTag, and \textbf{44.6\%} in LBF, while significantly reducing failure rates under train–test mismatched fault regimes. These results establish MARTA as a theoretically grounded and practically deployable mechanism for fault-tolerant MARL.
\end{abstract}

\section{Introduction}

In multi-agent systems (MAS), agents must anticipate one another’s actions and coordinate effectively in order to solve tasks \citep{marl-book}. A fundamental assumption underlying most successful multi-agent reinforcement learning (MARL) methods is that agents reliably execute the actions prescribed by their learned policies. This assumption enables agents to coordinate with others whose actions may not be directly observable. To support such coordination, many MARL algorithms adopt centralised training with decentralised execution \citep{rashid2018qmix,oliehoek2016concise}. During training, agents have access to global information, while during execution they act using only local observations. This allows agents to form expectations about the behaviour of others under partial observability.

Despite its effectiveness, this training paradigm renders MARL systems highly vulnerable to agent malfunctions. When an agent deviates from its intended behaviour due to faults or failures, the assumptions underpinning coordination can break down, leading to severe performance degradation. Such malfunctions are commonplace in real-world systems \citep{cristian1991understanding}. For example, in industrial settings, robots are often required to coordinate closely with other agents, and failures to anticipate correct actions can prevent task completion or lead to dangerous outcomes. Similar challenges arise in settings where control of a single system is factorised across multiple agents, as in multi-agent MuJoCo or dexterous manipulation benchmarks, where each agent controls a distinct subcomponent of the overall system. These considerations motivate the need for MARL methods that remain effective in the presence of agent malfunctions.

In single-agent reinforcement learning (RL), a substantial body of work has studied fault-tolerant (FT) policies that maintain performance under failures \citep{mguni2019cutting,fan2021fault}. A common approach, inspired by control theory, introduces an adversarial agent that selects actions to minimise the learner’s expected return \citep{pinto2017robust}. By exposing the agent to worst-case outcomes, such methods aim to induce robust policies. This formulation leads to a zero-sum game between the controller and the adversary, which is amenable to theoretical analysis due to its structural properties \citep{osborne1994course}. However, adversaries that act in exact opposition to the agent’s objective often induce overly pessimistic behaviour, causing agents to proceed with excessive caution and degrading performance at higher levels of fault tolerance \citep{balancing-grau2018}.

% The challenges are amplified in MAS. Small deviations in an individual agent’s behaviour can propagate through the system as other agents adapt in response, resulting in large cumulative deviations from expected outcomes \citep{steinberg1983prevalence}. From a game-theoretic perspective, it is well known that equilibria can change dramatically in response to arbitrarily small perturbations in system parameters \citep{stahl1988instability}. As a consequence, coordination, which is often critical to MARL solutions, is particularly fragile in the presence of agent malfunctions \citep{slumbers2023game}.

To address these challenges, we introduce MARTA, a FT MARL framework designed to train agents that can robustly respond to agent malfunctions. We focus on cooperative, team-reward settings in which agents may suffer failures and cease to follow the policies they had learned to perform the task. MARTA augments the MARL system with an adaptive agent, termed the {\fontfamily{cmss}\selectfont Switcher}, which observes joint agent behaviour during training and learns when and which agent should malfunction so as to maximally disrupt coordination. Malfunctions are executed by a dedicated {\fontfamily{cmss}\selectfont Adversary}, while the cooperative agents learn policies that best respond to these selectively induced failures. 

To enable selective and state-dependent fault induction, MARTA equips the {\fontfamily{cmss}\selectfont Switcher} with switching controls \citep{oksendal2007applied}. At each state, the {\fontfamily{cmss}\selectfont Switcher} decides whether to activate a malfunction and, if so, which agent should be affected. Each activation incurs a cost or is constrained by a fixed budget, encouraging malfunctions only in states where they cause a substantial reduction in expected system performance. This design allows MARTA to calibrate the trade-off between robustness and nominal performance, avoiding the conservatism associated with always-on adversarial training. By exploiting information about the joint behaviour of agents, MARTA identifies malfunctions that are most detrimental to coordination and focuses training pressure on these critical situations.

\begin{figure}[h]
  % \vspace{-\baselineskip}
  \centering
  \begin{subfigure}[t]{0.48\linewidth}
    % \vspace{0pt}
    \includegraphics[height=2.8cm,keepaspectratio]{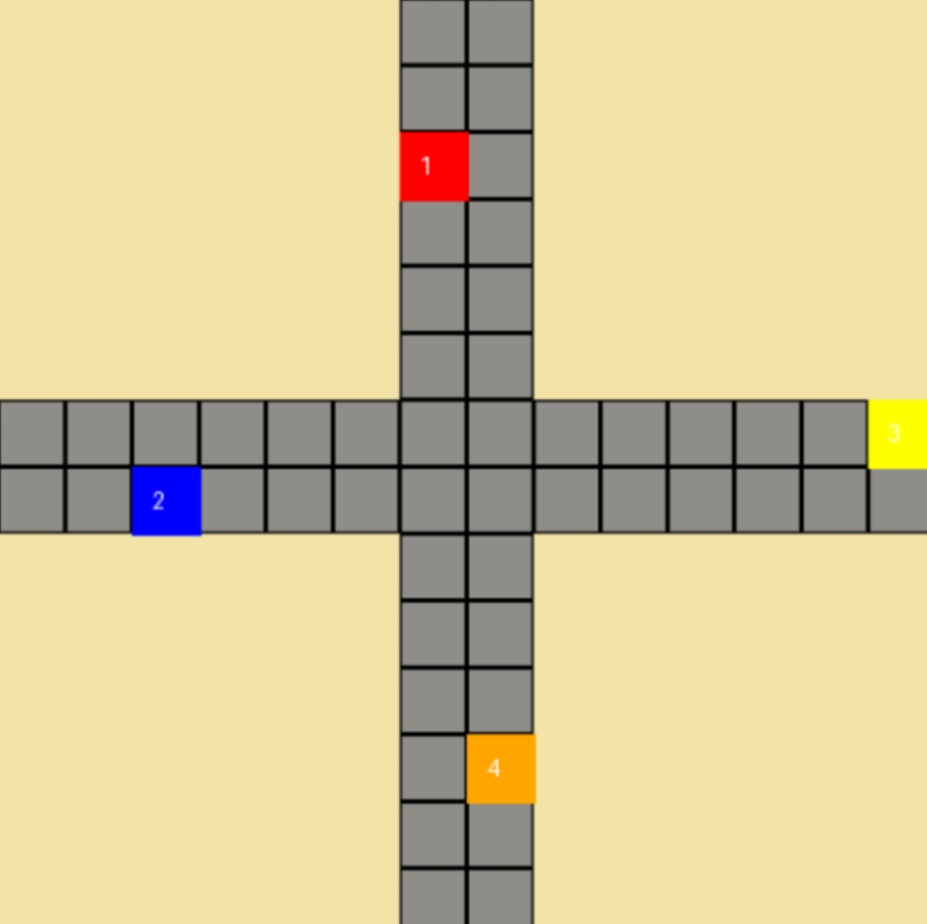}
    \caption{Traffic Junction Map}
    \label{fig:switcher_map}
  \end{subfigure}\hfill
  \begin{subfigure}[t]{0.48\linewidth}
    % \vspace{0pt}
    \includegraphics[height=2.8cm,keepaspectratio]{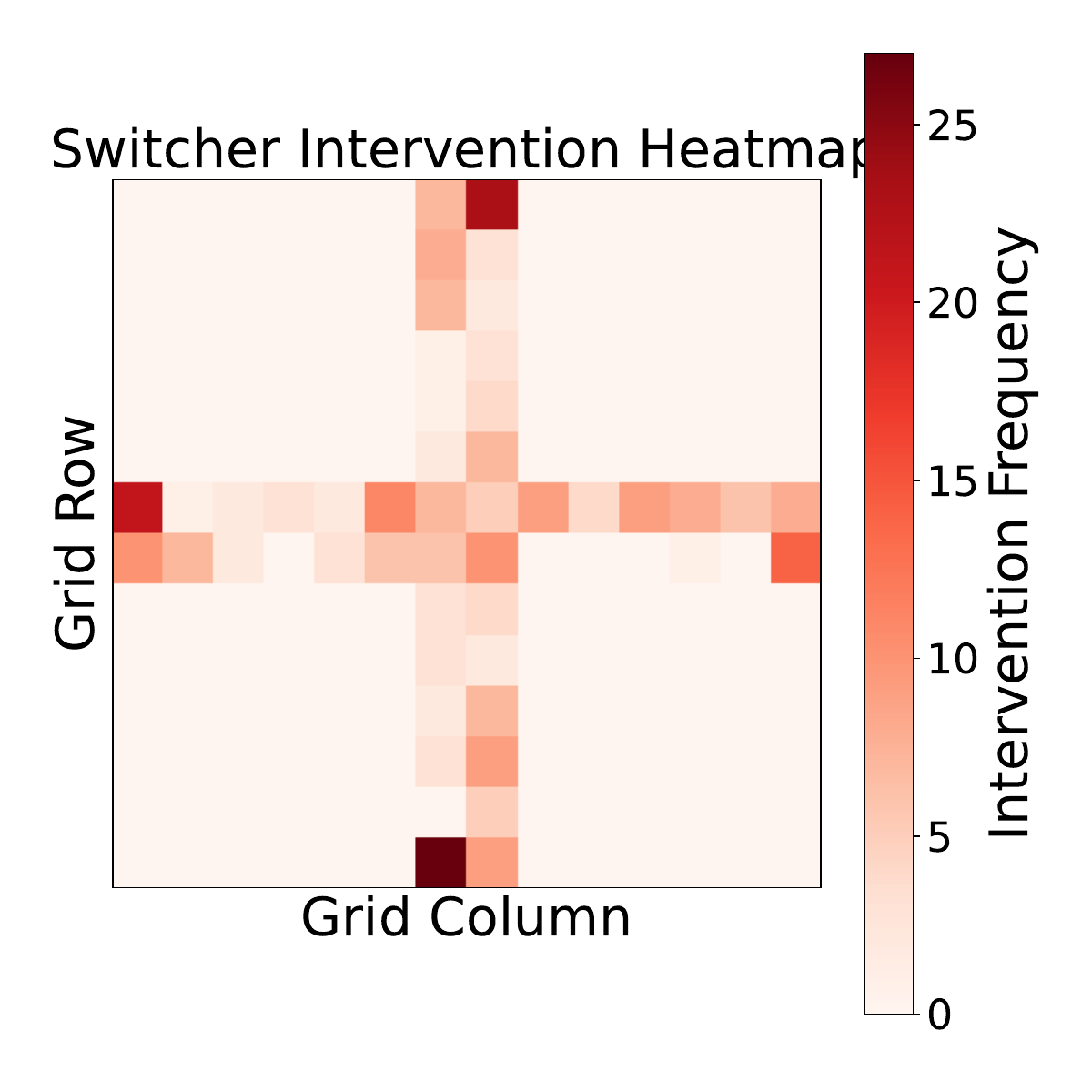}
    \caption{{\fontfamily{cmss}\selectfont Switcher} activation heatmap}
    \label{fig:switcher_heatmap}
  \end{subfigure}
  \vspace{-0.5\baselineskip}
\end{figure}
Figure~\ref{fig:switcher_heatmap} shows that the {\fontfamily{cmss}\selectfont Switcher} concentrates interventions at intersections and entry or exit points.

Introducing the {\fontfamily{cmss}\selectfont Switcher} and {\fontfamily{cmss}\selectfont Adversary} yields a nonzero-sum Markov game (MG) with $N+2$ agents \citep{fudenberg1991tirole}. Convergence in general MGs is rare~\citep{yang2020overview}. Under standard assumptions, which we make explicit in Sec.~\ref{sec:notation_appendix}, the switching control mechanism underlying MARTA admits a well-defined optimisation structure with contraction properties. We show the resulting game admits a unique minimax value and the induced learning dynamics converge under standard assumptions.

\textbf{Contributions.}\newline
\textbf{1.} We introduce MARTA, a framework for FT MARL based on a {\fontfamily{cmss}\selectfont Switcher}–{\fontfamily{cmss}\selectfont Adversary} mechanism that selectively induces agent malfunctions in coordination-critical states.\newline
\textbf{2.} We formulate a fault-switching MG with state-dependent, adversarially selected malfunctions, and prove existence and uniqueness of the minimax value together with convergence of Q-learning procedures under tabular and linear function approximation and malfunction budgets.
\newline
\textbf{3.} We show MARTA can be implemented as a plug-and-play robustness layer on top of standard MARL algorithms such as QMIX and VDN, without modifying their architectures.\newline
\textbf{4.} We evaluate MARTA on Traffic Junction, Level-Based Foraging, and MPE SimpleTag, demonstrating consistent robustness gains and reduced failure rates across discrete and continuous control tasks, and across both Uniform and Worst-case malfunction regimes.

These results establish MARTA as a principled and practical approach to training MARL policies that are robust to realistic agent malfunctions. Unlike classical zero-sum MGs, MARTA incorporates budgeted, state-dependent fault switching, enabling robustness without excessive conservatism.

\section{The MARTA Framework} \label{sec:MARTA}

We consider a decentralised Markov decision process (Dec-MDP) with agent set $\mathcal{N} = \{1,\dots,N\}$, global state space $\mathcal{S}$, and agent-specific action spaces $\mathcal{A}^i$. At each timestep, agents simultaneously select actions $\boldsymbol{a} = (a^1,\dots,a^N)$ and receive a shared team reward. Formally, the Dec-MDP is defined as the tuple
$\mathfrak{M}=\langle \mathcal{N},\mathcal{S},(\mathcal{A}_i)_{i\in\mathcal{N}},P,\mathcal{R},\gamma\rangle$,
where $\mathcal{S}$ is a finite set of states, $\mathcal{A}_i$ denotes the action space of agent $i\in\mathcal{N}$, and $\boldsymbol{\mathcal{A}}:=\times_{i=1}^N\mathcal{A}_i$ is the joint action space. The transition kernel
$P:\mathcal{S}\times\boldsymbol{\mathcal{A}}\times\mathcal{S}\rightarrow[0,1]$
governs the system dynamics, and $\gamma\in[0,1)$ is the discount factor. The reward function
$\mathcal{R}:\mathcal{S}\times\boldsymbol{\mathcal{A}}\rightarrow\mathcal{P}(D)$
maps state–action pairs to a distribution over rewards, where $D\subset\mathbb{R}$ is compact, and is shared by all agents. We consider a partially observable setting. Given the system state $s_t\in\mathcal{S}$, each agent $i\in\mathcal{N}$ receives a local observation $\tau_i^t=\mathcal{O}(s_t,i)$, where $\mathcal{O}:\mathcal{S}\times\mathcal{N}\rightarrow\mathcal{Z}_i$ and $\mathcal{Z}_i$ denotes the observation space of agent $i$. Each agent samples actions according to a Markov policy
$\pi_{i,\boldsymbol{\theta}_i}:\mathcal{Z}_i\times\mathcal{A}_i\rightarrow[0,1]$,
parameterised by $\boldsymbol{\theta}_i\in\mathbb{R}^d$, with $\pi_i\in\Pi_i$, where $\Pi_i$ is a compact policy space. For brevity, we write $\pi_i:=\pi_{i,\boldsymbol{\theta}_i}$ and denote the joint policy by $\boldsymbol{\pi}=(\pi^1,\dots,\pi^N)\in\boldsymbol{\Pi}:=\times_{i\in\mathcal{N}}\Pi_i$. At time $t$, the system is in state $s_t\in\mathcal{S}$ and each agent selects an action $a_t^i\in\mathcal{A}_i$, forming the joint action $\boldsymbol{a}_t\in\boldsymbol{\mathcal{A}}$. The system then transitions to the next state according to $P$, and each agent receives a reward $r_i\sim\mathcal{R}(s_t,\boldsymbol{a}_t)$. Each agent seeks to maximise the expected discounted return or value function
\begin{align}
v(s\mid\boldsymbol{\pi})=\mathbb{E}_{\boldsymbol{\pi}}\!\left[\sum_{t=0}^{\infty}\gamma^t\mathcal{R}(s_t,\boldsymbol{a}_t)\,\bigg|\,s_0=s\right].
\end{align}
MARTA includes an adaptive RL agent, termed the {\fontfamily{cmss}\selectfont Switcher}, which observes the joint behaviour of agents and learns when and which agent should malfunction in order to maximally disrupt coordination. Specifically, {\fontfamily{cmss}\selectfont Switcher} selects an agent to malfunction at which point, the agent's actions are decided by an {\fontfamily{cmss}\selectfont Adversary} agent's policy while the other agents execute their intended policy. In response, during training, the remaining agents learn how to respond to the behaviour of the malfunctioning agent and in so doing, learn how to respond to agent malfunctions within the collective. A key feature is that {\fontfamily{cmss}\selectfont Switcher} and {\fontfamily{cmss}\selectfont Adversary} are equipped with their own objective that aims to produce malfunctions that inflict the greatest possible harm to the system (during training) including faults that undermine coordination required to solve the task.  

We consider two malfunction regimes: a random fault policy modelling stochastic actuator or sensor failures, and a worst-case policy trained to maximally degrade collective performance. The {\fontfamily{cmss}\selectfont Switcher} learns when and which agent should be affected, enabling the base agents to learn best responses to both mild and catastrophic failures.

In MARTA, the nominal joint action $\bma \in \bm{\mathcal{A}}$ is proposed by the agents’ policies $\bm\pi$, while a faulty joint action $f \in \bm{\mathcal{A}}$ is proposed by the adversarial policies $\bm\sigma$.
 These joint actions are then observed by the policy $\mathfrak{g}$. {\fontfamily{cmss}\selectfont Switcher} samples a discrete decision $g_t\in \cN$ from its policy $\mathfrak{g}$.
% For any agent $i\in\cN$, we denote by $\sigma^i\in \Pi^i$ an {\fontfamily{cmss}\selectfont Adversary} malfunction policy for agent $i$. 
If $g_t = i\in\cN$, then agent $i$ is selected to malfunction whereby its action is overridden by an {\fontfamily{cmss}\selectfont Adversary} policy $\sigma^i\in \Pi^i$ while the agents $-i$ sample their actions from their intended policy $\pi^{-i}$ i.e. the joint action $\boldsymbol{a}_t=(f^i_t,a^{-i}_t)\sim (\sigma^i,\pi^{-i})$ is executed in the environment. 
% Contingent on the state and the actions $\boldsymbol{f}$ and $\boldsymbol{a}$, 
\textcolor{black}{During training in MARTA, the $N$ agents seek to jointly maximise the following objective: $
v(s|\boldsymbol{\pi},\mathfrak{g},\boldsymbol{\sigma})=\mathbb{E}_{\boldsymbol{\pi},\mathfrak{g},\boldsymbol{\sigma}}\left[\sum_{t=0}^\infty \gamma^t\cR(s_t,\boldsymbol{a}_t)|s=s_0\right]$. } 
% \textcolor{black}{Therefore, MARTA consists of three core components: first, the joint-action policy of the $N$ agents, $\boldsymbol{\pi}$, a joint-adversary policy $(\sigma^1,\ldots,\sigma^N)=\boldsymbol{\sigma}:\cS\times\boldsymbol{\cA}\to[0,1]$ and an RL process $\mathfrak{g}$ that determines which agent will be induced to malfunction. 
% If $g_t = 0$, each of the $N$ agents is unaffected and the joint action $\boldsymbol{a}_t$ is sampled from the joint policy $\boldsymbol{\pi}\in\boldsymbol{\Pi}$. 
% If $g_t = i\in\cN$, then agents $-i$ sample their actions from the policy $\pi^{-i}$ however agent $i$ is selected to malfunction and does not use its policy $\pi^i$. 
The {\fontfamily{cmss}\selectfont Adversary} and {\fontfamily{cmss}\selectfont Switcher}'s objectives are $-v$ which captures their goal to find malfunctions that have the potential to induce the greatest harm to the system. Therefore, by learning an optimal $\mathfrak{g}$, {\fontfamily{cmss}\selectfont Switcher} acquires the optimal policy for activating  {\fontfamily{cmss}\selectfont Adversary}. We later show in Theorem \ref{theorem:joint-sol} that in response, the MARL agents in turn learn how to best-respond to such failures. As remarked earlier, we consider two scenarios for a malfunctioning agent $i$. The first scenario is when the {\fontfamily{cmss}\selectfont Adversary} action is sampled from a purely random policy  and secondly, a worst-case action scenario. 
\textbf{Switching Control Mechanism.}
% So far {\fontfamily{cmss}\selectfont Switcher}'s problem involves learning to induce a malfunction by a single agent at \emph{every} state. This can severely undermine performance, since agents must continually guard against large losses arising from adversarial actions at each decision step. Moreover, the current formulation restricts {\fontfamily{cmss}\selectfont Switcher} to selecting only one agent per episode, although in practice the critical fault may shift between agents as the system evolves. A more expressive approach allows 
MARTA allows {\fontfamily{cmss}\selectfont Switcher} to choose, at any state, both which agent malfunctions and whether to activate a malfunction at all.  
To this end, {\fontfamily{cmss}\selectfont Switcher} is equipped with \textit{switching controls}: at state $s$, using its policy $\mathfrak{g}$, the {\fontfamily{cmss}\selectfont Switcher} selects an action from the discrete set $\mathcal{A}_S = \{0\} \cup \mathcal{N}$, where action $0$ corresponds to no malfunction and action $i \in \mathcal{N}$ triggers a malfunction of agent $i$. If a malfunction is induced for agent $i$ the agent is forced to execute an action using its adversarial policy $\sigma^i$. To encourage selectivity, each activation incurs a positive cost $c>0$, ensuring that malfunctions are only triggered when they cause a substantial reduction in expected performance. The task of {\fontfamily{cmss}\selectfont Switcher} is therefore to learn a policy $\mathfrak{g}$ that activates adversarial malfunctions solely at states where they most degrade performance. For a given $\boldsymbol{\pi}\in\boldsymbol{\Pi}$, {\fontfamily{cmss}\selectfont Switcher}’s objective is to find $\mathfrak{g}$ that \textit{maximises}:\footnote{We have employed the shorthand $\boldsymbol{a}_t\sim (\boldsymbol{\pi},\boldsymbol{\sigma})$ to denote $\boldsymbol{a}_t=(a^j_t,a^{-j}_t)\sim\bm\pi$ when $g_t=0$ and $\bma_t=(f^j_t,a^{-j}_t)\sim (\sigma^j,\pi^{-j})$ when $g_t=j\in \cN$ i.e. $
v_S(s|\boldsymbol{\pi},\mathfrak{g},\boldsymbol{\sigma})  = -\mathbb{E}_{\boldsymbol{\pi},\mathfrak{g},\boldsymbol{\sigma}}[ \sum_{t=0}^\infty\sum_{j\in\cN} \gamma^t[\left(\cR(s_t,(f^j_t,a^{-j}_t))+c\right) \boldsymbol{1}_{(g(s_t)=j)}+\cR(s_t,\boldsymbol{a}_t)(1-\boldsymbol{1}_{(g(s_t)=j)})]]$. For convenience we drop the dependence of $v_S$ on $\bm\sigma$.} \vspace{-.2cm}
\begin{align}\nonumber
v_S(s|\boldsymbol{\pi},\mathfrak{g},\bm\sigma)  = -\mathbb{E}_{\boldsymbol{\pi},\mathfrak{g},\boldsymbol{\sigma}}\left[ \sum_{t=0}^\infty \gamma^t\left[\cR(s_t,\boldsymbol{a}_t)+c \boldsymbol{1}_{\cN}(g(s_t))\right]\right]&,
% \\\;\forall s\in\cS&,
\end{align}
where $\boldsymbol{1}_{\cN}(g)=1$ if $g\in\cN$ and $\boldsymbol{1}_{\cN}(g)=0$ otherwise. {\fontfamily{cmss}\selectfont Switcher}'s action-value function is $Q_S(s,g |\boldsymbol{\pi},\mathfrak{g})=-\mathbb{E}_{\boldsymbol{\pi},\mathfrak{g}}\left[\sum_{t=0}^\infty \gamma^t(\cR(s_t,\boldsymbol{a}_t)+c\cdot \boldsymbol{1}_{\cN}(g))|s_0=s, g_0=g\right]$.  
% Now, $\cG=\langle (\cN,\text{{\fontfamily{cmss}\selectfont Switcher}, {\fontfamily{cmss}\selectfont Adversary}}),\cS,\left((\cA_i)^{N+1},\cA_S\right),P,(\cR,\cR_S,\cR_A),\gamma\rangle$ where 
% {\fontfamily{cmss}\selectfont Switcher}'s reward function is now $\cR_S(s,\boldsymbol{a},g):=-c\cdot \boldsymbol{1}_{\cN}(g)-\cR(s,\boldsymbol{a})$.
Adding the  {\fontfamily{cmss}\selectfont Switcher} with an objective distinct from the $N$ agents results in a non-cooperative MG. 
% $\cG=\langle (\cN,\text{{\fontfamily{cmss}\selectfont Switcher}, {\fontfamily{cmss}\selectfont Adversary}}),\cS,\left((\cA_i)_{i\in\cN},\cA_S,\cA\right),P,(\cR,\cR_S,\cR_A),\gamma\rangle$ where  $\cA_S:=\{0\}\cup\cN$
% and $\cR_S(s,\boldsymbol{a},g):=-c\cdot \boldsymbol{1}_{\cN}(g)-\cR(s,\boldsymbol{a})$ denote {\fontfamily{cmss}\selectfont Switcher} agent's action set and its reward function respectively and $\cR_A=-\cR$ denotes {\fontfamily{cmss}\selectfont Adversary}'s reward function. 
Having multiple learners with a payoff structure that is neither zero-sum nor a team game can occasion convergence issues \citep{shoham2008multiagent}. Moreover, unlike standard MARL, MARTA incorporates switching controls. Nevertheless, we prove MARTA converges under standard assumptions.

% \subsection{The parameter $c$} 
The parameter $c$ plays an important role in calibrating the fault-tolerance of the system of MARL agents. Larger values of $c$ incur higher costs for each malfunction making {\fontfamily{cmss}\selectfont Switcher} more selective about inflicting malfunctions i.e. limiting interventions to the states where the harm is greatest. This behaviour is consistent with the switching-cost objective and the equilibrium properties established in Theorem~\ref{thrm:minimax-exist}. In turn, the agents learn how to best-respond to only the most harmful malfunctions. When $c\to 0$, we return to a classic robust framework where {\fontfamily{cmss}\selectfont Switcher} can profitably choose to inflict malfunctions at all states, leading to highly cautious joint policies which may harm performance for a given higher fault-tolerance.  In Sec. \ref{sec:MARTA_budget}, as an alternative to using the cost $c$, we study a setup with a budget constraint on the number of malfunctions {\fontfamily{cmss}\selectfont Switcher} can inflict.

\textbf{Details on Architecture.}
% MARTA consists of four core components which we describe in detail:\newline
% \textbf{MARTA's components.} 
We now describe a concrete realisation of MARTA's core components which consist of $N$ MARL agents, a MARL agent {\fontfamily{cmss}\selectfont Adversary} and a switching control RL algorithm as {\fontfamily{cmss}\selectfont Switcher}. Each (MA)RL component can be replaced by various other (MA)RL algorithms. MARTA consists of N MARL agents, an {\fontfamily{cmss}\selectfont Adversary}, and a {\fontfamily{cmss}\selectfont Switcher}. Each agent maintains an action policy \(\pi_i\) and an adversarial policy \(\sigma_i\), both implemented using the same backbone (QMIX or VDN). The {\fontfamily{cmss}\selectfont Switcher} is trained using soft actor–critic with an action space of size $N+1$, corresponding to no malfunction or activating a malfunction for agent $i$. Training proceeds jointly using a shared replay buffer. Further details are in the Appendix.
% \textbf{The MARTA Algorithm}
% \textcolor{black}{DM: MAY NEED TO AUGMENT AGENT OBSERVATION TO INCLUDE BINARY VARIABLE THAT TELLS IT IF IT HAS BEEN CHOSEN BY THE ADVERSARY OR NOT (otherwise there may be a credit assignment problem)} 
% \color{black}
\section{Analysis of MARTA}\label{sec:convergence}

% The framework aims at enabling {\fontfamily{cmss}\selectfont Switcher} to
% learn an optimal reward function with which the $N$ agents then learn the optimal joint policy $\boldsymbol{\hat{\pi}}\in\boldsymbol{\Pi}$ for the task set by the environment. 
We establish the existence of a stable equilibrium in which each agent follows a policy that best responds to the system under worst-case malfunctions induced by the {\fontfamily{cmss}\selectfont Switcher}.\footnote{Since the {\fontfamily{cmss}\selectfont Adversary} uses a stochastic policy, the limiting variance case covers scenarios in which a malfunctioning agent executes random actions.} We further show that the MARTA learning algorithm converges to this equilibrium, which jointly maximises the {\fontfamily{cmss}\selectfont Switcher}’s value function and the agents’ collective objective.
All results in this section are proved in the Appendix and are derived under Assumptions~1–3 (Appendix~\ref{sec:notation_appendix}), which are standard in RL~\citep{bertsekas2012approximate}.

Although MARTA is formulated as an MG, its structure departs fundamentally from classical zero-sum formulations~\citep{littman1994markov}. The induced fault-switching game features: (i) state-dependent fault activation, (ii) persistent mode-switching dynamics, (iii) explicit malfunction budgets, and (iv) a switching-augmented Bellman operator that incorporates activation penalties. The contraction property established in Lemmas~\ref{lemma:bellman_contraction}–\ref{lemma:uniqueness} applies specifically to this operator and does not follow from standard MG results, yielding convergence guarantees not captured by existing adversarial learning frameworks.

The theoretical analysis considers an idealised formulation of MARTA under standard assumptions from stochastic approximation and MG theory, including tabular or linear function approximation. 
% Under these conditions, we prove the existence of a Markov perfect equilibrium and convergence of the induced learning dynamics for the switching control mechanism.
While these guarantees do not extend directly to deep neural network parameterisations, they formally characterise the optimisation problem solved by MARTA and justify the design of the {\fontfamily{cmss}\selectfont Switcher} as a selective fault-induction mechanism. The empirical results in Section~\ref{sec:experiments} evaluate the effectiveness of this mechanism when instantiated with modern deep MARL algorithms.

The learning problem induced by MARTA can be viewed as a nonzero-sum MG $\mathcal{G}$ involving three components: the $N$ cooperative agents, the {\fontfamily{cmss}\selectfont Switcher}, and the {\fontfamily{cmss}\selectfont Adversary}. 
% For a fixed switching policy, the {\fontfamily{cmss}\selectfont Adversary} selects faulty actions to minimise the agents’ return, while the agents jointly optimise a shared team reward. The {\fontfamily{cmss}\selectfont Switcher} selects when and which ?agent should malfunction, subject to switching costs or budget constraints. 
In the analysis that follows, the {\fontfamily{cmss}\selectfont Adversary}’s best response is incorporated into the {\fontfamily{cmss}\selectfont Switcher}’s objective, yielding an equivalent reduced game between the agents and the {\fontfamily{cmss}\selectfont Switcher}.

We first observe that an optimal robust policy is one in which the $N$ agents play a best-response joint policy against {\fontfamily{cmss}\selectfont Switcher} who in turn executes a best-response policy. The following result  establishes the existence and uniqueness of such a solution in which each agent enacts a best-response i.e. responds optimally to actions of other agents in $\cG$. Throughout the analysis, we assume that the Adversary plays a best response to the current policies of the agents and the Switcher. For notational clarity, explicit dependence on the Adversary policy is suppressed.
\begin{theorem}\label{thrm:minimax-exist}
The minimax value of $\cG$ exists and is unique, i.e. there exists a function $v^\ast:\cS\to\mathbb{R}$ such that $
v^\ast(s):=\min\limits_{\hat{\mathfrak{g}}}\max\limits_{\hat{\boldsymbol{\pi}}\in \bm\Pi}
v(s|\boldsymbol{\hat{\pi}},\hat{\mathfrak{g}})=\max\limits_{\hat{\boldsymbol{\pi}}\in \bm\Pi}
\min\limits_{\hat{\mathfrak{g}}}v(s|\boldsymbol{\hat{\pi}},\hat{\mathfrak{g}}), \quad \forall s\in\cS$.
\end{theorem}
\begin{sproof}
The minimax equality and existence of $v^\ast$ follow from the contraction property of the induced Bellman operator. Lemma~7 establishes monotonicity and Lemma~8 proves strict contraction under Assumptions~1--3 (Appendix~D.1), implying a unique fixed point by the Banach fixed-point theorem. This fixed point coincides with the unique minimax value of the game.\end{sproof}
% An important consequence of Theorem \ref{thrm:minimax-exist} is that at the stable point, each of the $N$ agent best-respond to the influence of {\fontfamily{cmss}\selectfont Switcher} so that the agents jointly enact a best-response FT policy. This is captured formally in the following result:
% \begin{definition}
% A policy $(\pi^i,\pi^{-i})=\boldsymbol{\pi}\in\boldsymbol{\Pi}$ is a Markov perfect equilibrium (MPE) if no agent can improve the expected return by changing their current policy. Formally, for all $i \in \cN$ and all $\pi^i \in \Pi^i$, we have
% $
%     v(s|\boldsymbol{\pi},\mathfrak{g}) - v(s|(\pi'^i,\pi^{-i}),\mathfrak{g}) \le 0.
% $
% \end{definition}

\begin{proposition}\label{prop:eq-optimal}
Let $\boldsymbol{\hat{\pi}}\in \boldsymbol{\Pi}$ be an equilibrium policy as defined in Theorem \ref{thrm:minimax-exist}, then $\boldsymbol{\hat{\pi}}$ is a Markov perfect equilibrium\footnote{$(\pi^i,\pi^{-i})=\boldsymbol{\pi}\in\boldsymbol{\Pi}$ is a Markov perfect equilibrium if no agent can improve the expected return by changing their policy i.e. $\forall i \in \cN, \forall \pi'^i \in \Pi^i$, we have
$
    v(s|\boldsymbol{\pi},\mathfrak{g}) - v(s|(\pi'^i,\pi^{-i}),\mathfrak{g}) \le 0.
$
} policy, that is to say, no agent can improve the team reward by responding to the malfunctions executed by {\fontfamily{cmss}\selectfont Adversary} with a change in their policy.    
\end{proposition}
This follows from the fact that all players optimise stationary objectives under Markovian dynamics, so no history-dependent deviation can improve the equilibrium outcome.

Having established the existence of a solution to the game, we now turn to the question of how a set of MARL agents can learn such a solution. Theorem \ref{theorem:joint-sol} (see Sec.~\ref{sec:proofs_appendix} in Appendix) proves the convergence of a dynamic programming procedure to the solution. It therefore lays the foundation for our MARL algorithm, MARTA which learns robust MARL policies in this setting.   In particular, the following theorem proves the convergence of a MARTA to the solution $v^\ast$ by repeated application of a Bellman operator.  A direct consequence of the equilibrium structure and the switching-cost objective is that optimal switching policies activate malfunctions only when the expected degradation in performance outweighs the associated cost.

\begin{theorem}\label{theorem:q_learning}
Consider the following Q-learning variant:
\begin{align*} \nonumber
&\bmQ_{t+1}(s_t,\boldsymbol{a}_t,g)=\bmQ_{t}(s_t,\boldsymbol{a}_t,g)\\\hspace{-.9cm}&\begin{aligned}+\alpha_t(s_t,\boldsymbol{a}_t)\big[\min\big\{\boldsymbol{\hat{\bm\cM}}\bmQ_{t}(s_t,\boldsymbol{a}_t,g), \cR_S(s,\boldsymbol{a},g)&
\\+\gamma\underset{\boldsymbol{a'}\in\boldsymbol{\mathcal{A}}}{\max}\;\bmQ_{t}(s_{t+1},\boldsymbol{a'},g)\big\}-\bmQ_{t}(s_t,\boldsymbol{a}_t,g)\big]&,
\end{aligned}\label{q_learning_update}
\end{align*}
then $\bmQ_{t}$ converges to $\bmQ^\star$ with probability $1$, where $s_t,s_{t+1}\in\cS$ and $\boldsymbol{a}_t\in\boldsymbol{\cA}$.
\end{theorem}
% \section*{Proof of Theorem \ref{theorem:q_learning}}

Theorem \ref{theorem:q_learning} proves the convergence of MARTA to the solution to the game $\cG$ in which each agent optimally responds to the policies of other agents in the system. 
% 
% 
% 
% \section{Convergence of MARTA with Linear Function Approximators}

In Prop.~\ref{prop:switching_times}, we fully characterise the points where {\fontfamily{cmss}\selectfont Switcher} should activate {\fontfamily{cmss}\selectfont Adversary} and which agent in terms of an obstacle condition which can be determined at each state.
% 
% \subsection{MARTA via Q-Learning}
% The following theorem shows that the system in which both the system of $N$ agents and {\fontfamily{cmss}\selectfont Switcher} agent train concurrently, the MARTA framework converges to the solution. 
We now describe a Q-learning procedure that computes the value of the game. The update extends standard Q-learning by incorporating the {\fontfamily{cmss}\selectfont Switcher}’s action into the state–action value function, resulting in a joint Bellman operator that captures both agent actions and switching decisions.

Function approximators enable parameterisation of key estimators in RL. We extend the convergence results to linear function approximation, a standard convex setting and a foundation for more expressive parameterisations.

\begin{theorem}\label{primal_convergence_theorem}
MARTA converges to the stable point of  $\mathcal{G}$, 
moreover, given a set of linearly independent basis functions $\Phi=\{\phi_1,\ldots,\phi_p\}$ with $\phi_k\in L_2,\forall k$. Define a \textit{projection} $\Pi$ on a function $\Lambda$ by: $\Pi \Lambda:=\underset{\bar{\Lambda}\in\{\Psi r|r\in\mathbb{R}^p\}}{\arg\min}\left\|\bar{\Lambda}-\Lambda\right\|$. Then MARTA converges to a limit point $r^\star\in\mathbb{R}^p$ which is the unique solution to  $\Pi \mathfrak{F} (\Phi r^\star)=\Phi r^\star$ where for any $\Lambda\in L_2$,
    $\mathfrak{F}\Lambda:=\cR+\gamma P \min\{\hat{\bm\cM}\Lambda,\Lambda\}$ and $r^\star$ satisfies the following:
$
    \left\|\Phi r^\star - \bmQ^\star\right\|\leq (1-\gamma^2)^{-1/2}\left\|\Pi \bmQ^\star-\bmQ^\star\right\|
$.
\end{theorem}
Theorem~\ref{primal_convergence_theorem} shows that, under linear function approximation, the projected Bellman operator associated with the switching control mechanism remains a contraction. The projection step ensures stability of the update within the feature space, while the contraction guarantees convergence to a unique fixed point in parameter space.

\section{MARTA with a Malfunction Budget} \label{sec:MARTA_budget}

% So far we have considered the case in which the aim of the {\fontfamily{cmss}\selectfont Adversary} inflicts harm by inducing malfunctions at the worst possible scenarios during training to induce a joint policy that is robust to critical malfunctions. 
The fault-activation cost plays a critical role in ensuring that the {\fontfamily{cmss}\selectfont Adversary} induces malfunctions which cause a significant reduction in the system performance.  
% As explained in Sec. \ref{sec:MARTA}, although the intuition behind the role that $c$ plays is clear, in various applications, the choice of $c$ is not obvious. 
Despite the intuitive interpretation, in some applications, the choice of $c$ is not obvious. 
We now consider an alternative view and introduce a variant of MARTA, namely MARTA-B that imposes a budgetary constraint of the number of malfunctions that can be induced by {\fontfamily{cmss}\selectfont Adversary} during training.  We show that by tracking its remaining budget, MARTA-B is able to learn a {\fontfamily{cmss}\selectfont Switcher} policy that makes optimal usage of its budget while respecting the budget constraint almost surely.  
The {\fontfamily{cmss}\selectfont Switcher} now has the following constrained problem:
% \begin{subequations}
\begin{align*}
        \min\limits_{\hat{\mathfrak{g}}}\max\limits_{\hat{\boldsymbol{\pi}}\in \bm\Pi}~& v_S(s|\hat{\boldsymbol{\pi}},\hat{\mathfrak{g}})\;\; 
        \text{s. t. } n - \sum_{k<\infty }\sum_{t_k\geq 0}\boldsymbol{1}(\mathfrak{g}(\cdot|s_{t_k}))\geq 0,
        % \\ \forall s\in\mathcal{S}&,   
        % \label{capped_problem}
\end{align*}
% \begin{align}\label{prob:average_constrained_mdp}
%         \max\limits_{\pi\in\Pi,g}~& v^{\boldsymbol{\pi},\mathfrak{g}}_G(s) 
%         \label{prob:original_objective}\\
%      \text{s. t. }\qquad & n - \sum^\infty_{t=0}\sum_{k\geq 1} \delta^t_{\tau_k}\geq 0,   \label{prob:constraint}
% \end{align}\label{prob:average_constrained_mdp}
% \end{subequations}
where $n\geq 0$ is a fixed value that represents the budget for the number of malfunctions and the index $k=1,\ldots$ represents the training episode count.  As in \citep{sootla2022saute,mguni2022timing}, we introduce a new variable $y_t$ that tracks the remaining number of activations: $y_t := n - \sum_{t\geq 0}\boldsymbol{1}(\mathfrak{g}(s_t))$ where the variable
$y_t$ is now treated as the new state variable which is a component in an augmented state space $\cX:=\cS\times\mathbb{N}$. In Theorem~\ref{thm:optimal_policy_budget}, we prove the convergence of MARTA-B i.e., MARTA with a fixed {\fontfamily{cmss}\selectfont Switcher} budget to the optimal joint value function.

\section{Experiments}\label{sec:experiments}
\label{sec:setup}

\color{red}

\color{black}

% We evaluate MARTA on three benchmark MARL environments with varying coordination requirements and failure sensitivity. We selected these environments to ensure diversity in (1) action space (discrete vs. continuous), (2) coordination style (tight vs. loose coupling), and (3) fault criticality (localised vs. global dependencies).

% \textcolor{black}{In the following empirical analyses, we study different scenarios for a malfunctioning agent. In one case we consider a scenario  in which malfunctioning agents execute random actions, capturing cases when an agent experiences a sensory failure and cannot choose its actions according to observations of the environment. The second case is a worst-case action scenario whereby the malfunctioning agent executes the worst actions with respect to the expected estimated return. }

We conduct a series of experiments to verify: 
\textbf{(1) Plug-and-play effectiveness.} MARTA can be attached to standard MARL backbones without architectural changes and consistently improves robustness under injected malfunctions; 
\textbf{(2) Mechanism validity and calibration.} the learned \emph{{\fontfamily{cmss}\selectfont Switcher}} meaningfully decides \emph{when} and \emph{whom} to intervene on.
% 
% \paragraph{Baselines.}
% We use two standard value-decomposition backbones with different inductive biases: \textbf{VDN}~\citep{sunehag2017value} and \textbf{QMIX}~\citep{rashid2018qmix}, widely adopted methods for fully cooperative MARL settings. 
% \emph{QMIX} replaces the fixed sum with a state-conditioned mixing network that is constrained to be \emph{monotonic} in each agent’s value, less restrictive than VDN and able to represent richer coordination while preserving decentralized execution. 
% We choose this pair because it spans two canonical, widely used ends of the value–decomposition spectrum: strictly \emph{linear} aggregation in VDN vs.\ state–conditioned \emph{monotonic} mixing in QMIX. Demonstrating consistent gains on both stress, which shows the broad effectiveness of the MARTA architecture.
% For each backbone we create a matched variant by plugging in the {\fontfamily{cmss}\selectfont Switcher} \emph{without} changing the underlying policy/value networks.

\textbf{Environments}

% yaqi: add: refer
\textbf{(1) Traffic Junction (TJ)}~\cite{magym}. A discrete grid-world in which agents must navigate through intersections without collisions. This environment emphasises high-density spatial conflict resolution and tests the agents' ability to handle localised, time-critical disruptions.

\textbf{(2) Level-Based Foraging (LBF)}~\cite{christianos2020shared}. Agents control units of a certain level, food items of varying levels are scattered across the map, and each agent aims to collect as much as possible. An agent can only collect food if the cumulative level of the adjacent agents performing the \lq collect\rq action is greater than or equal to the food's level. It captures the spectrum between independent behaviour and tightly coupled cooperation under partial observability.

\textbf{(3) Multi-Agent Particle Environment (MPE): SimpleTag}~\cite{lowe2017multi}. A continuous control domain where multiple pursuers must coordinate to capture the evader. It emphasises coordination under non-stationarity. Individual agents not coordinating can severely reduce rewards.

\textbf{(4) StarCraft Multi-Agent Challenge v2 (SMACv2).}
To evaluate MARTA at larger scale and under more challenging partial observability, we additionally benchmark on SMACv2. We evaluate MARTA using QMIX as the base learner and report win rate and episode return. Faults are injected using the same mechanisms as in earlier experiments, including Uniform and Worst-case fault policies, fixed and resampled faulty agents, and aligned versus shifted train-test fault distributions. We additionally report fault-conditioned win rate, defined as the win rate conditioned on at least one malfunction occurring during the episode. This setting tests whether MARTA improves robustness to agent malfunctions without degrading nominal task performance.

\textbf{Experiment Configurations.}
To make the robustness setting transparent, we factor our design into a few orthogonal axes. Each run is determined by: 
\emph{who} malfunctions (agent selection), \emph{how} the malfunction behaves (fault policy), the per-step trigger rate \(p\), the switch cost \(c\) that calibrates interventions, and whether the train–test fault processes are aligned or shifted. \textcolor{black}{Table~\ref{tab:design-factors} summarises the factors and levels.} At each timestep $t$, with probability $p$, one agent is selected to malfunction and execute a corrupted policy $\sigma^i$ at $t$. 

To facilitate analysis, we categorise our experiments into three difficulty bands. 
\textbf{Easy (Level~1)} settings correspond to aligned train–test distributions where a single fixed agent may malfunction with probability $p$. 
These include our core plug-and-play evaluations in \textsc{TJ}, \textsc{LBF} (5$\times$5–4p–1f), and \textsc{MPE} (SimpleTag). 
\textbf{Medium (Level~2)} settings also use aligned distributions but at each step the identity of the faultable agent is resampled, so the fault location varies across time while only one (or less) agent may fail per step. 
% This band includes \textsc{TJ} with varying single-agent faults, \textsc{LBF} with dynamic agent selection, and \textsc{MPE} with per-step varying single-agent faults. 
\textbf{Hard (Level~3)} settings involve train–test shifts, either in the fault distribution, malfunction probability, or fault policy. 
Each agent may malfunction, and the training and testing phases will have different malfunction distributions.
Together these bands provide a structured view of robustness, ranging from predictable aligned malfunctions to adversarial distribution shifts with multiple concurrent faults.
\begin{figure*}[t]
  \centering
  \captionsetup[subfigure]{aboveskip=0pt,belowskip=1pt}
  \setlength{\abovecaptionskip}{2pt}
  \setlength{\belowcaptionskip}{0pt}

  % ---------------- Row 1 (3 figs) ----------------
  \begin{subfigure}[b]{0.3\textwidth}
    \centering
    \includegraphics[width=\linewidth, height=0.155\textheight]{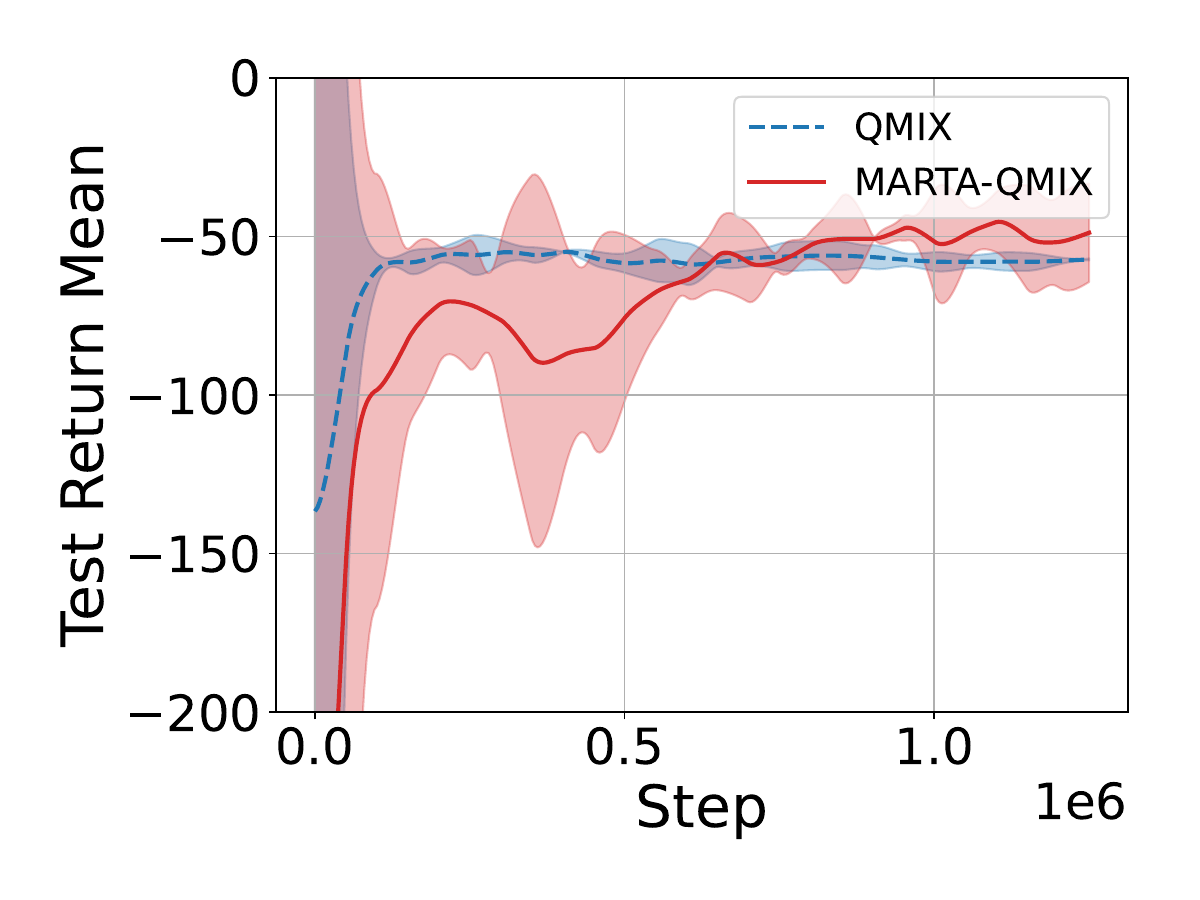}
    \caption{Traffic Junction}\label{fig:main_results_traffic}
  \end{subfigure}\hspace{0.02\textwidth}
  \begin{subfigure}[b]{0.3\textwidth}
    \centering
    \includegraphics[width=\linewidth, height=0.15\textheight]{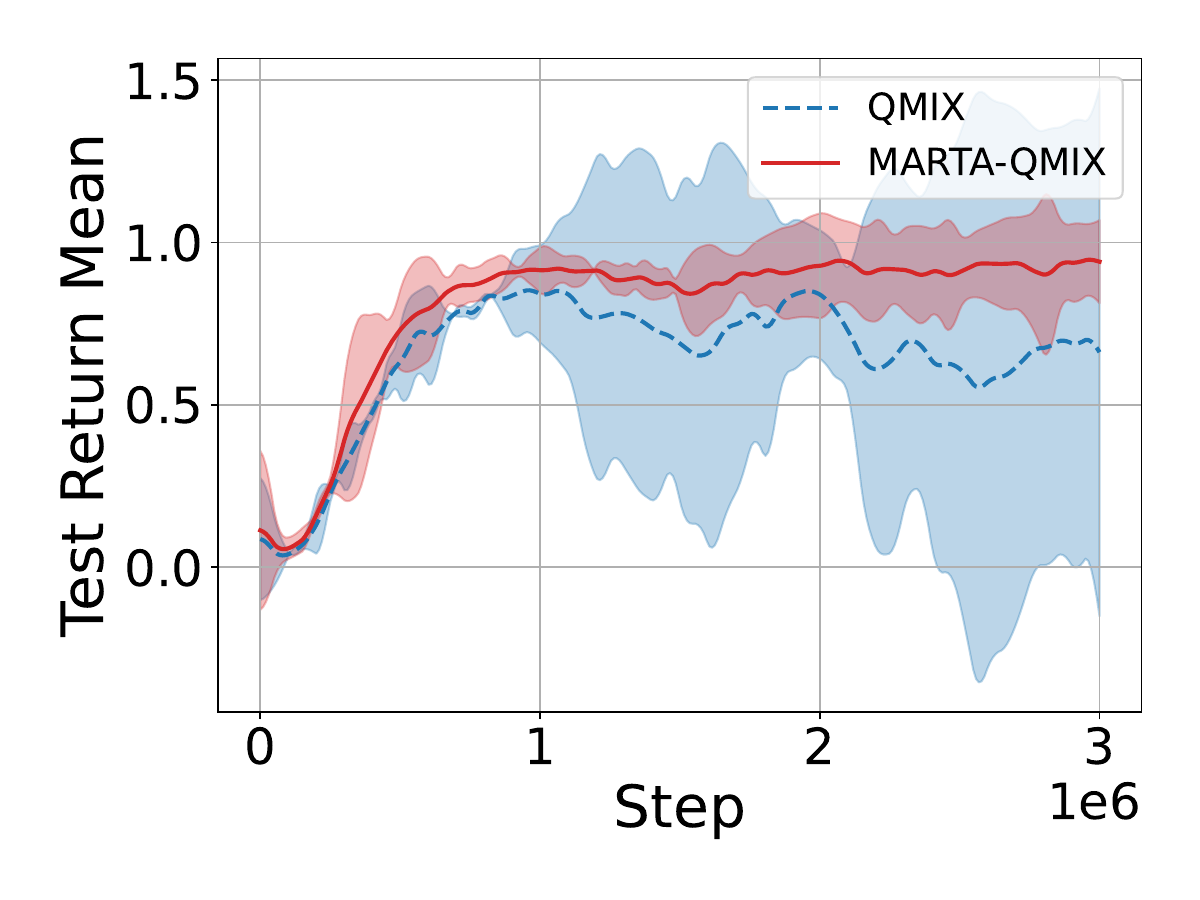}
    \caption{LBF-5x5-4p-1f}\label{fig:main_results_lbf}
  \end{subfigure}\hspace{0.02\textwidth}
  \begin{subfigure}[b]{0.3\textwidth}
    \centering
    \includegraphics[width=\linewidth, height=0.15\textheight]{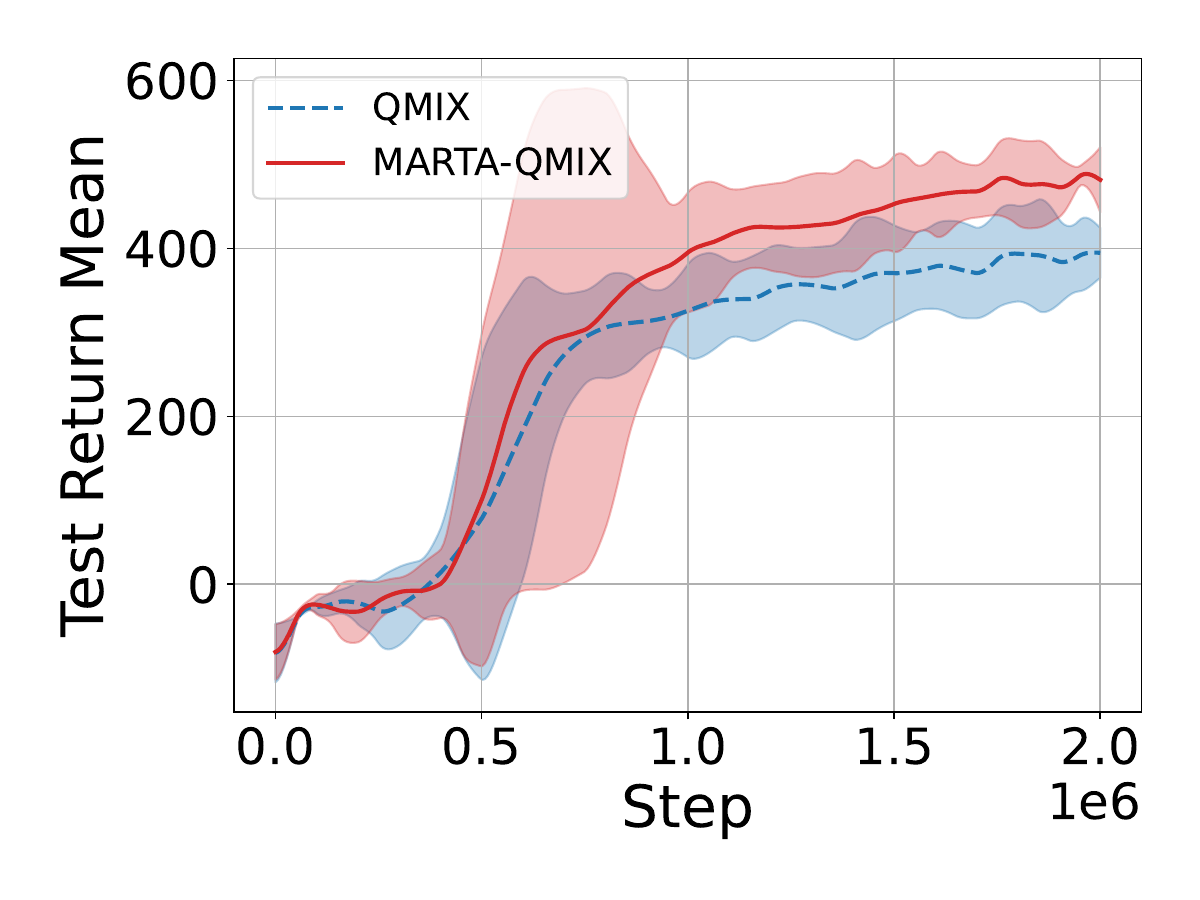}
    \caption{MPE-SimpleTag}\label{fig:main_results_mpe}
  \end{subfigure}

  \vspace{-0.2em}

  % ---------------- Row 2 (3 figs) ----------------
  \begin{subfigure}[b]{0.3\textwidth}
    \centering
    \includegraphics[width=\linewidth, height=0.15\textheight]{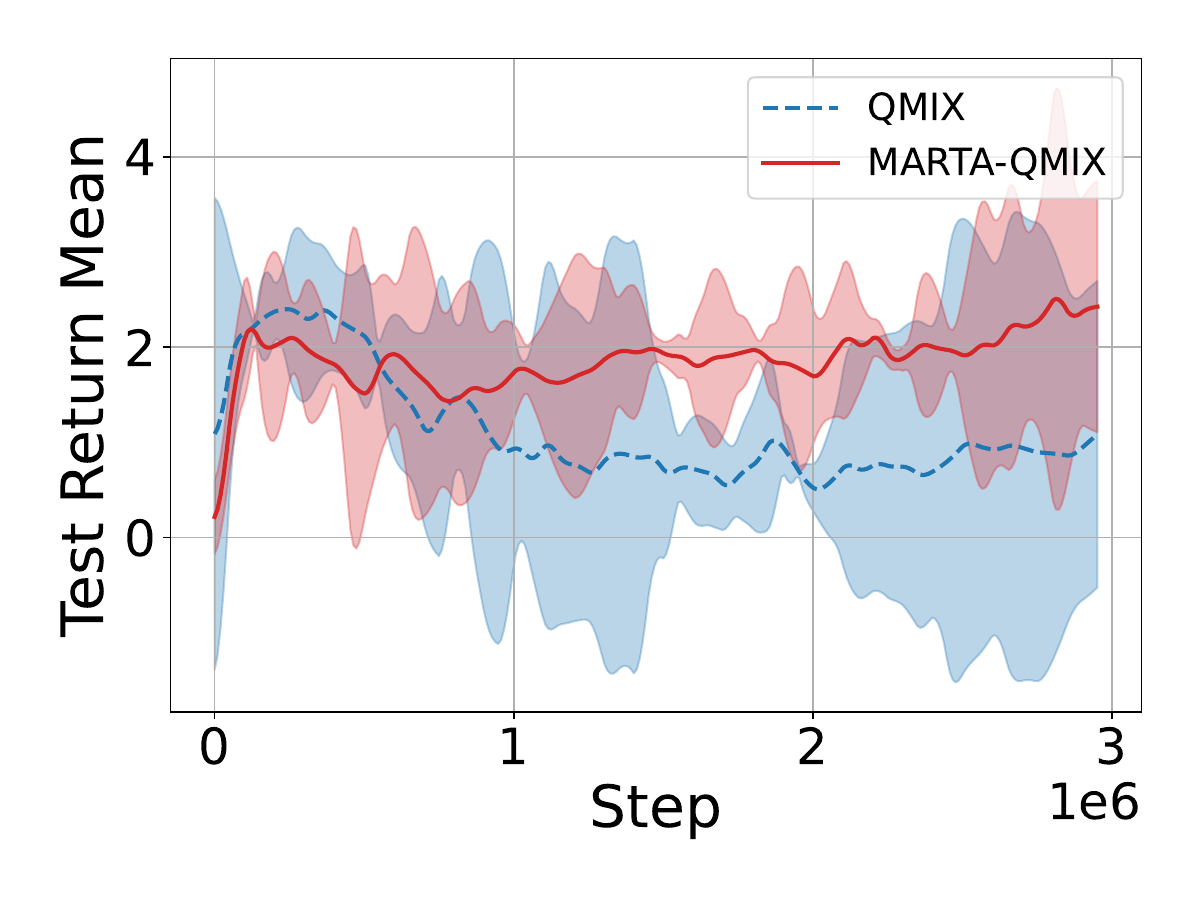}
    \caption{SMAC-3m}\label{fig:main_results_smac_3m_qmix}
  \end{subfigure}\hspace{0.02\textwidth}
  \begin{subfigure}[b]{0.3\textwidth}
    \centering
    \includegraphics[width=\linewidth, height=0.15\textheight]{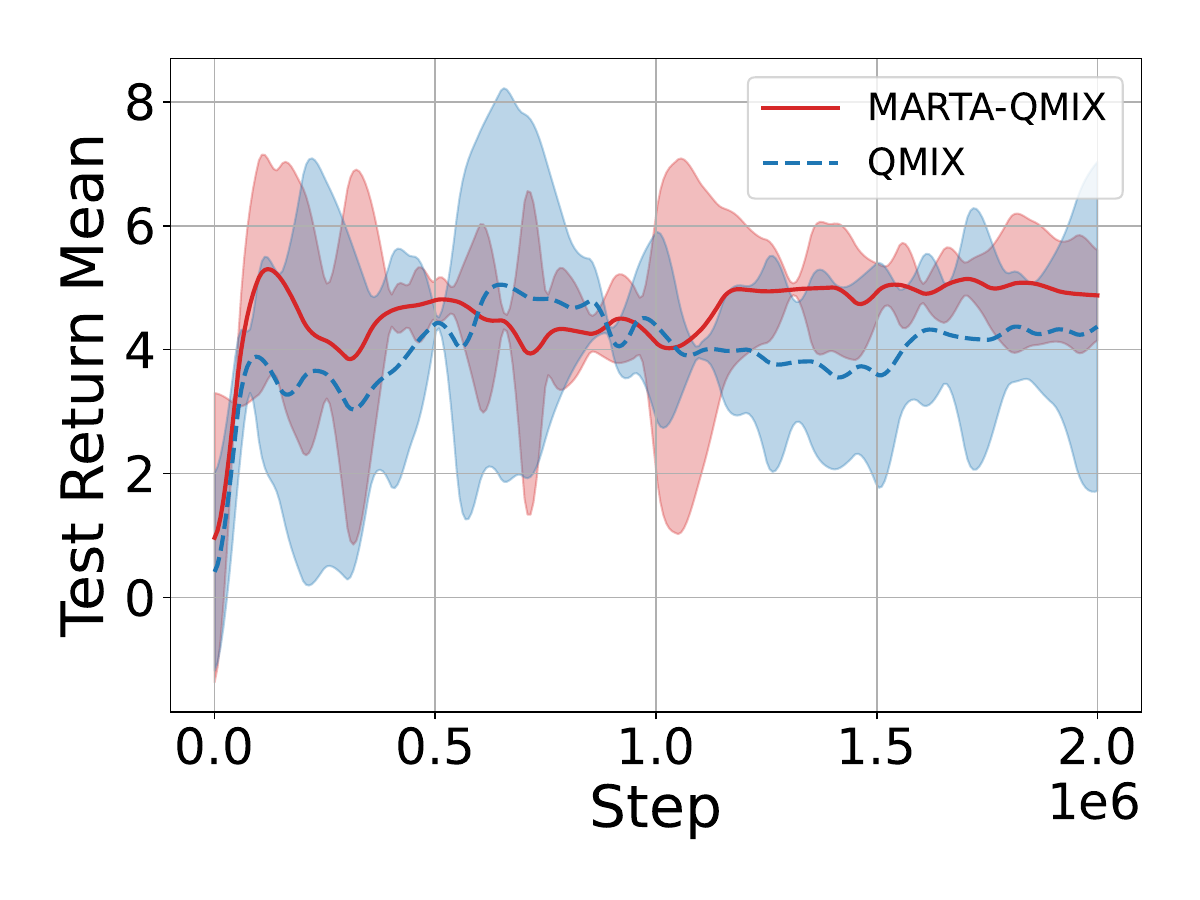}
    \caption{SMAC-8m}\label{fig:main_results_smac_8m_qmix}
  \end{subfigure}\hspace{0.02\textwidth}
  \begin{subfigure}[b]{0.3\textwidth}
    \centering
    \includegraphics[width=\linewidth, height=0.15\textheight]{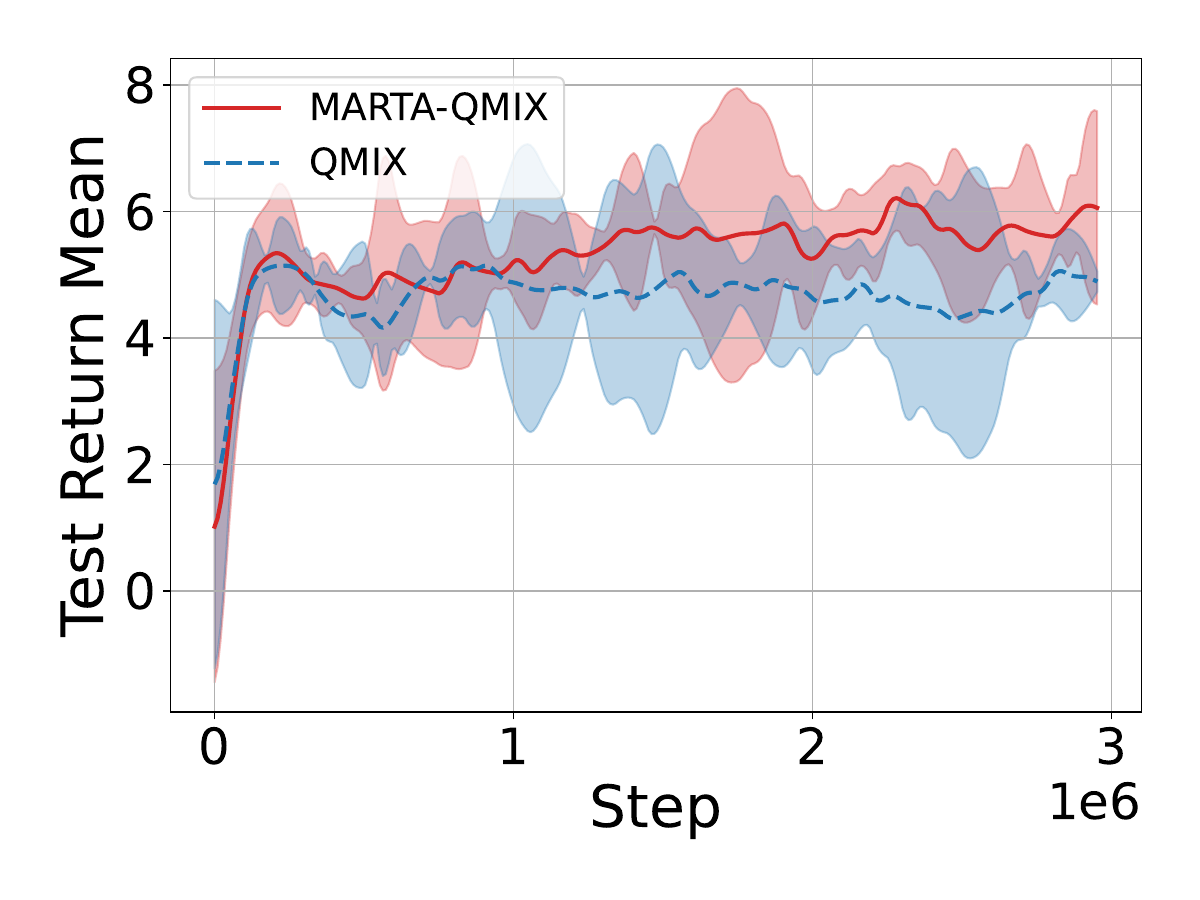}
    \caption{SMAC-2s3z}\label{fig:main_results_smac_2s3z_qmix}
  \end{subfigure}

  \caption{\textbf{Robustness against malfunctions.} Each plot compares a base MARL algorithm with and without MARTA in faulty agent settings. In all scenarios, MARTA improves robustness.}\vspace{-.4cm}
  \label{fig:main_results}
\end{figure*}

% Unless otherwise specified, we adopt the \textbf{Case 1} Test-Train setting.
% , where the malfunction distribution and adversary behaviour remain consistent between training and evaluation.
% In MARTA, the decision to inject a malfunction is made by a learnable {\fontfamily{cmss}\selectfont Switcher} agent, regularized by an activation cost $c$. 
% Unless otherwise stated, we use $c = 0$. 
% In the baseline, malfunctions are introduced uniformly or adversarially without switching control.
% \newline
% \noindent\textbf{Evaluation Protocol.}

\textbf{Evaluation Protocol.} Every $10,000$ steps, agents are evaluated for more than $100$ episodes using the same configuration of the malfunction. The best-performing checkpoint is reported based on average return. Unless otherwise specified, all hyperparameters and architecture configurations are identical across MARTA and its baseline counterpart. In all plots \textbf{dark lines represent averages over 3 seeds and the shaded regions represent $95$\% confidence intervals}.
We summarise where key experimental details are located:
\emph{Environment configurations, agent counts and fault parameters} are in Table~\ref{tab:main_exp} in the Appendix.
\emph{Train--test fault protocols (aligned, shifted and resampled)} are summarised in Table~\ref{tab:design-factors}. Agent counts are: TJ (4--6 agents), LBF (4 agents), and MPE SimpleTag (4 agents). \emph{Detailed descriptions of how faults are generated and controlled} are in Table~\ref{tab:design-factors}.

\subsection*{Can MARTA improve MARL algorithm performance
 and reduce failure modes?}\label{sec:marta_performance}
To validate our claim that MARTA enhances robustness and improves performance, we evaluate MARTA on TJ, LBF (5x5-4p-1f), and MPE (SimpleTag) under injected malfunctions. In each environment, we compare a base MARL algorithm with its MARTA-augmented counterpart. As shown in Fig.~\ref{fig:main_results}, MARTA-QMIX achieves  final return gains of \textbf{11.1\%} in TJ, \textbf{21.4\%} in MPE, \textbf{44.6\%} in LBF and \textbf{114.9\%} and \textbf{9.3\%} in SMACv2, 3m and 8m respectively. Complete results are reported in Table~\ref{tab:final_and_auc} in Sec.~\ref{sec:appendix_further_analysis}. Robustness is particularly difficult in smaller-agent settings, where the failure of a single agent has a disproportionate impact on coordination. This effect is evident in TJ and LBF, where limited redundancy magnifies the influence of faults. Despite this, MARTA delivers consistent improvements, including in MPE, where continuous control and adversarial dynamics further increase task complexity. Hence, MARTA enhances fault tolerance across a diverse range of tasks without changes to the base learner’s architecture.
In addition, \textbf{MARTA enables the baseline algorithm to avoid the dangers} of agent collisions in the TJ environment. 
In TJ we define the failure rate as the episode-level collision rate: $
\mathrm{TJ\_collision\_rate}
= \frac{1}{M}\sum_{e=1}^{M}\mathbf{1}\{\text{episode $e$ contains any collision}\}$ where $M$ is the total number of episodes.
We plotted the failure rates of each algorithm in Fig.~\ref{fig:faliure_rate}, which shows that MARTA significantly reduces failure rates. 
\textbf{The importance of Switching Controls.} \label{sec:ablation_switching}
A key component of MARTA is the switching control mechanism. This enables the \textsf{{\fontfamily{cmss}\selectfont Switcher}} to choose to activate malfunctions at high-risk states where learning robust behaviour is critical. To evaluate the impact of switching controls, we compared MARTA with a variant of MARTA in which the \textsf{{\fontfamily{cmss}\selectfont Switcher}}'s policy is replaced with a Bernoulli random trigger: for some $p \in (0,1]$, at each state, a malfunction for each agent occurs with probability $p/|\mathcal{N}|$, and with probability $1-p$ no malfunction is activated.  Fig.~\ref{fig:random_switcher} shows incorporating the ability to learn an optimal switching control yields much better performance compared to activating malfunctions randomly (``random policy'').

\textbf{Experiment parameters.} In Sec.~\ref{sec:ablation_appendix}, we also perform ablations on the switch cost 
$c$ which reveals that adjusting this parameter increases the system’s preference for FT in the fault–performance trade-off as $c \downarrow 0$. We further ablate the malfunction probability $p$ and the agent count $N$, demonstrating the robustness of MARTA to these parameters.

\textbf{MARTA is a plug-and-play enhancement
framework.}
To validate our claim that MARTA easily adopts MARL algorithms, we tested the improvements MARTA delivers with an alternative base MARL algorithm. Specifically, we replaced QMIX in MARTA with VDN~\citep{sunehag2017value}. Fig.~\ref{fig:plug_and_play} shows learning curves. In the TJ environment, MARTA-VDN yields a \textbf{37.8\%} gain over its VDN baseline and in SMACv2 3m, MARTA-VDN yields a \textbf{116.7\%} gain over its VDN baseline. In all maps, MARTA-VDN significantly outperforms the base VDN algorithm.
% 
% 
% \begin{figure}[htbp]
%     % \vspace{-0.4cm}
%     \centering
%     \begin{subfigure}[t]{0.48\columnwidth}
%         \centering
%         \includegraphics[width=\textwidth]{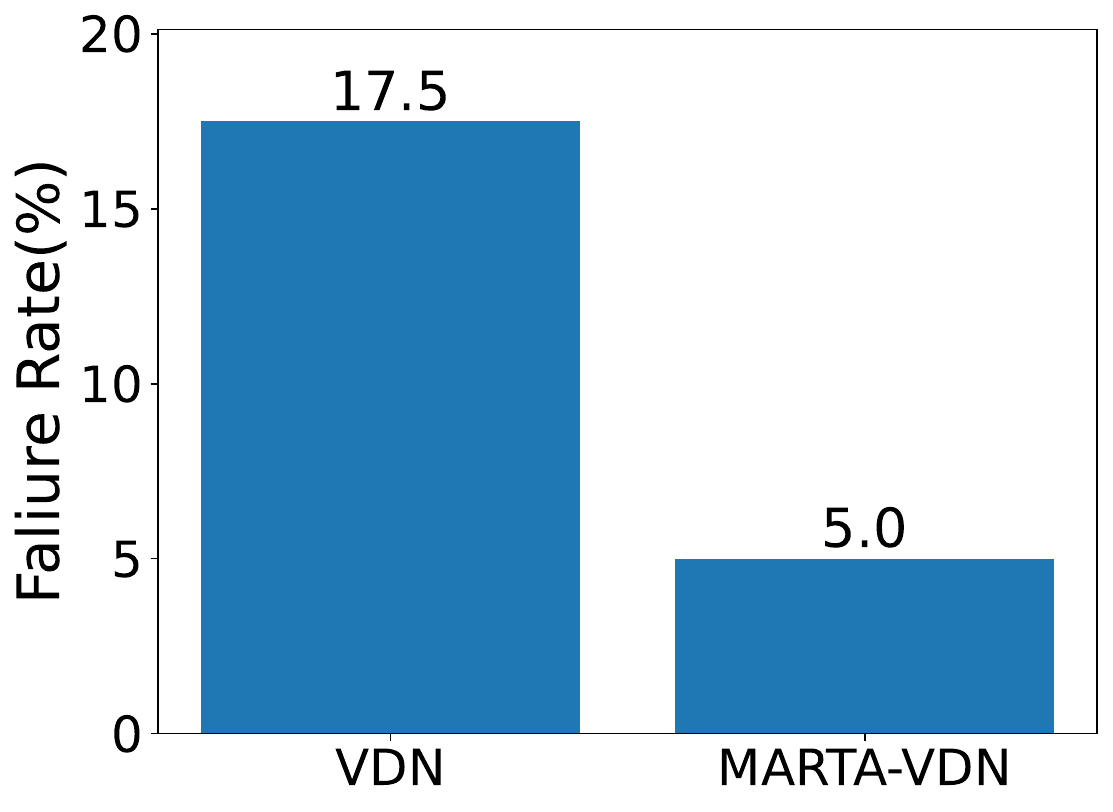}
%         \caption{Failure rates}
%         \label{fig:faliure_rate}
%     \end{subfigure}
%     \hfill
%     \begin{subfigure}[t]{0.48\columnwidth}
%         \centering
%         \includegraphics[width=\textwidth]{plots/Random{\fontfamily{cmss}\selectfont Switcher}_test_return_mean.pdf}
%         \caption{Effect of switcher policy}
%         \label{fig:random_switcher}
%     \end{subfigure}
%     % \vspace{-0.2cm}
%     \caption{(a) Failure rates (the collision rate per episode) of each algorithm in MPE. (b) MARTA's switching control component produces much better compared to randomly activating.}
%     \label{fig:switcher_ablation}
%     % \vspace{-0.4cm}
% \end{figure}
% 需要 subcaption 宏包
% \usepackage{subcaption}
% 
\begin{figure}[t]
    \centering
    \captionsetup{aboveskip=2pt,belowskip=2pt}
    \captionsetup[subfigure]{aboveskip=1pt,belowskip=1pt}

    \begin{subfigure}[t]{\linewidth}
        \centering
        \includegraphics[width=.65\linewidth,trim=0 8 0 6,clip, height=2cm]{plots/Faliure_Rate.pdf}
        \caption{Failure rates}
        \label{fig:faliure_rate}
    \end{subfigure}

    \vspace{2pt}

    \begin{subfigure}[t]{\linewidth}
        \centering
        \includegraphics[width=.7\linewidth,trim=0 8 0 6,clip, height=2.5cm]{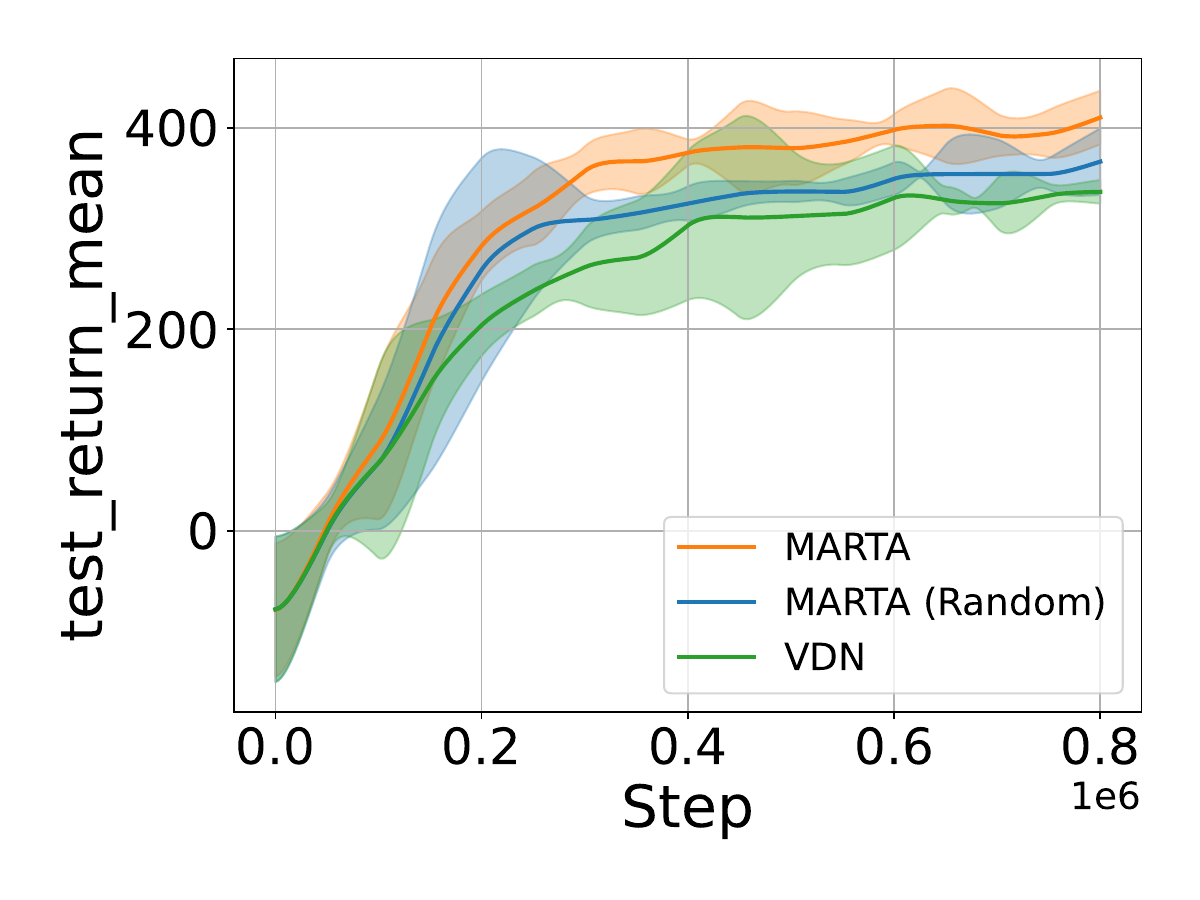}
        \caption{Switcher policy ablation}
        \label{fig:random_switcher}
    \end{subfigure}

    \caption{Ablation results in MPE.}
    \label{fig:failure_and_switcher_ablation}
    \vspace{-.8cm}
\end{figure}
% 
    % 
% 
% 
% 
% \begin{figure*}[t]
%   \centering
%   \captionsetup[subfigure]{aboveskip=0pt,belowskip=1pt}
%   \setlength{\abovecaptionskip}{2pt}
%   \setlength{\belowcaptionskip}{0pt}

%   \begin{subfigure}[b]{0.19\textwidth}
%     \centering
%     \includegraphics[width=\linewidth, height=0.155\textheight]{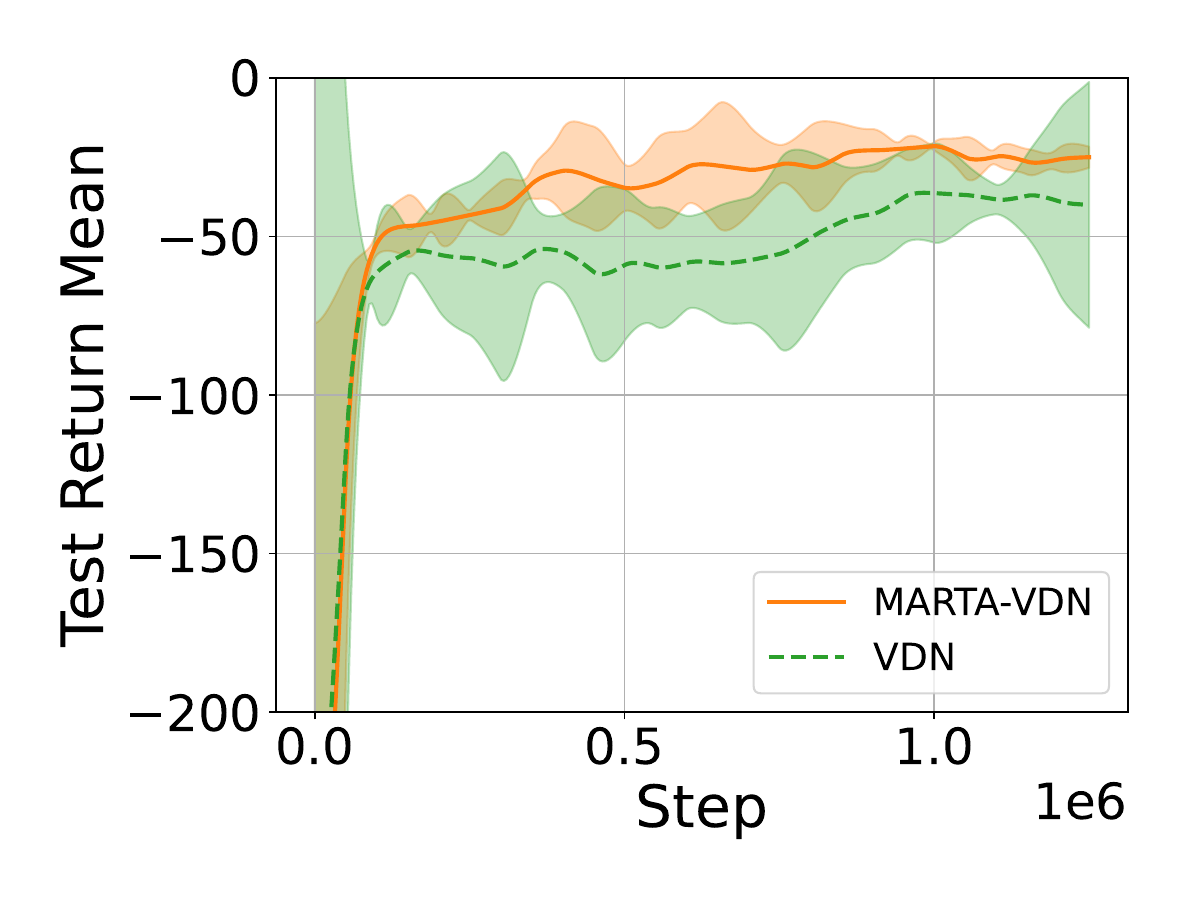}
%     \caption{Traffic Junction}\label{fig:main_results_traffic_vdn}
%   \end{subfigure}\hspace{0.006\textwidth}
%   \begin{subfigure}[b]{0.19\textwidth}
%     \centering
%     \includegraphics[width=\linewidth, height=0.15\textheight]{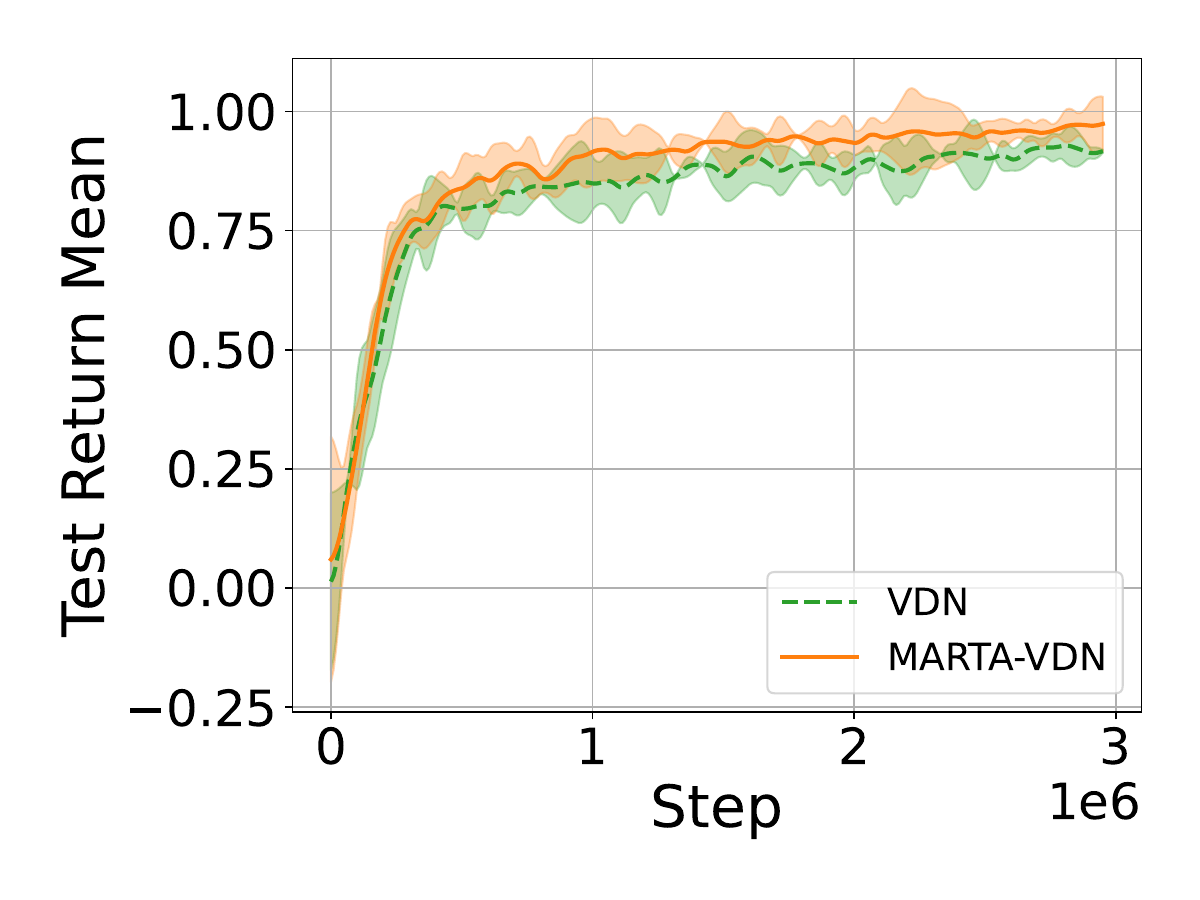}
%     \caption{LBF-5x5-4p-1f}\label{fig:main_results_lbf_vdn}
%   \end{subfigure}\hspace{0.006\textwidth}
%   \begin{subfigure}[b]{0.19\textwidth}
%     \centering
%     \includegraphics[width=\linewidth, height=0.15\textheight]{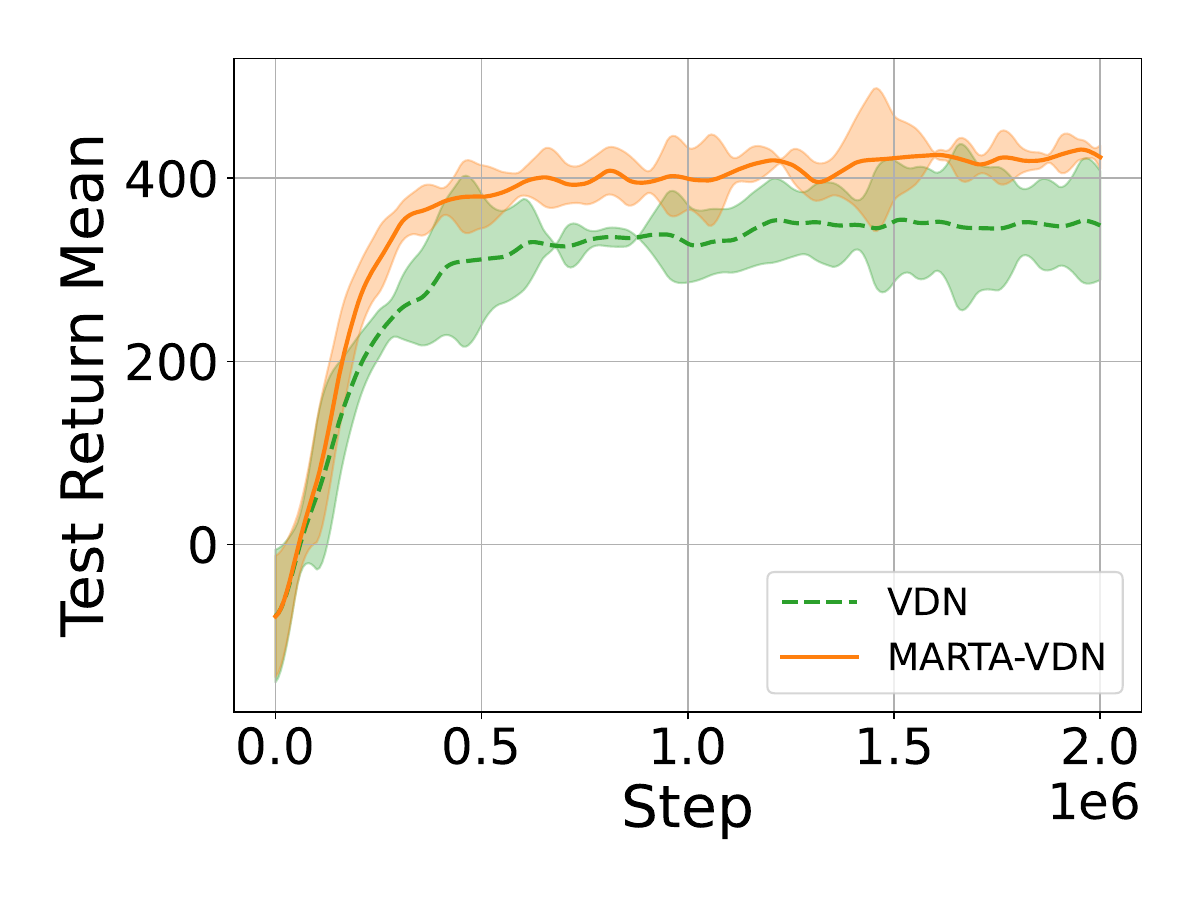}
%     \caption{MPE-SimpleTag}\label{fig:main_results_mpe_vdn}
%   \end{subfigure}
%   \begin{subfigure}[b]{0.19\textwidth}
%     \centering
%     \includegraphics[width=\linewidth, height=0.15\textheight]{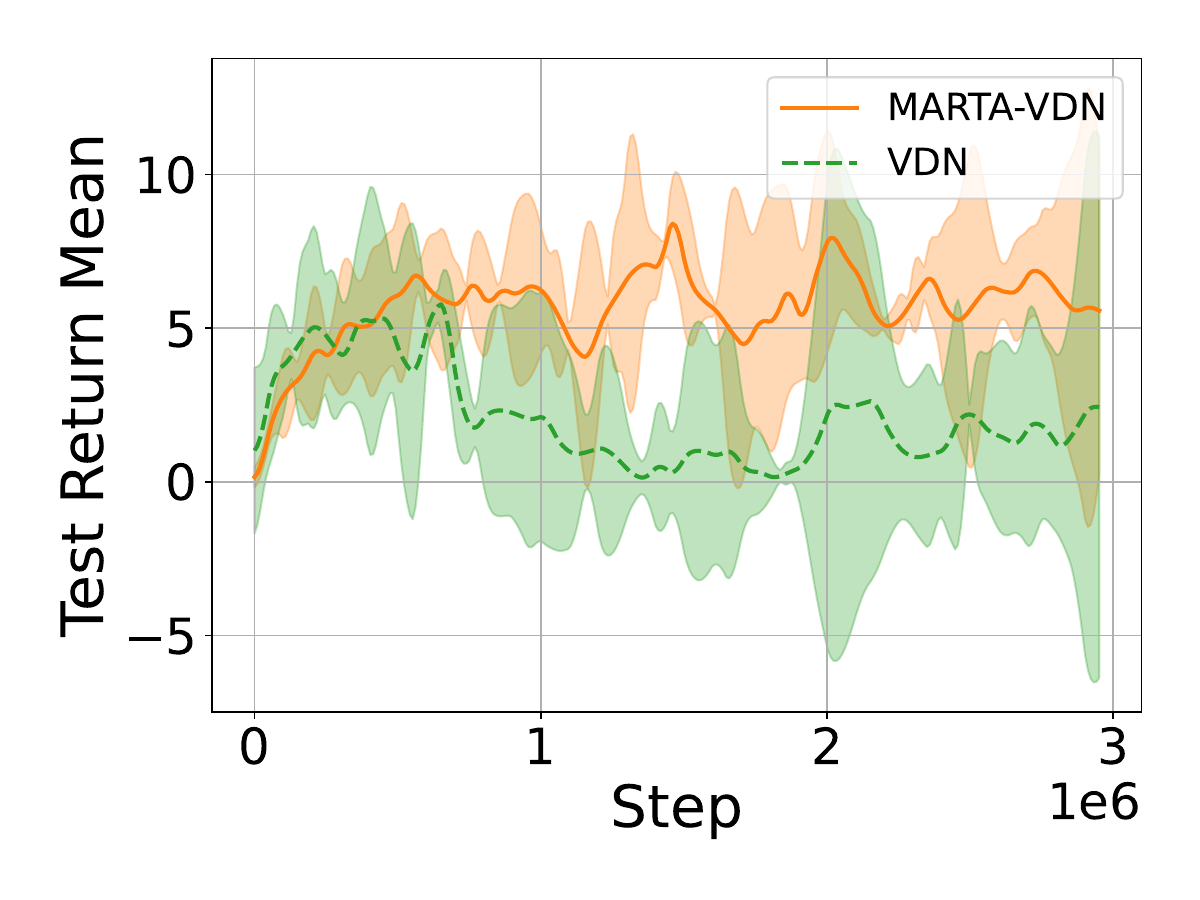}
%     \caption{SMAC-3m}\label{fig:main_results_smac_3m_vdn}
%   \end{subfigure}
%   \begin{subfigure}[b]{0.19\textwidth}
%     \centering
%     \includegraphics[width=\linewidth, height=0.15\textheight]{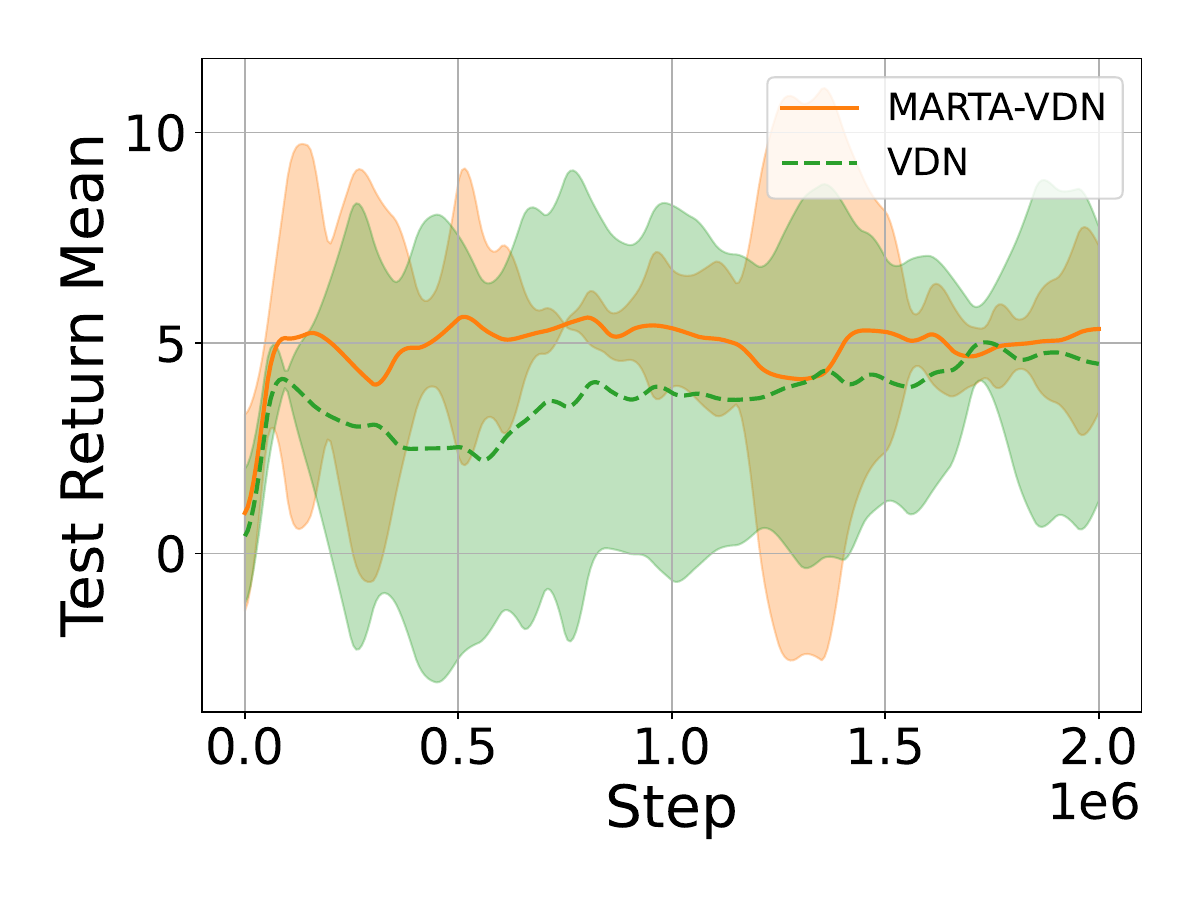}
%     \caption{SMAC-8m}\label{fig:main_results_smac_8m_vdn}
%   \end{subfigure}

%   \caption{\textbf{Performance of MARTA+VDN.} MARTA improves performance in all scenarios.}
%   \label{fig:plug_and_play}
%   \vspace{-1.3em}
% \end{figure*}
\begin{figure*}[t]
  \centering
  \captionsetup[subfigure]{aboveskip=0pt,belowskip=1pt}
  \setlength{\abovecaptionskip}{2pt}
  \setlength{\belowcaptionskip}{0pt}

  % ---------------- Row 1 (3 figs) ----------------
  \begin{subfigure}[b]{0.3\textwidth}
    \centering
    \includegraphics[width=\linewidth, height=0.155\textheight]{plots/fig1_tj_seed_vdn.pdf}
    \caption{Traffic Junction}\label{fig:main_results_traffic_vdn}
  \end{subfigure}\hfill%
  \begin{subfigure}[b]{0.3\textwidth}
    \centering
    \includegraphics[width=\linewidth, height=0.15\textheight]{plots/fig1_lbf_seed_vdn.pdf}
    \caption{LBF-5x5-4p-1f}\label{fig:main_results_lbf_vdn}
  \end{subfigure}\hfill%
  \begin{subfigure}[b]{0.3\textwidth}
    \centering
    \includegraphics[width=\linewidth, height=0.15\textheight]{plots/fig1_mpe_seed_vdn.pdf}
    \caption{MPE-SimpleTag}\label{fig:main_results_mpe_vdn}
  \end{subfigure}

  \vspace{-0.4em}

  % ---------------- Row 2 (3 figs) ----------------
  \begin{subfigure}[b]{0.3\textwidth}
    \centering
    \includegraphics[width=\linewidth, height=0.15\textheight]{plots/fig1_smac_3m_seed_vdn.pdf}
    \caption{SMAC-3m}\label{fig:main_results_smac_3m_vdn}
  \end{subfigure}\hfill%
  \begin{subfigure}[b]{0.3\textwidth}
    \centering
    \includegraphics[width=\linewidth, height=0.15\textheight]{plots/fig1_smac_8m_seed_vdn.pdf}
    \caption{SMAC-8m}\label{fig:main_results_smac_8m_vdn}
  \end{subfigure}\hfill%
  \begin{subfigure}[b]{0.3\textwidth}
    \centering
    \includegraphics[width=\linewidth, height=0.15\textheight]{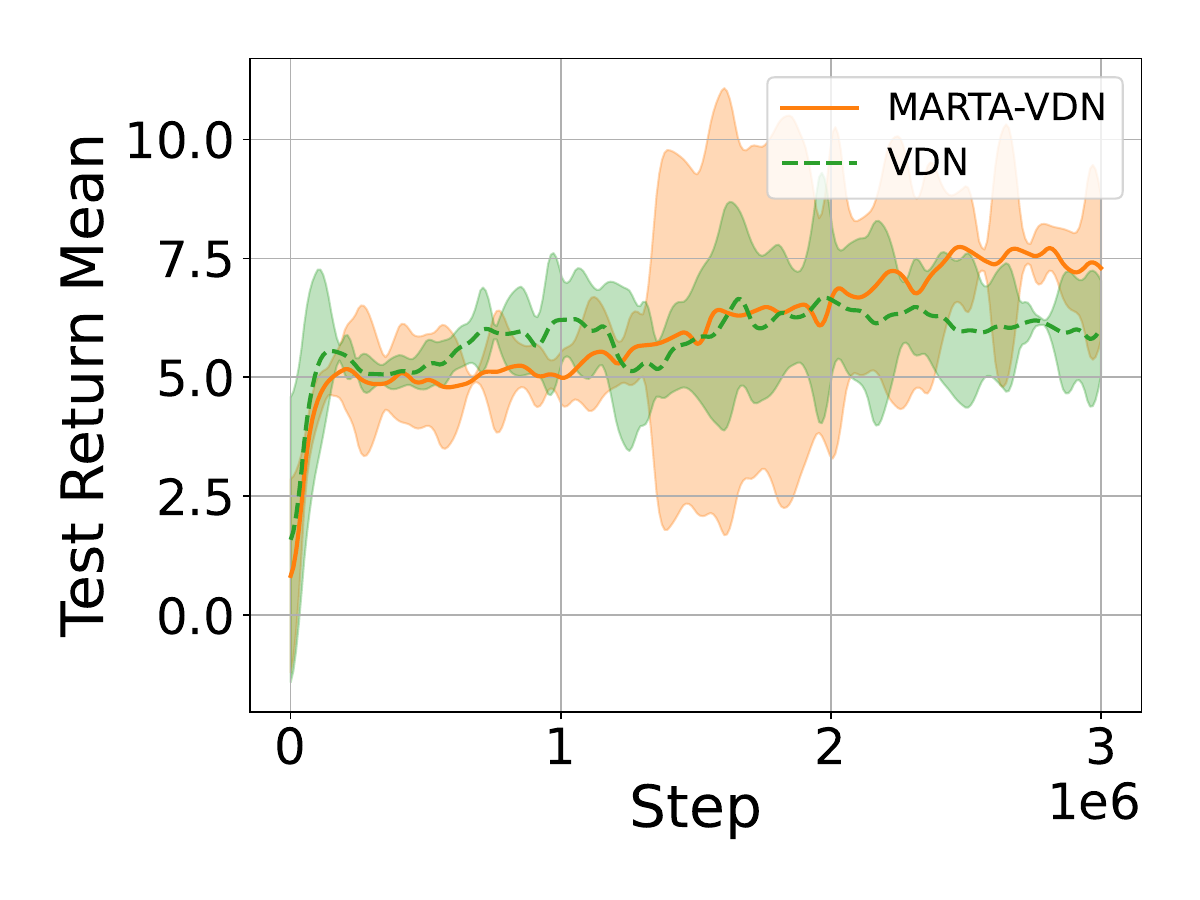}
    \caption{SMAC-2s3z}\label{fig:main_results_smac_2s3z_vdn}
  \end{subfigure}

  \caption{\textbf{Performance of MARTA+VDN.} MARTA improves performance in all scenarios.}
  \label{fig:plug_and_play}
  \vspace{-1.3em}
\end{figure*}

% \vspace{-.2em}
% \begin{wrapfigure}{l}{0.56\textwidth}
%     \centering
% \includegraphics[width=0.75\linewidth]{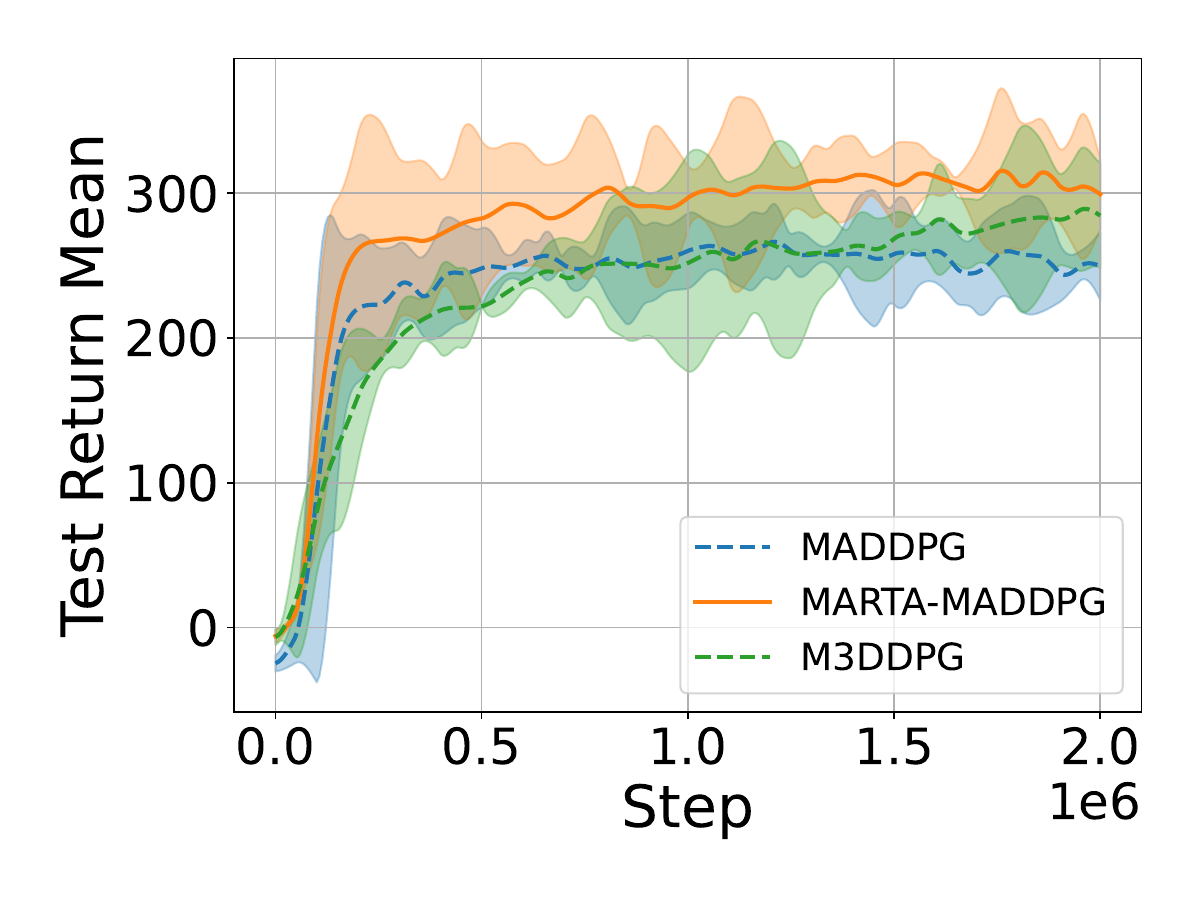}
% \vspace{-2em}
%     \caption{
%     Compared MADDPG with M3DDPG and MARTA-MADDPG in MPE.
%     }
%     \label{fig:mpe_maddpg_marta}
%     \vspace{-.8em}
% \end{wrapfigure}
\begin{figure}[t]
    \centering
    \begin{minipage}[t]{0.40\textwidth}
        \centering
        \includegraphics[width=.7\linewidth, height=0.13\textheight]{plots/fig1_mpe_seed_maddpg.pdf}
        \vspace{-.2cm}
        \caption{
    Comparison between MADDPG with M3DDPG and MARTA-MADDPG in MPE.
    }
        \label{fig:mpe_maddpg_marta}
    \end{minipage}
    \vspace{-.3cm}
\end{figure}
\begin{figure}[t]
  \centering
  \captionsetup[subfigure]{aboveskip=0pt,belowskip=0pt}
  \setlength{\abovecaptionskip}{2pt}
  \setlength{\belowcaptionskip}{-2pt}
  % Row 1
  \begin{subfigure}[t]{0.49\linewidth}
    \centering
    \includegraphics[width=\linewidth,trim=0 6 0 4,clip]{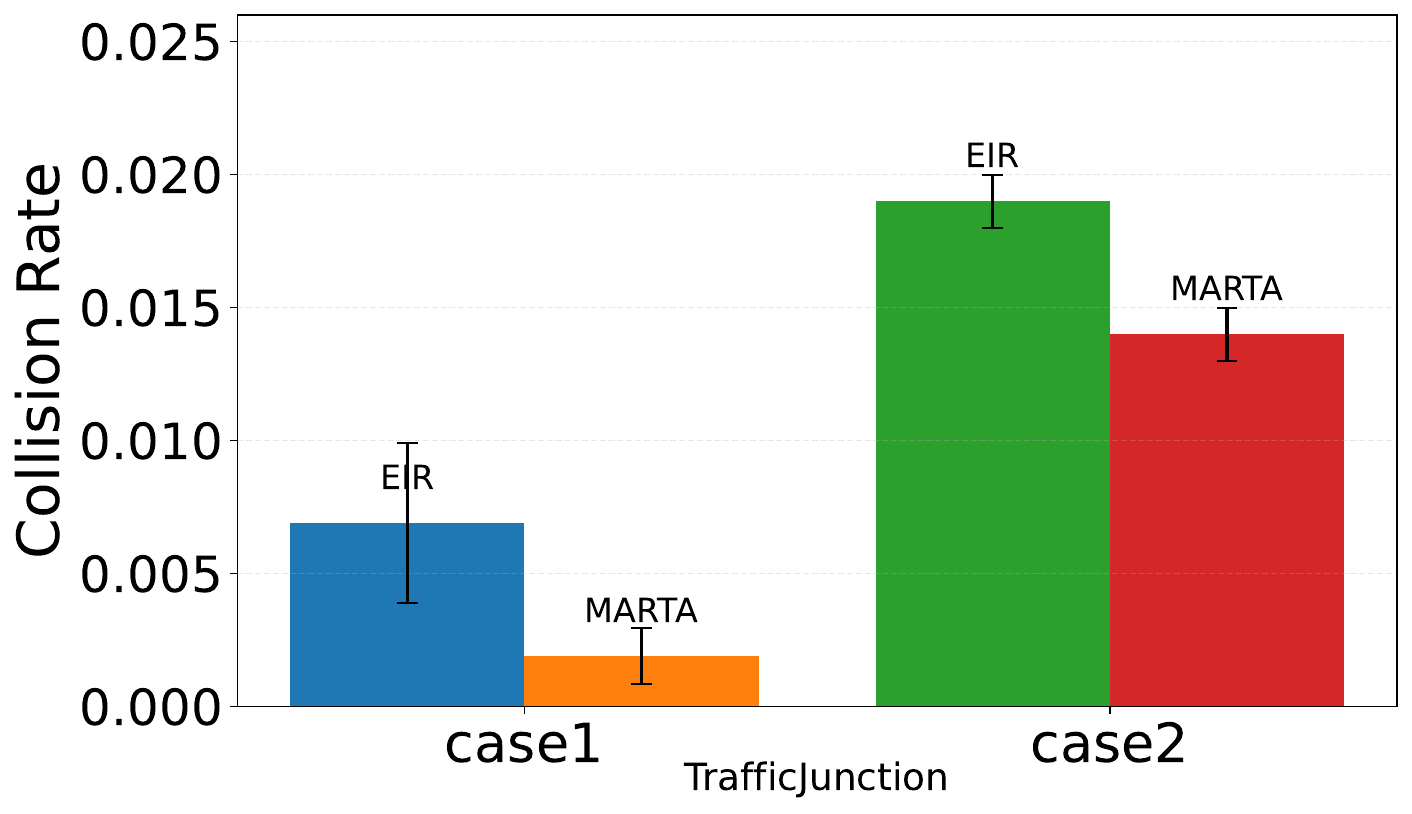}
    \caption{Traffic Junction}
    \label{fig:EIR_VS_MARTA_TJ}
  \end{subfigure}\hfill
  \begin{subfigure}[t]{0.49\linewidth}
    \centering
    \includegraphics[width=\linewidth,trim=0 6 0 4,clip]{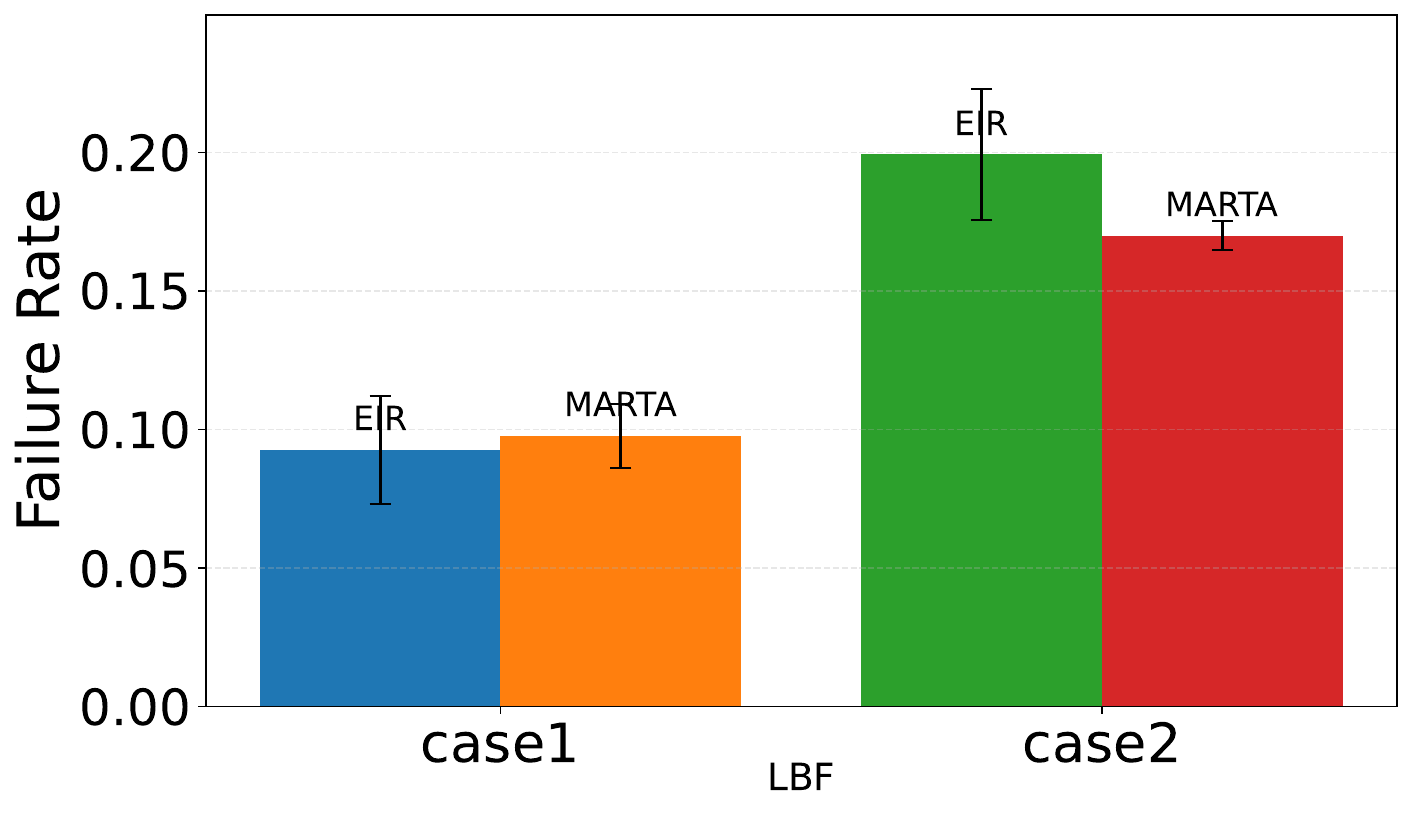}
    \caption{LBF-5x5-4p-1f}
    \label{fig:EIR_VS_MARTA_lbf}
  \end{subfigure}
  \vspace{2pt}
  % Row 2
  \begin{subfigure}[t]{0.49\linewidth}
    \centering
    \includegraphics[width=\linewidth,trim=0 6 0 4,clip]{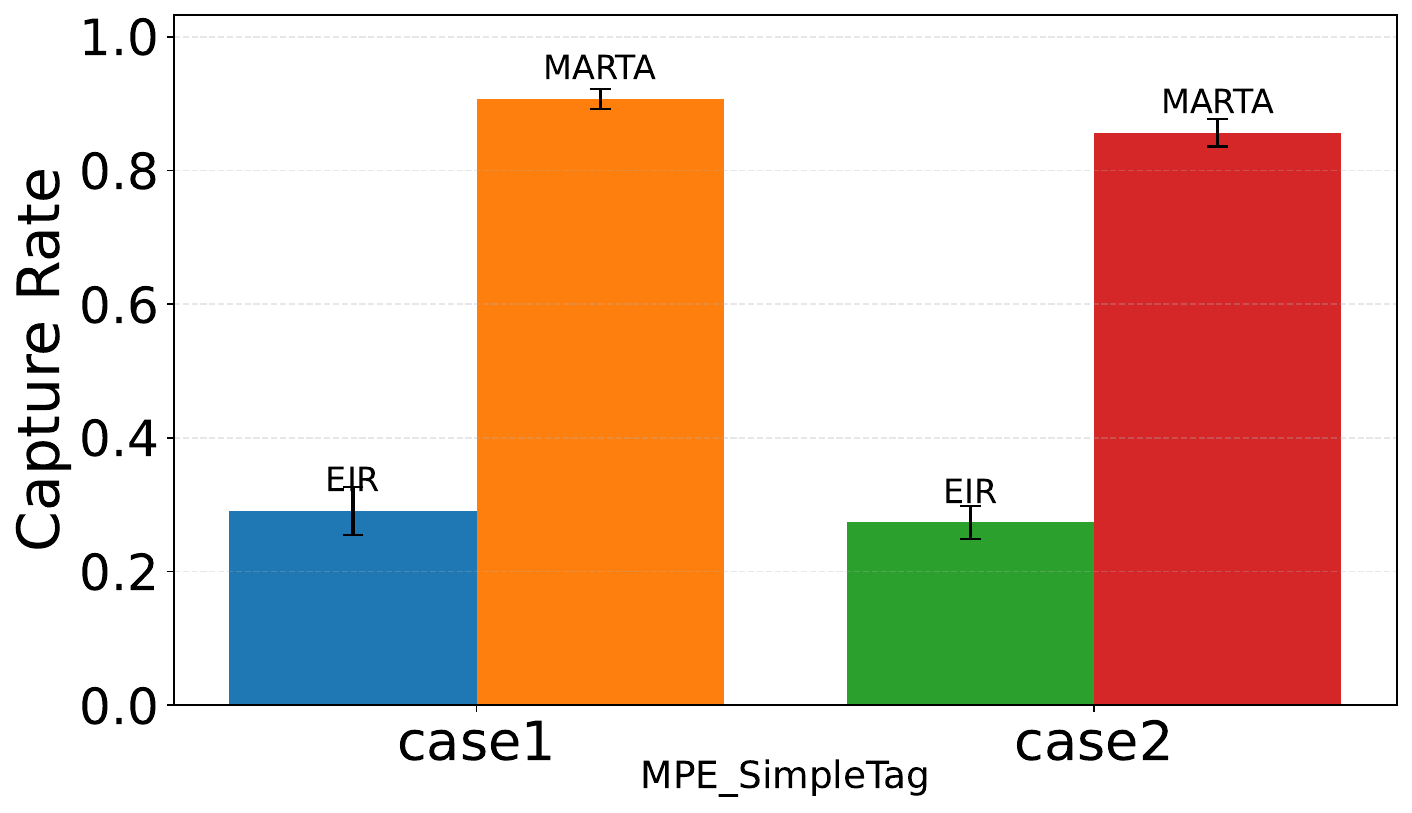}
    \caption{MPE-SimpleTag}
    \label{fig:EIR_VS_MARTA_mpe}
  \end{subfigure}\hfill
  \begin{subfigure}[t]{0.49\linewidth}
    \centering
    \includegraphics[width=\linewidth,trim=0 6 0 4,clip]{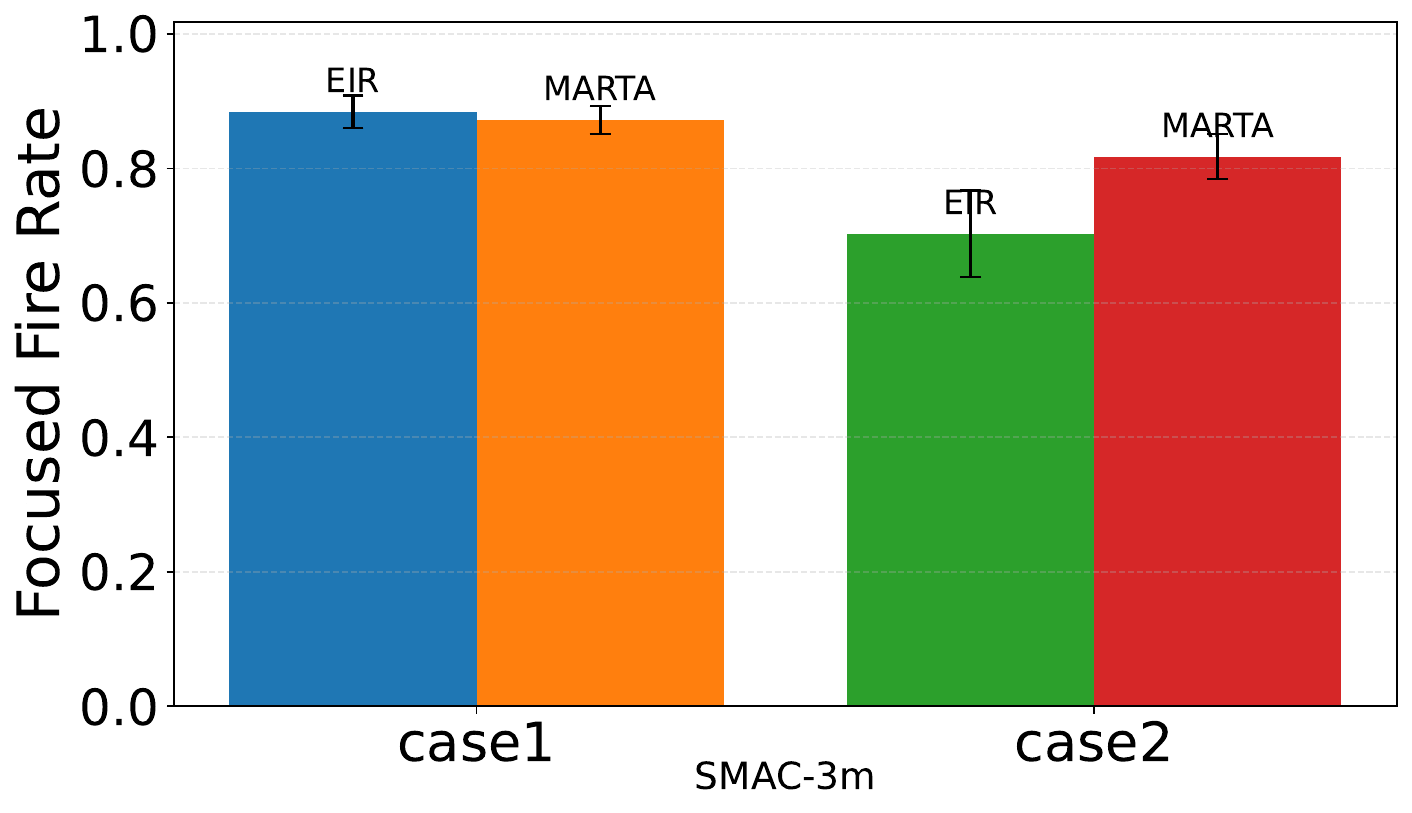}
    \caption{SMAC-3m}
    \label{fig:EIR_VS_MARTA_smac}
  \end{subfigure}
\caption{
\textbf{Evaluation of MARTA and EIR across four environments under two fault regimes.}
Case~1 (fixed fault): a single agent fails with fixed probability, with aligned train–test distributions.
Case~2 (dynamic random fault): at test time, any agent may malfunction at any timestep, inducing a train–test distribution mismatch.
Top row reports risk metrics (lower is better); bottom row reports success and coordination metrics (higher is better).
}
  \label{fig:EIR}
  \vspace{-0.4cm}
\end{figure} 

\textbf{Comparison with robust adversarial MARL baselines.}
We compare against M3DDPG~\cite{lowe2017multi} and EIR~\cite{li2023byzantine} as representative robust and adversarial MARL baselines applicable under explicit agent-malfunction processes; other robust MARL methods assume different threat models and are therefore not directly comparable. We adapt M3DDPG as a robust baseline on the MPE \textsc{SimpleTag} environment and construct a MARTA-enhanced variant, MARTA-MADDPG, by applying our switching framework to the MADDPG backbone. Figure~\ref{fig:mpe_maddpg_marta} reports learning curves for MADDPG, M3DDPG, and MARTA-MADDPG under the same malfunction process. MARTA-MADDPG consistently outperforms both MADDPG and M3DDPG, demonstrating that MARTA provides additional robustness gains even when applied on top of an established adversarial MARL method.

\textbf{Robustness under dynamic fault distributions.}
Figure~6 compares MARTA against EIR and reports performance under two qualitatively different fault regimes. In Case~1, malfunctions affect a fixed agent with constant probability, yielding a stable and predictable fault distribution shared between training and testing. In Case~2, faults occur dynamically: at each timestep, any agent may malfunction with fixed probability, causing the location of the fault to switch across agents and time and introducing a pronounced train–test distribution mismatch.
The first row (TJ and LBF) reports risk metrics, collision rate and failure rate, where lower values indicate safer and more reliable behaviour. The second row (MPE and SMAC-3m) reports success and coordination metrics, namely capture rate and focused fire rate, where higher values indicate stronger coordination under malfunction. While MARTA performs comparably to EIR under the aligned Case~1 regime, it consistently outperforms EIR under the dynamic Case~2 regime across all environments, indicating stronger robustness and generalisation to multi-location, time-varying faults.

% 

% \paragraph{Computational overhead.}
% MARTA introduces an additional {\fontfamily{cmss}\selectfont Switcher} policy and an adversarial policy for each agent. In our experiments, this increases the total parameter count by approximately XX\% relative to the base learner. Training time increases by YY\% on average across environments, measured in wall-clock time on identical hardware. Importantly, no changes are made to the architecture or hyperparameters of the underlying MARL algorithms, and the additional cost is incurred only during training.

Further experimental results, ablations, and diagnostic analyses are provided in the Appendix, including extended evaluations under alternative fault models, additional visualisations of switching behaviour, and supplementary quantitative results. These studies reinforce the empirical robustness of MARTA across settings and help disentangle the contributions of its individual components.

\textcolor{black}{\textbf{Computational overhead.}
In all experiments, we reuse the same architectures and hyperparameters as the underlying MARL baselines and train the {\fontfamily{cmss}\selectfont Switcher} with a lightweight actor--critic learner. 
This introduces only a modest training-time overhead. 
At execution time there is no auxiliary safety filter or online optimisation so the wall-clock cost of MARTA is close to that of the base learner alone.}

\section{\textcolor{black}{Related Work}}

\textcolor{black}{\textbf{Fault tolerance and safety in MARL.} Safety and robustness in MARL have been studied through shielding, backup policies and constrained optimisation. Shielding approaches \citep{zhang2019mamps,elsayed2021safe} use additional safety layers or backup policies to override unsafe actions. Qin et al.~\citep{qin2021learning} employ control barrier functions to enforce safety constraints, but without formal guarantees. These methods often require per-timestep safety checks or dedicated certificates, and their cost grows with the number of agents. In contrast, MARTA embeds robustness directly into the training dynamics avoiding runtime safety layers and preserving the architecture of the underlying MARL learner.
Constrained MARL formulations \citep{gu2021multi,lu2021decentralized} treat safety as a constrained optimisation problem. These methods often face convergence and scalability challenges. By contrast, the MG underlying MARTA has a  unique solution to which MARTA has convergence guarantees for tabular and linearly approximated settings.
\newline
\textbf{Robust and adversarial MARL.} Adversarial training methods for RL and MARL \citep{pinto2017robust,li2019robust,zhang2020robust} typically introduce an opponent that perturbs actions, observations or dynamics to construct worst-case trajectories. These methods improve robustness but often induce overly conservative behaviour, since the agent is trained under an adversary that is active at every step. Moreover, most such work focuses on perturbations in a single-agent MDP or on model uncertainty, rather than on explicit agent malfunctions in cooperative teams.
MARTA differs in three ways. First, it targets actuator-level failures in which an entire agent temporarily loses control to a fault policy, rather than small perturbations around nominal actions or states. Second, it models the timing and location of faults through a {\fontfamily{cmss}\selectfont Switcher} that explicitly reasons over state-dependent costs or budgets, rather than assuming an always-on adversary. Third, it provides convergence guarantees for this switching-augmented game, including under linear function approximation and budget constraints.
\newline\textbf{Diagnostics and poisoning attacks in MARL.}
Recent work has examined the vulnerability of MARL systems to targeted perturbations or poisoning attacks. RTCA \citep{zhou2023robustness} proposes a resilience testing framework that perturbs the states of critical agents to expose weaknesses. \citeauthor{zheng2023one4all} study training-time poisoning in which a single manipulated agent can poison policies. These works are primarily \emph{diagnostic} or \emph{attack-oriented}: they evaluate the weakness of existing MARL policies or design efficient attacks, rather than providing a defence scheme that yields robust policies.
MARTA is complementary. It is a training-time defence mechanism that induces controlled, state-dependent malfunctions then trains agents to jointly best respond. 
% In principle, RTCA and similar tools could be applied on top of MARTA-trained policies to further stress-test robustness, while MARTA supplies the robustness mechanism that these testing frameworks currently lack.
\newline
\textbf{Byzantine-robust MARL and adversarial teammates.} 
Li et al.~\citep{li2023byzantine} study Byzantine-robust cooperative MARL through a Bayesian game formulation, in which some teammates may behave adversarially. Their focus is on strategic deviations modelled through adversarial types and on robust reasoning about such behaviour. MARTA instead models non-strategic actuator malfunctions, such as stuck actuators, corrupted control modules and state-dependent controller failures. These represent different robustness regimes. Strategic adversaries may deliberately coordinate to mislead, whereas actuator faults in physical systems are often non-strategic yet catastrophic for coordination.
MARTA introduces a specific fault-switching MG (with budgets), and proves existence and uniqueness of a minimax value and convergence under both tabular and linearly approximated settings. This yields a different set of theoretical guarantees tailored to the malfunction setting.
\newline
\textbf{Shielding, backup policies and scalability.}
Shielding and backup-policy methods \citep{zhang2019mamps,elsayed2021safe,qin2021learning} provide valuable tools for enforcing safety constraints by modifying actions at execution time. However, their reliance on online constraint checking and per-agent safety mechanisms can create scalability challenges as the number of agents grows. MARTA takes a complementary approach which avoids runtime safety layers and allows robustness to scale with the underlying MARL learner without redesigning its architecture.
Finally, MARTA is intentionally plug-and-play. It attaches to standard value-based and actor–critic MARL algorithms without altering their internal networks, and can also be combined with more sophisticated robust architectures. In this sense, MARTA acts as a general robustness layer that complements rather than replaces existing robust MARL techniques.}
\vspace{-.4cm}
\section*{Conclusion}

We introduced MARTA, a fault-tolerant framework for cooperative MARL that selectively induces agent malfunctions during training to improve robustness. By formulating fault tolerance as a switching-augmented MG, we established theoretical guarantees for equilibrium existence and convergence under standard assumptions. Empirical results across diverse benchmarks demonstrate that MARTA improves robustness to dynamic and adversarial malfunctions while remaining plug-and-play with existing MARL algorithms. Together, these results position MARTA as a principled and practical approach to fault-tolerant MARL.

\section*{Broader Impact}

This paper presents work whose goal is to advance the robustness and reliability of multi-agent reinforcement learning. The proposed methods are intended to improve fault tolerance and safety in multi-agent systems, which may benefit applications such as robotics and distributed control. We do not foresee significant negative societal consequences specific to this work beyond those commonly associated with the deployment of machine learning systems.

\bibliographystyle{iclr2026_conference}
\bibliography{sample}

\newpage
\appendix
\onecolumn
\appendix
\Huge 
\begin{center}
\textbf{Supplementary Material}   
\end{center} 
\normalsize

\section{Ablation Studies}\label{sec:ablation_appendix}
\noindent\textbf{{\fontfamily{cmss}\selectfont Switcher} Effectiveness and Calibration.}\label{sec:switcher_effect} \noindent We evaluate the effectiveness of MARTA's {\fontfamily{cmss}\selectfont Switcher} mechanism in two aspects: 
(1) Its ability to \textit{calibrate} the frequency of adversarial malfunctions via the parameter $c$;
(2) Its performance in Level 2 (training/test malfunction distribution shift): \newline\noindent\textbf{Varying Switching Cost $c$ (Fig.~\ref{fig:switcher_ablation_c} and Fig.~\ref{fig:switcher_ablation_return}).}  
We train MARTA under different switching costs $c$ in the TJ environment. Larger values of $c$ make {\fontfamily{cmss}\selectfont Switcher} more conservative in triggering malfunctions, resulting in fewer fault activations. This reflects a trade-off: high $c$ limits unnecessary disruptions but may reduce robustness; low $c$ encourages aggressive adversarial training. The monotonic trend validates that {\fontfamily{cmss}\selectfont Switcher} adaptively regulates the fault difficulty during training. \newline\noindent\textbf{Is MARTA robust under malfunction distributional shifts? (Fig.~\ref{fig:switcher_ablation_random}).}  
We compare MARTA under \textbf{Case 1} (aligned malfunction distributions in training and testing) and \textbf{Case 2} (mismatched distributions), across different switching costs $c$ in the TJ environment. The performance gap between the cases highlights the generalisation difficulty under distributional shift.

\begin{figure}[htbp]
    % \vspace{-0.4cm}
    \centering
    \begin{subfigure}[t]{0.48\columnwidth}
        \centering
        \includegraphics[width=\textwidth]{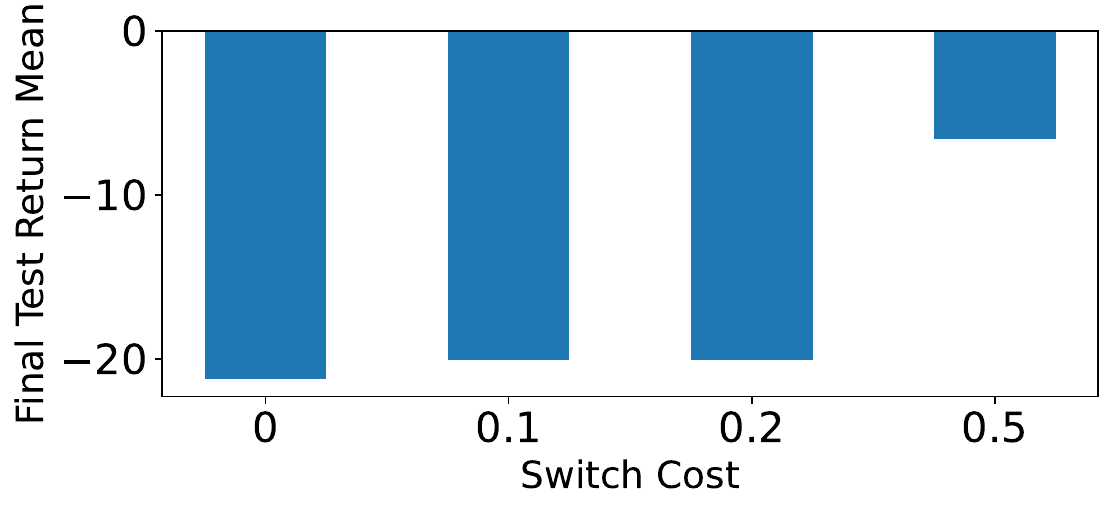}
        \caption{Effect of switch cost}
        \label{fig:switcher_ablation_c}
    \end{subfigure}
    \hfill
    \begin{subfigure}[t]{0.48\columnwidth}
        \centering
        \includegraphics[width=\textwidth]{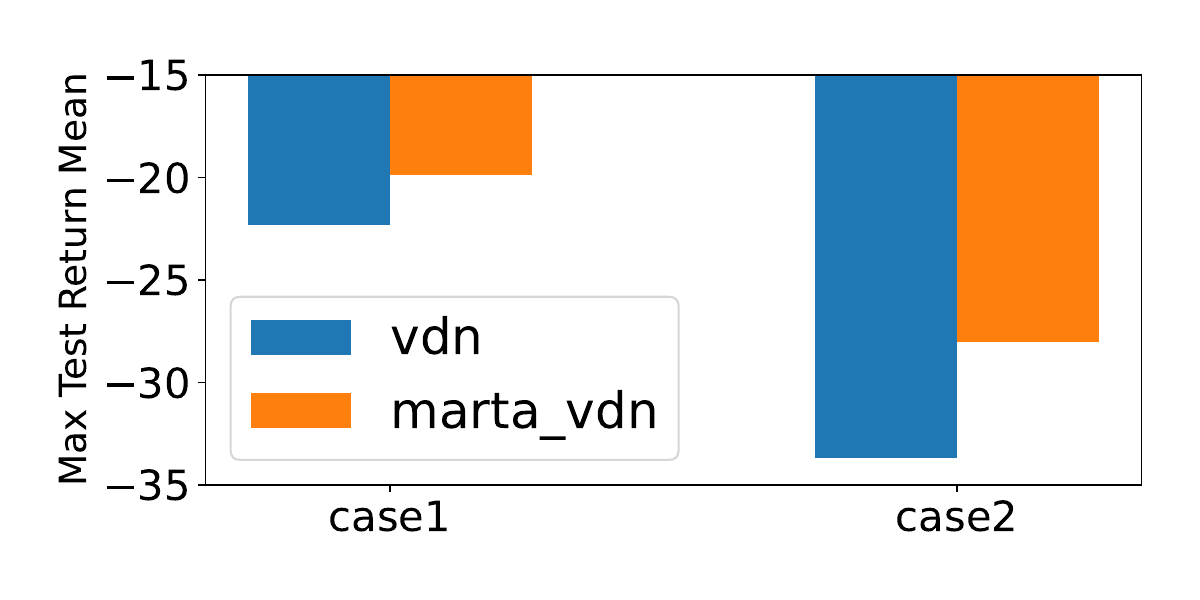}
        \caption{Uniform \& worst-case malfunction}
        \label{fig:switcher_ablation_random}
    \end{subfigure}
    \hfill
    \begin{subfigure}[t]{0.42\columnwidth}
        \centering
        \includegraphics[width=\textwidth, height=5cm]{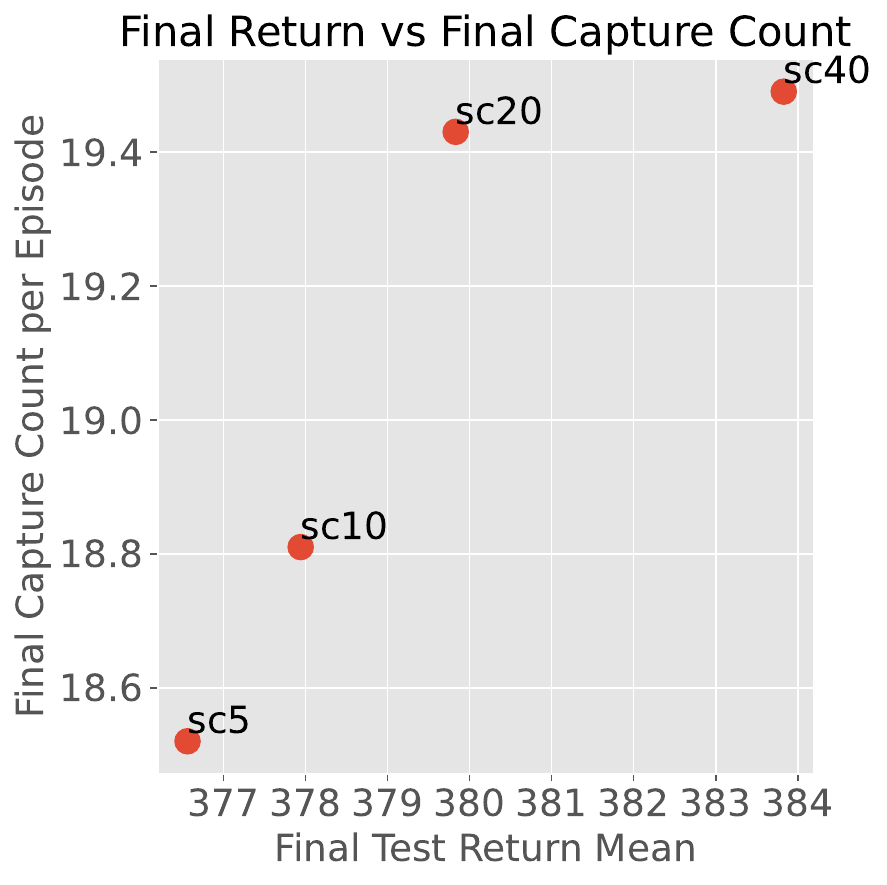}
        \caption{Trade-off between return and Collision rate}
        \label{fig:switcher_ablation_return}
    \end{subfigure}
    % \vspace{-0.2cm}
    \caption{\textbf{{\fontfamily{cmss}\selectfont Switcher} ablation experiments.} (a) shows final evaluation returns under different switching costs $c$. (b) compares MARTA in Case 1 and Case 2.}
    \label{fig:switcher_ablation}
    % \vspace{-0.4cm}
\end{figure}
\noindent\textbf{Is MARTA robust under Varying Malfunction Probability?}
\label{sec:p_robustness}
We evaluate the fault-tolerance of MARTA across different levels of malfunction probability $p$. Specifically, $p$ denotes the per-timestep probability that a malfunction is introduced i.e., at each timestep, with probability $p$, an agent becomes faulty and executes the malfunction policy. Larger $p$ values indicate more frequent and unpredictable failures, thereby increasing the difficulty of the task. We conduct this experiment in the \textbf{Traffic Junction} environment, using QMIX as the base learner. As shown in Fig.~\ref{fig:p_robustness_combined}, we vary $p$ and compare the final returns of MARTA-QMIX with QMIX. The results show a clear performance gap: while the baseline QMIX degrades rapidly under higher malfunction frequencies, MARTA-QMIX does not suffer performance loss even at higher $p$. This suggests the {\fontfamily{cmss}\selectfont Switcher} mechanism in MARTA enables more effective adaptation to dynamic and persistent failure conditions. In summary, MARTA significantly extends the robustness envelope of its underlying MARL algorithm, showing consistent advantage across a wide range of disturbance intensities.

\begin{figure}[!htbp]
    \centering
    \captionsetup{skip=2pt}
    \captionsetup[subfigure]{aboveskip=0pt,belowskip=0pt}

    \begin{subfigure}[t]{0.49\linewidth}
        \centering
        \includegraphics[
            width=\linewidth,
            height=0.17\textheight,
            keepaspectratio,
            trim=0 10 0 8,clip
        ]{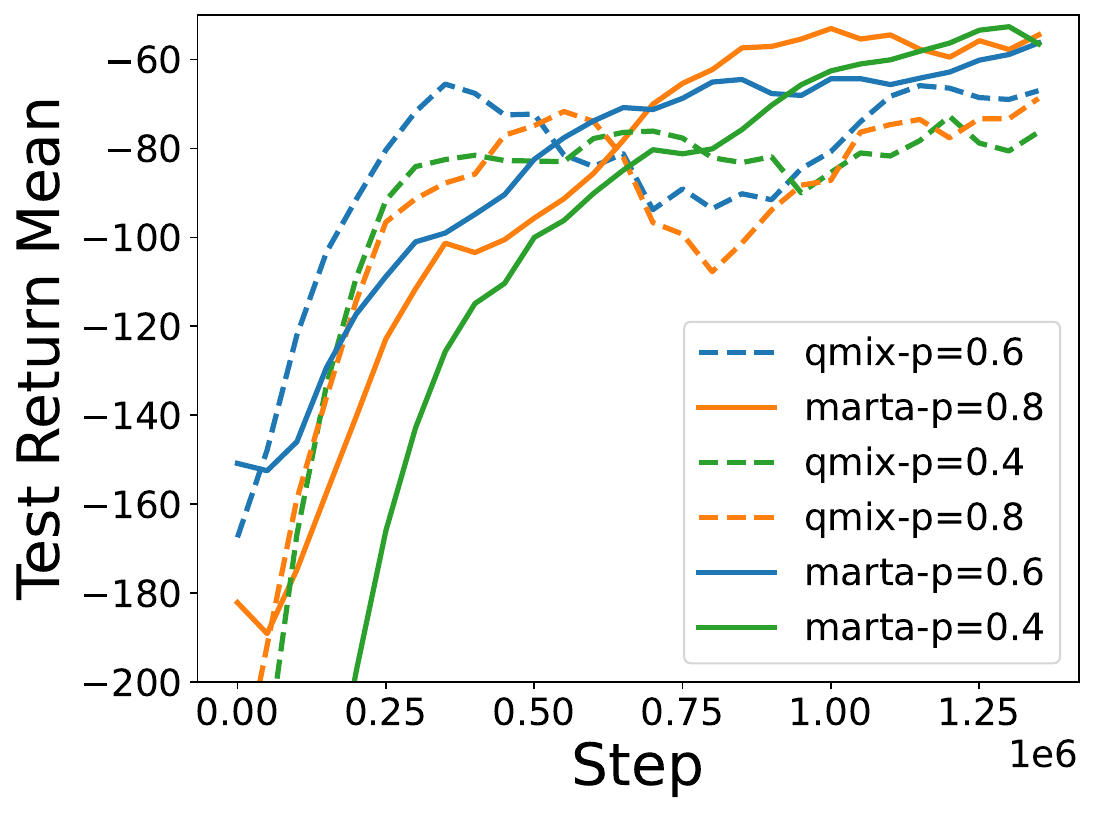}
        % \vspace{0pt}
        \caption{Learning curves: varying $p$}
        \label{fig:p_robustness_curve}
    \end{subfigure}
    \hfill
    \begin{subfigure}[t]{0.49\linewidth}
        \centering
        \includegraphics[
            width=\linewidth,
            height=0.17\textheight,
            keepaspectratio,
            trim=0 10 0 8,clip
        ]{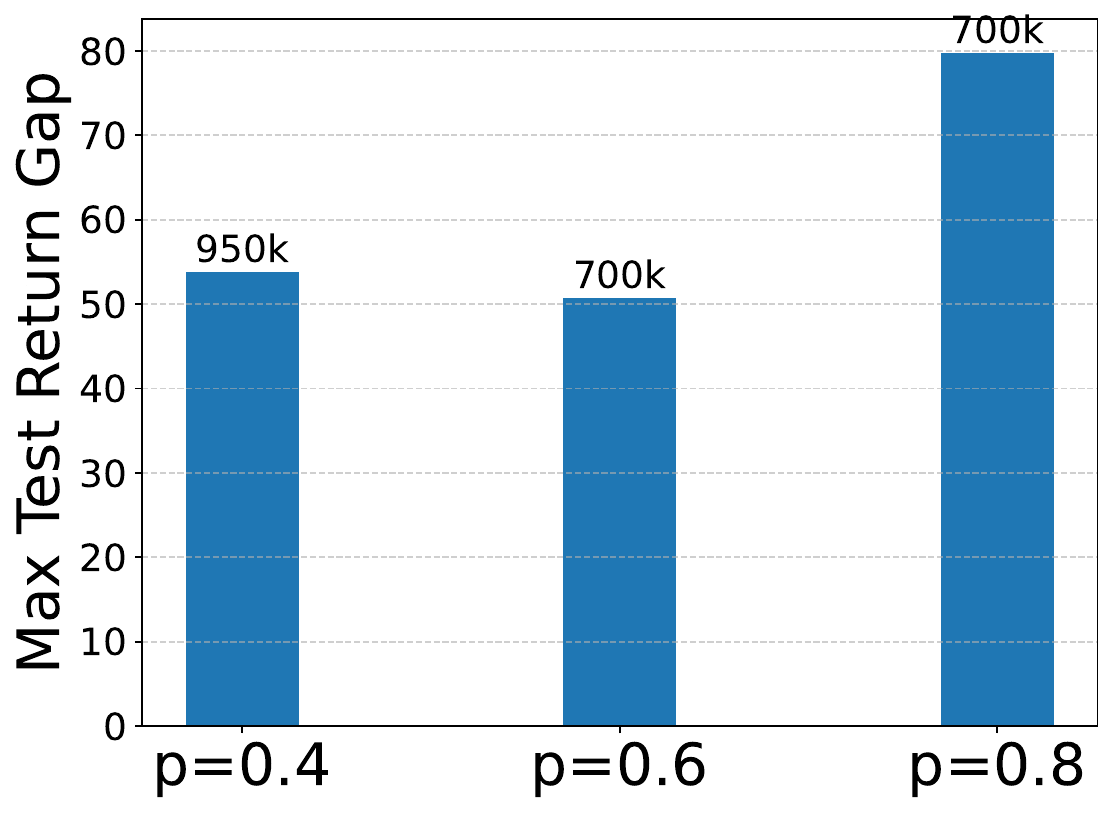}
        % \vspace{0pt}
        \caption{Fault-Tolerance: varying $p$}
        \label{fig:p_robustness_drop}
    \end{subfigure}

    % \vspace{-2pt}
    \caption{Robustness under varying malfunction probability $p$ in the TJ environment.}
    \label{fig:p_robustness_combined}
\end{figure}

\clearpage\section{Further Analysis of Experimental Results}\label{sec:appendix_further_analysis}

% \noindent We re-present the results of the three experiments shown in Fig.\ref{fig:main_results}. We ran three random random seeds per configuration, while keeping all hyperparameters and environment settings unchanged. In Fig. \ref{fig:main_results_ci}, we report the average test return at each evaluation step across the $3$ seeds, with shaded bands representing the 95\% confidence intervals. The results show that MARTA consistently outperforms its corresponding baselines in all environments.

% \begin{figure}[htbp]
% \centering
% \begin{subfigure}[b]{0.32\textwidth}
% \includegraphics[width=\textwidth]{}
% \caption{Traffic Junction}
% \end{subfigure}
% \hfill
% \begin{subfigure}[b]{0.32\textwidth}
% \includegraphics[width=\textwidth]{}
% \caption{LBF-5x5-4p-1f}
% \end{subfigure}
% \hfill
% \begin{subfigure}[b]{0.32\textwidth}
% \includegraphics[width=\textwidth]{}
% \caption{MPE}
% \end{subfigure}
% \caption{Mean ± 95\% confidence interval across 3 seeds for each environment.}
% \label{fig:main_results_ci}
% \end{figure}

\begin{table}[!htbp]
\centering
\captionsetup{skip=4pt}
\footnotesize
\setlength{\tabcolsep}{4pt}
\renewcommand{\arraystretch}{0.95}
\caption{Experimental factors, levels, and rationale}
\label{tab:design-factors}
\begin{tabularx}{\linewidth}{@{}p{2.1cm}p{1.9cm}X@{}}
\toprule
\textbf{Factor} & \textbf{Levels} & \textbf{Rationale \& Motivation} \\
\midrule
Agent selection (\emph{who}) \(F\) 
& \{\textit{Simple, Mid, Hard}\} 
& \textbf{Simple}: same fixed agent \(i^\star\) fails with per-step prob.\ \(p\) in train/test; persistent, predictable single-point failure. \textbf{Mid}: per step, sample one agent \(i_t\!\sim\!F\); train/test share \(F\), unpredictable yet aligned. \textbf{Hard}: all agents faultable (concurrent allowed) and train/test fault processes mismatched (\(F_{\text{test}}\!\neq\!F_{\text{train}}\), \(p_{\text{test}}\!\neq\!p_{\text{train}}\)); strongest generalisation stress. \\
Fault policy (\emph{how}) \(\sigma^{i}\)
& \{\textit{Uniform, Worst-case}\} 
& \textbf{Uniform}: uniformly random action; noise baseline. \textbf{Worst-case}: adversarial via \(\mathrm{softmax}(-Q^{\ast})\); targets high-impact disruption without altering environment dynamics. \\
Trigger probability \(p\)
& grid \(\mathcal{P}\)
& Controls fault frequency (difficulty). Sweeps over \(p\in\mathcal{P}\) quantify degradation and intervention–return trade-offs. \\
Switch cost \(c\)
& grid \(\mathcal{C}\)
& Calibrates {\fontfamily{cmss}\selectfont Switcher} activation; larger \(c\) discourages frequent interventions. Sweeps test robustness–efficiency trade-off. \\
Train–test protocol
& \{\textit{Aligned, Shifted}\}
& \textbf{Aligned (Case~1)}: test uses same \(F,\sigma^{i}\) as training; ID robustness. \textbf{Shifted (Case~2)}: test alters \(F\) and/or \(\sigma^{i}\); OOD generalisation. \\
{\fontfamily{cmss}\selectfont Switcher} type (ablation) 
& \{\textit{Learned, Random}\}
& \textbf{Random}: fixed Bernoulli triggers with all else fixed; isolates the benefit of state-dependent, learned switching (Learned). \\
\bottomrule
\end{tabularx}
\end{table}

\begin{table}[htbp]
    \centering
    \captionsetup{skip=2pt}
    \small
    \caption{
    Settings for Fig.~\ref{fig:main_results} and Fig.~\ref{fig:plug_and_play}.
    }
    \label{tab:main_exp}
    \setlength{\tabcolsep}{5pt} 
    \renewcommand{\arraystretch}{0.95}
    \begin{tabular}{llccc}
        \toprule
        \textbf{Env} & \textbf{Base Algo (Agents)} & \textbf{Malfunction} & \textbf{$p$} & \textbf{Fig} \\
        \midrule
        \multirow{2}{*}{TJ}  
            & VDN (4)        & Level2  & 0.1 & a \\
            & QMIX (6)       & Level1   & 0.1 & a \\
        \midrule
        \multirow{2}{*}{LBF} 
            & VDN (4)        & Level2  & 0.4 & b \\
            & QMIX (4)       & Level2  & 0.2 & b \\
        \midrule
        \multirow{2}{*}{MPE} 
            & QMIX (4)       & Level1   & 0.1 & c \\
            & MADDPG (4)     & Level2  & 0.4 & c \\
        \bottomrule
        \multirow{2}{*}{SMAC-3m} 
            & VDN (3)       & Level1   & 0.4 & d \\
            & QMIX (3)     & Level2  & 0.2 & d \\
        \bottomrule
        \multirow{2}{*}{SMAC-8m} 
            & VDN (8)       & Level1   & 0.2 & d \\
            & QMIX (8)     & Level2  & 0.2 & d \\
        \bottomrule
        \multirow{2}{*}{SMAC-2s3z} 
            & VDN (5)       & Level1   & 0.6 & d \\
            & QMIX (5)     & Level2  & 0.2 & d \\
        \bottomrule
    \end{tabular}
\end{table}

\noindent To provide a quantitative summary, we evaluate each method using two metrics: 

\textbf{Final Test Return:} We average the test return from the last $5$ evaluation checkpoints for each seed, and report the mean ± standard deviation across seeds. As shown in Table~\ref{tab:final_and_auc}, MARTA exhibits clear performance gains in all three scenarios. The column "Gain (\%)" indicates the relative improvement of MARTA over its baseline, computed as the percentage change with respect to the baseline's absolute value, i.e., $(\text{MARTA} - \text{Baseline}) / |\text{Baseline}| \times 100$.
% \vspace{0.3em}

\textbf{Area Under the Learning Curve (AUC):} To jointly evaluate sample efficiency and final return, we compute the AUC over the test return curve for each seed in Fig~\ref{fig:auc_plots}. MARTA consistently achieves higher AUC scores than its baselines, indicating both faster and more stable learning. The AUC "Gain (\%)" reflects the relative increase in cumulative performance throughout training. Note that in \textsc{Traffic Junction}, MARTA-QMIX exhibits a lower AUC than QMIX because it converges more slowly in the early phase; however, it ultimately attains a higher final return.

\begin{figure}[htbp]
  \centering
  \includegraphics[width=\linewidth]{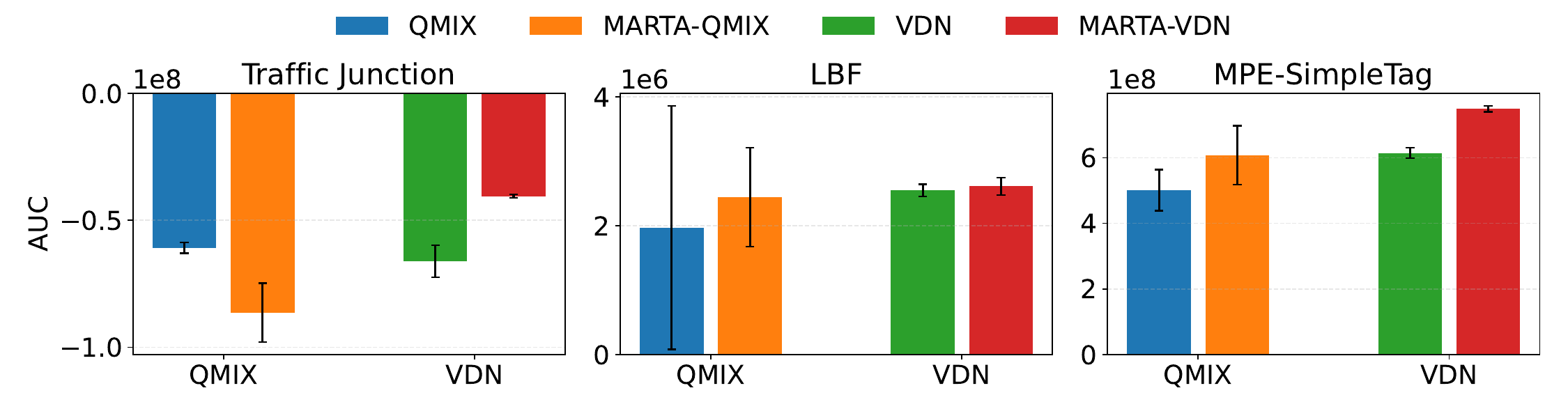}
  \caption{AUC comparisons across environments.}
  \label{fig:auc_plots}
\end{figure}

\begin{figure}[htbp]
  \centering
  \includegraphics[width=\linewidth]{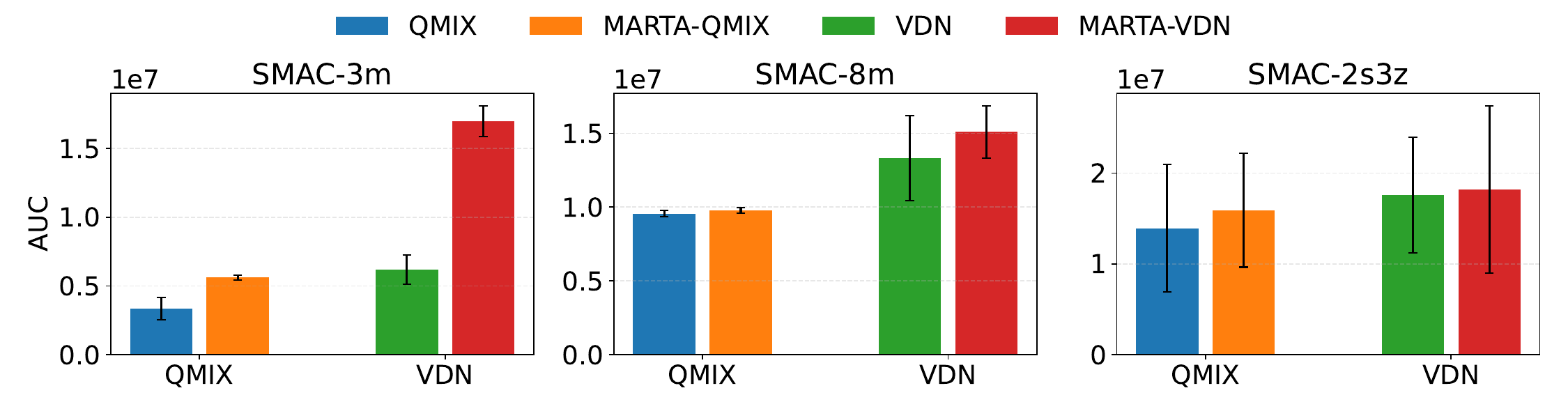}
  \caption{AUC comparisons across smac environments.}
  \label{fig:auc_plots_smac}
\end{figure}

\begin{table}[htbp]
\centering
\captionsetup{aboveskip=2pt,belowskip=0pt} % 压缩标题与表格/正文间距
\setlength{\tabcolsep}{4pt}                 % 默认约6pt，减小列间左右留白
\renewcommand{\arraystretch}{0.92}          % 减小行高（默认1.0）
\setlength{\aboverulesep}{0.2ex}            % 压缩 booktabs 线上下空白
\setlength{\belowrulesep}{0.2ex}
\footnotesize                                % 比 \small 更紧凑

\caption{The improvements of final return and AUC (± std) over baselines across 3 seeds.}
\label{tab:final_and_auc}

\begin{tabular}{@{}llcccc@{}}             % 去掉左右边距：@{}
\toprule
\textbf{Env} & \textbf{Method} & \textbf{Final Return} ($\mu \pm \sigma$) & \textbf{Gain (\%)} & \textbf{AUC} ($\mu \pm \sigma$) & \textbf{Gain (\%)} \\
\midrule
\multirow{4}{*}{Traffic Junction} 
& QMIX         & $-54.2 \pm 1.1$   &       & $-6.08\mathrm{e}{7} \pm 2.19\mathrm{e}{6}$ &       \\
& MARTA-QMIX   & $-48.2 \pm 4.7$   & +11.1 & $-8.64\mathrm{e}{7} \pm 11.6\mathrm{e}{6}$ & -42.1 \\
& VDN          & $-40.1 \pm 2.2$   &       & $-6.62\mathrm{e}{7} \pm 6.31\mathrm{e}{6}$ &       \\
& MARTA-VDN    & $-24.9 \pm 0.8$   & +37.8 & $-4.05\mathrm{e}{7} \pm 0.79\mathrm{e}{6}$ & +38.8 \\
\midrule
\multirow{4}{*}{LBF} 
& QMIX         & $0.65 \pm 0.279$  &       & $1.97\mathrm{e}{6} \pm 18.9\mathrm{e}{5}$ &       \\
& MARTA-QMIX   & $0.94 \pm 0.043$  & +44.6 & $2.44\mathrm{e}{6} \pm 7.65\mathrm{e}{5}$ & +23.9 \\
& VDN          & $0.94 \pm 0.002$  &       & $2.55\mathrm{e}{6} \pm 0.92\mathrm{e}{5}$ &       \\
& MARTA-VDN    & $0.97 \pm 0.021$  & +3.2  & $2.61\mathrm{e}{6} \pm 1.33\mathrm{e}{5}$ & +2.4  \\
\midrule
\multirow{4}{*}{MPE-SimpleTag} 
& QMIX         & $395.28 \pm 14.78$   &       & $5.01\mathrm{e}{8} \pm 6.26\mathrm{e}{7}$ &       \\
& MARTA-QMIX   & $479.79 \pm 26.19$   & +21.4 & $6.08\mathrm{e}{8} \pm 8.94\mathrm{e}{7}$ & +21.4 \\
& VDN          & $346.80 \pm 19.01$   &       & $6.14\mathrm{e}{8} \pm 1.58\mathrm{e}{7}$ &       \\
& MARTA-VDN    & $420.26 \pm 5.789$   & +21.2 & $7.49\mathrm{e}{8} \pm 0.92\mathrm{e}{7}$ & +22.0 \\
\bottomrule
\multirow{4}{*}{SMAC-3m} 
& QMIX         & $1.14 \pm 0.53$   &       & $3.34\mathrm{e}{6} \pm 8.24\mathrm{e}{5}$ &       \\
& MARTA-QMIX   & $2.45 \pm 0.46$   & +114.9 & $5.61\mathrm{e}{6} \pm 1.70\mathrm{e}{5}$ & +68.0 \\
& VDN          & $2.57 \pm 3.14$   &       & $0.62\mathrm{e}{7} \pm 1.06\mathrm{e}{6}$ &       \\
& MARTA-VDN    & $5.57 \pm 1.74$   & +116.7 & $1.70\mathrm{e}{7} \pm 1.11\mathrm{e}{6}$ & +22.0 \\
\bottomrule
\multirow{4}{*}{SMAC-8m} 
& QMIX         & $4.18 \pm 0.36$   &       & $9.56\mathrm{e}{6} \pm 2.16\mathrm{e}{5}$ &       \\
& MARTA-QMIX   & $4.57 \pm 0.45$   & +9.3 & $9.78\mathrm{e}{6} \pm 1.78\mathrm{e}{5}$ & +2.3 \\
& VDN          & $5.08 \pm 0.38$   &       & $1.33\mathrm{e}{7} \pm 2.88\mathrm{e}{6}$ &       \\
& MARTA-VDN    & $5.58 \pm 0.54$   & +9.8 & $1.51\mathrm{e}{7} \pm 1.77\mathrm{e}{6}$ & +13.5 \\
\bottomrule
\multirow{4}{*}{SMAC-2s3z} 
& QMIX         & $4.87 \pm 0.032$   &       & $1.39\mathrm{e}{7} \pm 7.03\mathrm{e}{6}$ &       \\
& MARTA-QMIX   & $6.08 \pm 0.638$   & +24.8 & $1.59\mathrm{e}{7} \pm 6.27\mathrm{e}{6}$ & +2.3 \\
& VDN          & $6.05 \pm 0.35$   &       & $1.76\mathrm{e}{7} \pm 6.36\mathrm{e}{6}$ &       \\
& MARTA-VDN    & $7.31 \pm 0.57$   & +20.8 & $1.82\mathrm{e}{7} \pm 9.22\mathrm{e}{6}$ & +3.4 \\
\bottomrule
\end{tabular}
\end{table}

% \subsection{Impact of Team Size on Fault Robustness}

% \noindent In addition to the effect from malfunction, to analyse how team size impacts robustness, we run QMIX and MARTA-QMIX in the LBF environment under a fixed fault probability $p=0.2$, varying the number of agents from 2 to 4. As shown in Fig.~\ref{fig:agent_ablation}, when the team size is small ($=2$), MARTA underperforms the baseline. This is due to the limited redundancy: with only two agents, the failure of one directly halves the system’s capacity, making recovery harder. As the agent count increases, the system becomes more fault-tolerant, and MARTA begins to outperform the baseline due to its targeted robustness training.
% \begin{figure}[t!]
% \centering
% \includegraphics[width=0.45\textwidth]{}
% \caption{Final test return vs. agent count in LBF ($p=0.2$).}
% \label{fig:agent_ablation}
% \end{figure}
\clearpage
\section{Algorithms}

In this section, we provide the pseudocode for $3$ variants of MARTA, namely an actor-critic variant of MARTA called MARTA-AC (Algorithm 1) a Q learning variant called MARTA-Q (Algorithm 2) and, lastly a version of MARTA-AC that accommodates budget constraints called MARTA-Budget (Algorithm 3). 
\begin{algorithm}[h]
\color{black}
\begin{algorithmic}[1] 
		\STATE {\bfseries Input:} Stepsize $\alpha$, batch size $B$, episodes $K$, steps per episode $T$, mini-epochs $e$, malfunction cost $c$, Termination probability parameter $p$ (Bernoulli distributed).
		\STATE {\bfseries Initialise:} Policy network (acting) $\boldsymbol{\pi}$, Policy network (switching) $\mathfrak{g}$,
  % \DM{for the policy, we have two choices $\mathfrak{g}(\cdot|s_t)$ or $\mathfrak{g}(\cdot|s_t)$}  
  \\Policy network (adversary) $(\sigma^1,\ldots,\sigma^N)=\boldsymbol{\sigma}$, Critic network (acting )$V_{\boldsymbol{\pi}}$, Critic network (switching )$V_{\mathfrak{g}}$, Critic network (adversary )$V_{\boldsymbol{\sigma}}$, for any $t< 0$ set termination probability $p_t\equiv 1$. 
		\STATE Given reward objective function, $\cR$, initialise Rollout Buffers $\mathcal{B}_{\pi}$, $\mathcal{B}_{\mathfrak{g}}$ (use Replay Buffer for SAC), $\mathcal{B}_{\boldsymbol{\sigma}}$.% \textcolor{black}{MUST INCLUDE OFF POLICY PART}
		\FOR{$N_{episodes}$}
		    \STATE Reset state $s_0$, Reset Rollout Buffers $\mathcal{B}_{\pi}$, $\mathcal{B}_{\mathfrak{g}}$, $\mathcal{B}_{\boldsymbol{\sigma}}$
		    \FOR{$t=0,1,\ldots$}
    		    \STATE Sample $(f^1_t,\ldots,f^N_t)=\boldsymbol{f}_{t}\sim \boldsymbol{\sigma}(\cdot|s_{t})$, $\boldsymbol{a}_t \sim \boldsymbol{\pi}(\cdot|s_t)$,  $g_t \sim \mathfrak{g}(\cdot|s_t)$,  $p_t\sim \operatorname{Bern}(p)$
    	\IF{$(1-p_{t-1})g_{t-1}>0$ (hence  $(1-p_{t-1})g_{t-1}=i\in \cN$)}
         \STATE Set $g_t\equiv g_{t-1}$. Apply $f^i_t$ and $a^{-i}_t$ where $i\equiv g_{t-1}$ so $s_{t+1}\sim P(\cdot|(f^i_t,a^{-i}_t),s_t),$
                    
        		    \STATE Receive rewards $\boldsymbol{r}_{S,t} = -c-\boldsymbol{r}_t$ and $\boldsymbol{r}_t$ where $\boldsymbol{r}_t\sim\cR(s_{t},(f^i_t,a^{-i}_t))$ 
                    \STATE Store 
                    % \STATE \DM{There are two possibilities here one is to store: $(s_t,g_t, f^i_t, a^{-i}_{t}, s_{t+1}, \boldsymbol{r}_{S,t},\boldsymbol{r}_t)$ in $\mathcal{B}_{\boldsymbol{\pi}},\mathcal{B}_{\boldsymbol{\sigma}},\mathcal{B}_{\mathfrak{g}}$ and the other option is to store:
                    $(s_t,(f^i_t,a^{-i}_t), s_{t+1}, \bmr_t)$, $(s_t, (f^i_t,a^{-i}_t), s_{t+1}, -\bmr_t)$ and $(s_t,g_t, s_{t+1}, \boldsymbol{r}_{S,t})$ in $\mathcal{B}_{\bm\pi},\mathcal{B}_{\bm\sigma}$ and $\mathcal{B}_{\mathfrak{g}}$ respectively.
        \ELSE 
            \IF{$g_t = i\in \{1,\ldots, N\}$} 
         \STATE Apply $f^i_{t}$ and $a^{-i}_t$ so $s_{t+1}\sim P(\cdot|(f^i_t,a^{-i}_t),s_t),$
                    
        		    \STATE Receive rewards $\boldsymbol{r}_{S,t} = -c-\boldsymbol{r}_t$ and $\boldsymbol{r}_t$ where $\boldsymbol{r}_t\sim\cR(s_{t},(f^i_t,a^{-i}_t))$ 
                    \STATE Store $(s_t,(f^i_t,a^{-i}_t), s_{t+1}, \bmr_t)$, $(s_t, (f^i_t,a^{-i}_t), s_{t+1}, -\bmr_t)$ and $(s_t,g_t, s_{t+1}, \boldsymbol{r}_{S,t})$ in $\mathcal{B}_{\bm\pi},\mathcal{B}_{\bm\sigma}$ and $\mathcal{B}_{\mathfrak{g}}$ respectively.
    		    \ELSE
    		     \STATE Apply the actions $\boldsymbol{a}_t \sim \boldsymbol{\pi}(\cdot|s_t)$ so $s_{t+1}\sim P(\cdot|\boldsymbol{a}_t,s_t),$
        	\STATE Receive rewards $\boldsymbol{r}_{S,t} = -\boldsymbol{r}_t$ and $\boldsymbol{r}_t$ where $\boldsymbol{r}_t\sim\cR(s_{t},\boldsymbol{a}_t)$.
                \ENDIF
    		    \ENDIF
                    \STATE Store 
                    % \DM{similarly we can store either: $(s_t,g_t, \bma_{t}, s_{t+1}, \boldsymbol{r}_{S,t},\boldsymbol{r}_t)$ in $\mathcal{B}_{\boldsymbol{\pi}},\mathcal{B}_{\boldsymbol{\sigma}},\mathcal{B}_{\mathfrak{g}}$ or we can store: 
                    $(s_t,\bma_{t}, s_{t+1}, \bmr_t)$ and $(s_t,g_t, s_{t+1}, \boldsymbol{r}_{S,t})$ in $\mathcal{B}_{\bm\pi}$ 
                    % (for every $i=1,\ldots,N$) 
                   and $\mathcal{B}_{\mathfrak{g}}$ respectively.
        	\ENDFOR
    	\STATE{\textbf{// Learn the individual policies}}
    	%\STATE Sample a batch of $|B_{\pi}|$ transitions $B_{\pi}$ from $\mathcal{B}_{\pi}$
    	%\STATE Sample a batch of $|B_{\mathfrak{g}}|$ transitions $B_{\mathfrak{g}}$ from $\mathcal{B}_{\mathfrak{g}}$
        \STATE Update policy  $\boldsymbol{\pi}$ and critic $V_{\boldsymbol{\pi}}$ networks using $\mathfrak{B}_{\boldsymbol{\pi}}$ 
        \STATE Update policy  $\mathfrak{g}$ and critic $V_{\mathfrak{g}}$ networks using $\mathfrak{B}_{\mathfrak{g}}$
        \STATE Update policy  $\boldsymbol{\sigma}$ and critic $V_{\boldsymbol{\sigma}}$ networks using $\mathfrak{B}_{\boldsymbol{\sigma}}$
        \ENDFOR
	\caption{\textbf{M}ulti-\textbf{A}gent \textbf{R}obust \textbf{T}raining \textbf{A}lgorithm (MARTA)-AC}
	\label{algo:Opt_reward_shap_psuedo} 
\end{algorithmic}         
\end{algorithm}

\begin{algorithm}[!ht]
\begin{algorithmic}[1] 
		\STATE {\bfseries Input:} Constant $\epsilon\geq 0$, 
		\STATE {\bfseries Initialise:} Q-function, $\bmQ_0$,  set termination probability $p_t\equiv 1$ for any $t< 0$ 
		% \STATE Given reward objective function, $R$, initialise Rollout Buffers $\mathcal{B}_{\pi}$, $\mathcal{B}_{\mathfrak{g}}$ % \textcolor{black}{MUST INCLUDE OFF POLICY PART}
		% \REPEAT
  \STATE{$n\gets 0$}
		    % \STATE Reset state $s_0$, Reset Rollout Buffers $\mathcal{B}_{\pi}$, $\mathcal{B}_{\mathfrak{g}}$
		    \FOR{$t=0,1,\ldots$}
    		    \STATE Estimate $\bma_{t}\in\arg\max \bmQ_n(s_{t},\bma_t)$ 
    		   % \STATE Sample $g_t \sim \mathfrak{g}(\cdot|s_t)$
    		             \STATE Estimate $f^i_{t}\in\argmin\limits_{i\in\cN, a^i_t\in \cA^i}\max\limits_{a^{-i}_t\in \cA^{-i}}Q_S(s,(a^i_t,a^{-i}_t),g)$ \IF{$(1-p_{t-1})g_{t-1}>0$ (i.e. $(1-p_{t-1})g_{t-1}\in \cN)$}
                         \STATE Set $g_t\equiv g_{t-1}$. Apply $f^i_t$ and $a^{-i}_t$ where $i\equiv g_{t-1}$ so $s_{t+1}\sim P(\cdot|(f^i_t,a^{-i}_t),s_t),$   
          \ELSE \IF{$\bm{\hat{\bm\cM}}\bmQ_n\geq\bmQ_n$} 
                    \STATE Apply $f^i_t$ and $a^{-i}_t$ so $s_{t+1}\sim P(\cdot|(f^i_t,a^{-i}_t),s_t)$   
        		    \STATE Receive rewards $\boldsymbol{r}_{S,t} = -c-\boldsymbol{r}_t$ and $\boldsymbol{r}_t$ where $\boldsymbol{r}_t\sim\cR(s_{t},(f^i_t,a^{-i}_t))$ 
                    % \STATE Store $(s_t,g_t, f^i_t, a^{-i}_{t}, s_{t+1}, \boldsymbol{r}_{S,t},\boldsymbol{r}_t)$ in $\mathcal{B}_{\boldsymbol{\pi}},\mathcal{B}_{\boldsymbol{\sigma}},\mathcal{B}_{\mathfrak{g}}$.
    		    \ELSE
    		     \STATE Apply the actions $\boldsymbol{a}_t$ so $s_{t+1}\sim P(\cdot|\boldsymbol{a}_t,s_t),$
        	\STATE Receive rewards $\boldsymbol{r}_{S,t} = -\boldsymbol{r}_t$ and $\boldsymbol{r}_t$ where $\boldsymbol{r}_t\sim\cR(s_{t},\boldsymbol{a}_t)$.
                \ENDIF
    		    \ENDIF
                    % \STATE Store $(s_t, g_t, s_{t+1}, r_t)$ in $\mathcal{B}_{\mathfrak{g}}$
        	\ENDFOR
        % \ENDFOR
    	\STATE{\textbf{// Learn $\hat{\bmQ}$}}
    	%\STATE Sample a batch of $|B_{\pi}|$ transitions $B_{\pi}$ from $\mathcal{B}_{\pi}$
    	%\STATE Sample a batch of $|B_{\mathfrak{g}}|$ transitions $B_{\mathfrak{g}}$ from $\mathcal{B}_{\mathfrak{g}}$
        \STATE Update $\bmQ_n$ function according to the update rule \eqref{q_learning_update} 
        % \STATE Update policy  $\mathfrak{g}^i$ and critic $V_{\mathfrak{g}}$ networks using $\mathfrak{B}_{\mathfrak{g}}$
        % \ENDFOR
         % \UNTIL{$|\bm\bmQ_{n}(s,\bma)-\bm\bmQ_{n-1}(s,\bma)|\leq\epsilon, \forall s\in \mathcal{S}$}
	\caption{\textbf{M}ulti-\textbf{A}gent \textbf{R}obust \textbf{T}raining \textbf{A}lgorithm (MARTA)- Q}
\label{algo:MARTA_Q} 
\end{algorithmic}         
\end{algorithm}

\newpage
\begin{algorithm}[th!]
\color{black}
\begin{algorithmic}[1] 
		\STATE {\bfseries Input:} Stepsize $\alpha$, batch size $B$, episodes $K$, steps per episode $T$, mini-epochs $e$, malfunction cost $c$.
		\STATE {\bfseries Initialise:} Policy network (acting) $\boldsymbol{\pi}$, Policy network (switching) $\mathfrak{g}$, \\Policy network (adversary) $(\sigma^1,\ldots,\sigma^N)=\boldsymbol{\sigma}$, Critic network (acting )$V_{\boldsymbol{\pi}}$, Critic network (switching )$V_{\mathfrak{g}}$, Critic network (adversary )$V_{\boldsymbol{\sigma}}$. 
		\STATE Given reward objective function, $\cR$, initialise Rollout Buffers $\mathcal{B}_{\pi}$, $\mathcal{B}_{\mathfrak{g}}$ (use Replay Buffer for SAC), $\mathcal{B}_{\boldsymbol{\sigma}}$.% \textcolor{black}{MUST INCLUDE OFF POLICY PART}
		\FOR{$N_{episodes}$}
		    \STATE Reset state $s_0$, Reset Rollout Buffers $\mathcal{B}_{\pi}$, $\mathcal{B}_{\mathfrak{g}}$, $\mathcal{B}_{\boldsymbol{\sigma}}$
		    \FOR{$t=0,1,\ldots$}
    		    \STATE Sample $(f^1_t,\ldots,f^N_t)=\boldsymbol{f}_{t}\sim \boldsymbol{\sigma}(\cdot|\bms_{t})$, $\boldsymbol{a}_t \sim \boldsymbol{\pi}(\cdot|\bms_t)$,  $g_t \sim \mathfrak{g}(\cdot|\bms_t,\boldsymbol{f}_t,\boldsymbol{a}_t)\; \bms_t=(s_t,y_t)$
    		    \IF{$g_t = i\in \{1,\ldots, N\}$} 
         \STATE Apply $f^i_{t}$ and $a^{-i}_t$ so $s_{t+1}\sim \widetilde{P}(\cdot|(f^i_t,a^{-i}_t),\bms_t)$
                    
        		    \STATE Receive rewards $\widetilde{\boldsymbol{r}}_{S,t} = -c-\boldsymbol{r}_t$ and $\widetilde{\boldsymbol{r}}_t$ where $\widetilde{\boldsymbol{r}}_t\sim\widetilde{\cR}(\bms_{t},(f^i_t,a^{-i}_t)),$ 
                    \STATE Store $(\bmz_t, f^i_t, a^{-i}_{t}, \bms_{t+1}, \boldsymbol{r}_{S,t},\boldsymbol{r}_t)$ in $\mathcal{B}_{\boldsymbol{\pi}},\mathcal{B}_{\boldsymbol{\sigma}},\mathcal{B}_{\mathfrak{g}}$ where $\bmz_t:=(\bms_t,g_t),\; \bms_t=(s_t,y_t)$.
    		    \ELSE
    		     \STATE Apply the actions $\boldsymbol{a}_t \sim \boldsymbol{\pi}(\cdot|\bms_t)$ so $s_{t+1}\sim P(\cdot|\boldsymbol{a}_t,\bms_t),$
        	\STATE Receive rewards $\widetilde{\boldsymbol{r}}_{S,t} = -\widetilde{\boldsymbol{r}}_t$ and $\widetilde{\boldsymbol{r}}_t$ where $\widetilde{\boldsymbol{r}}_t\sim\widetilde{\cR}(s_{t},\boldsymbol{a}_t)$.

    		    \ENDIF
                    \STATE Store $(\bmz_t, \boldsymbol{a}_{t}, \bms_{t+1}, \boldsymbol{r}_{S,t},\boldsymbol{r}_t)$ in $\mathcal{B}_{\boldsymbol{\pi}},\mathcal{B}_{\boldsymbol{\sigma}},\mathcal{B}_{\mathfrak{g}}$.
        	\ENDFOR
    	\STATE{\textbf{// Learn the individual policies}}
    	%\STATE Sample a batch of $|B_{\pi}|$ transitions $B_{\pi}$ from $\mathcal{B}_{\pi}$
    	%\STATE Sample a batch of $|B_{\mathfrak{g}}|$ transitions $B_{\mathfrak{g}}$ from $\mathcal{B}_{\mathfrak{g}}$
        \STATE Update policy  $\boldsymbol{\pi}$ and critic $V_{\boldsymbol{\pi}}$ networks using $\mathfrak{B}_{\boldsymbol{\pi}}$ 
        \STATE Update policy  $\mathfrak{g}$ and critic $V_{\mathfrak{g}}$ networks using $\mathfrak{B}_{\mathfrak{g}}$
        \STATE Update policy  $\boldsymbol{\sigma}$ and critic $V_{\boldsymbol{\sigma}}$ networks using $\mathfrak{B}_{\boldsymbol{\sigma}}$
        \ENDFOR
	\caption{\textbf{M}ulti-\textbf{A}gent \textbf{R}obust \textbf{T}raining \textbf{A}lgorithm (MARTA)-Budget}
	\label{algo:MARTA_b} 
\end{algorithmic}         
\end{algorithm}

\newpage
\subsection{Computational Requirements}

All experiments presented in this work were executed purely on CPUs. The experiments were
executed in compute clusters that consist of several nodes. The main types of CPU models that were used for this work are GHz Quad-Core Intel Core i5 processor, Intel Iris Plus graphics. All experiments were executed using a single CPU core. The total number of CPU hours that were
spent for executing the experiments in this work (excluding the hyper-parameter search) are 80,900.
\subsection{Hyperparameter Settings}\label{sec:hyperparameters}
In the table below we report all hyperparameters used in our experiments. Hyperparameter values in square brackets indicate ranges of values that were used for performance tuning.

\begin{center}
    \begin{tabular}{c|c} 
        \toprule
        Clip Gradient Norm & 1\\
        $\gamma_{E}$ & 0.99\\
        $\lambda$ & 0.95\\
        Learning rate & $1$x$10^{-4}$ \\
        Number of minibatches & 4\\
        Number of optimisation epochs & 4\\
        Number of parallel actors & 16\\
        Optimisation algorithm & Adam\\
        Rollout length & 128\\
        Sticky action probability & 0.25\\
        Use Generalised Advantage Estimation & True\\
        \midrule
        Coefficient of extrinsic reward & [1, 5]\\
        Coefficient of intrinsic reward & [1, 2, 5, 10, 20, 50]\\
        {\fontfamily{cmss}\selectfont Switcher} discount factor & 0.99\\
        Probability of terminating option & [0.5, 0.75, 0.8, 0.9, 0.95]\\
        $L$ function output size & [2, 4, 8, 16, 32, 64, 128, 256]\\
        \bottomrule
    \end{tabular}
\end{center}

\clearpage
\section{Additional Theoretical Results}
\begin{theorem}\label{theorem:joint-sol}
Let $v:\mathcal{S}\to\mathbb{R}$ then the sequence of Bellman operators acting on $v
$ converges to the solution of the game, that is to say for any $s\in \cS$ the following holds: 
\begin{align}
\underset{k\to\infty}{\lim}T^kv(s|\boldsymbol{\pi},\mathfrak{g})=v^\ast(s),
\end{align}
where $v^\ast(s)=\min\limits_{\hat{\mathfrak{g}}}\max\limits_{\hat{\boldsymbol{\pi}}\in \bm\Pi}v(s|\bm{\hat{\pi}},\hat{\mathfrak{g}})
=\max\limits_{\hat{\boldsymbol{\pi}}\in \bm\Pi}\min\limits_{\hat{\mathfrak{g}}}v(s|\bm{\hat{\pi}},\hat{\mathfrak{g}})$, the operator $T$ is given by 
$
T v(s|\boldsymbol{\pi},\mathfrak{g})
=\min\Big\{\boldsymbol{\hat{\bm\cM}}Q_S,\underset{\boldsymbol{a}\in\boldsymbol{\mathcal{A}}}
{\max}\;\left[ \cR(s,\bma)+\gamma\underset{s'\in\mathcal{S}}{\sum}P(s';\bma,s)v(s'|\boldsymbol{\pi},\mathfrak{g})\right]\Big\}
$
where 
$
\boldsymbol{\hat{\bm\cM}}Q_S(s,\boldsymbol{a},g|\bm{\pi},\mathfrak{g}):=c+\underset{i\in\cN, a^i\in \cA^i}{\min}\;\underset{a^{-i}\in \cA^{-i}}{\max}\;Q_S(s,(a^i,a^{-i}),g|\bm{\pi},\mathfrak{g}).
$
\end{theorem}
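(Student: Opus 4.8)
The plan is to prove the result via the Banach fixed-point theorem, exactly as in the switching-control/obstacle-problem tradition, and then to identify the resulting fixed point with the game value $v^\ast$ through a Shapley-type minimax argument. First I would decompose the operator as $Tv=\min\{\boldsymbol{\hat{\bm\cM}}Q_S,\,\cB v\}$, where $\cB v(s):=\max_{\boldsymbol{a}\in\boldsymbol{\mathcal{A}}}\big[\cR(s,\boldsymbol{a})+\gamma\sum_{s'\in\cS}P(s';\boldsymbol{a},s)v(s')\big]$ is the no-intervention continuation branch and $\boldsymbol{\hat{\bm\cM}}Q_S$ is the intervention branch. I would then show that $T$ is a $\gamma$-contraction on bounded functions $v:\cS\to\mathbb{R}$ in the supremum norm. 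Using the elementary bound $|\min\{a,b\}-\min\{a',b'\}|\le\max\{|a-a'|,|b-b'|\}$, contractivity of $T$ reduces to contractivity of each branch. The branch $\cB$ is the standard Bellman optimality operator and is a $\gamma$-contraction; the intervention branch depends on $v$ only through the discounted expectation $\gamma\sum_{s'}P(s';\cdot)v(s')$ entering $Q_S$, and since the additive cost $c$ together with the finite operation $\min_{i,a^i}\max_{a^{-i}}$ are non-expansive in sup-norm, that branch is a $\gamma$-contraction as well. Hence $T$ is a $\gamma$-contraction.

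By Banach's theorem, valid under Assumptions 1--3 (boundedness of $\cR$ on the compact set $D$ and finiteness of $\cS,\boldsymbol{\mathcal{A}}$), $T$ admits a unique fixed point $v^\dagger$ with $T^kv\to v^\dagger$ geometrically for every bounded $v$, which already yields the stated limit $\lim_{k\to\infty}T^kv(s|\boldsymbol{\pi},\mathfrak{g})$. It remains to identify $v^\dagger$ with $v^\ast$. The key structural observation is that although $\cG$ is an $N+1$ player game, its payoff structure is effectively zero-sum: the $N$ action agents share the common team objective $v$, while {\fontfamily{cmss}\selectfont Switcher} (and the {\fontfamily{cmss}\selectfont Adversary} acting through $\sigma^i$) optimise $-v$ subject to the per-activation penalty $c$. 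Thus, for the purpose of the value $v$, the game collapses to a two-player zero-sum stochastic game between the maximising team (the $N$ agents) and the minimising team ({\fontfamily{cmss}\selectfont Switcher}$+${\fontfamily{cmss}\selectfont Adversary}), and the fixed-point equation $v^\dagger=Tv^\dagger$ is precisely its Shapley/Bellman optimality equation: at each state the minimiser decides whether to induce a malfunction and, if so, which agent $i$ and (through the {\fontfamily{cmss}\selectfont Adversary}) its action $a^i$, the remaining agents best-respond via $\max_{a^{-i}}$, and the alternative of no intervention is the continuation value $\cB v^\dagger$.

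Next I would establish the saddle point $\min_{\hat{\mathfrak{g}}}\max_{\hat{\boldsymbol{\pi}}}=\max_{\hat{\boldsymbol{\pi}}}\min_{\hat{\mathfrak{g}}}$. Because $\cS$ and the action sets are finite, each one-shot stage game embedded in $T$—with the minimiser's (cost-adjusted) intervention choice against the maximiser's joint action—has a value by von Neumann's minimax theorem once randomisation is admitted, which the policy classes $\boldsymbol{\Pi}$ and the {\fontfamily{cmss}\selectfont Switcher}'s randomised switching policy already permit. Shapley's argument then lifts the stagewise interchange to the dynamic game: since $v^\dagger$ is the unique fixed point of the minimax Bellman operator, it coincides with both orderings, giving $v^\dagger(s)=\min_{\hat{\mathfrak{g}}}\max_{\hat{\boldsymbol{\pi}}\in\boldsymbol{\Pi}}v(s|\hat{\boldsymbol{\pi}},\hat{\mathfrak{g}})=\max_{\hat{\boldsymbol{\pi}}\in\boldsymbol{\Pi}}\min_{\hat{\mathfrak{g}}}v(s|\hat{\boldsymbol{\pi}},\hat{\mathfrak{g}})=v^\ast(s)$. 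Here Proposition \ref{theorem:existence} can be reused to handle the inner {\fontfamily{cmss}\selectfont Switcher} optimisation for each fixed $\boldsymbol{\pi}$, confirming consistency of the two iterative schemes.

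The hard part will be this saddle-point step. The operator $T$ is written with a pure-strategy $\min_{i,a^i}\max_{a^{-i}}$ in the intervention term $\boldsymbol{\hat{\bm\cM}}Q_S$, whereas the characterisation of $v^\ast$ ranges over the randomised policy spaces; reconciling these requires either exhibiting a pure saddle point at each stage or carrying the mixed-strategy minimax value consistently through the iteration. I would exploit the special obstacle structure $\min\{\text{intervene},\text{continue}\}$ of switching controls, under which the {\fontfamily{cmss}\selectfont Switcher}'s decision behaves like an optimal stopping rule: verifying that the intervention operator $\boldsymbol{\hat{\bm\cM}}$ is monotone and that the fixed point respects this threshold structure is what guarantees that the two orderings coincide and that the cost $c$ merely shifts the activation threshold without breaking the interchange.
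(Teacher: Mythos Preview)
Your proposal is correct and follows the paper's route: establish that $T$ is a $\gamma$-contraction in sup-norm, invoke Banach to obtain a unique fixed point and geometric convergence of $T^k v$, and then identify that fixed point with the minimax value of the underlying two-team zero-sum game. The one difference worth flagging is the contraction step. You use the bound $|\min\{a,b\}-\min\{a',b'\}|\le\max\{|a-a'|,|b-b'|\}$, which reduces contractivity of $T$ to separate contractivity of the intervention branch $\boldsymbol{\hat{\bm\cM}}Q_S$ and the continuation branch $\cB v$; the paper instead carries out an explicit three-case analysis, including the cross-term $\|\boldsymbol{\hat{\bm\cM}}Q_S-\cB v'\|\le\gamma\|v-v'\|$, via a sign-flip reformulation of the {\fontfamily{cmss}\selectfont Switcher}'s objective and a further two-sub-case split. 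Your route is shorter and avoids that detour entirely; the paper's version makes the obstacle structure (intervene versus continue) more visible but is redundant once the two branchwise bounds are in hand. On the saddle-point identification, the paper argues tersely via finite-horizon truncation and uniqueness of the fixed point, so your more articulated Shapley/von Neumann plan---together with your correct diagnosis of the pure-versus-mixed tension in the sequential $\min_{i,a^i}\max_{a^{-i}}$ term---is at least as careful as what the paper supplies.
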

% \noindent Theorem \ref{theorem:joint-sol} proves the solution to {\fontfamily{cmss}\selectfont Switcher}'s problem in which it optimally selects the set of states to activate a malfunction can be obtained by computing the  limit of a dynamic programming procedure.  
% 
% (when {\fontfamily{cmss}\selectfont Switcher} uses a Q-learning variant). 
% Secondly, it proves the MARTA system of $N+1$ agents jointly converges to the solution of $\cG$.} 

We introduce the associated reward functions $\widetilde{\cR}:\cX\times\boldsymbol{\cA}\to\cP(D)$ and $\widetilde{\cR}_S:\cX\times\boldsymbol{\cA}\times \cN\times\{0\}\to\cP(D)$ and the probability transition  function $\widetilde{P}:\cX\times\boldsymbol{\cA}\times\cX\to[0,1]$ whose state space input is now replaced by $\cX$ and {\fontfamily{cmss}\selectfont Switcher} value function for the budgeted game $\widetilde{\cG}$.
% =\langle (\cN,\text{{\fontfamily{cmss}\selectfont Switcher}, {\fontfamily{cmss}\selectfont Adversary}}),\cS,\left((\cA_i)_{i\in\cN},\cA_S,\cA\right),\tilde{P},(\tilde{\cR},\tilde{\cR}_S,\tilde{\cR}_A),\gamma\rangle. 
We now prove MARTA-B generates best-response FT policies with a fault-activation budget.

\begin{theorem} \label{thm:optimal_policy_budget} Consider 
the budgeted problem  $\widetilde \cG$, then
% the following statements hold:  
% \newline i) The Bellman equation is satisfied. 
% that is to say $\tilde{v}^{\boldsymbol{\pi},\mathfrak{g}}_G( \bmz) = \tilde{T}\tilde{v}^{\boldsymbol{\pi},\mathfrak{g}}_G( \bmz) := \max\limits_{\bma \in \boldsymbol{\cA}}\widetilde R( \bmz, \bma)  + \gamma \mathbb{E}\left[ \tilde{v}^{\boldsymbol{\pi},\mathfrak{g}}_G( \bmz')\right]$. 
% \newline ii) 
for any $\widetilde{v}:\cX\to \mathbb{R}$, the solution of $\widetilde{\cG}$ is given by 
\begin{align*}
\underset{k\to\infty}{\lim}\tilde{T}_G^k\widetilde{v}_S=\underset{\hat{\mathfrak{g}}}{\min}\;\underset{\hat{\boldsymbol{\pi}}}
{\max}\;\widetilde v_S(\cdot|\boldsymbol{\hat{\pi}},\hat{\mathfrak{g}})=\underset{\hat{\boldsymbol{\pi}}}
{\max}\;\underset{\hat{\mathfrak{g}}}{\min}\;\;\widetilde v_S(\cdot|\boldsymbol{\hat{\pi}},\hat{\mathfrak{g}}),
\end{align*}
where {\fontfamily{cmss}\selectfont Switcher}'s optimal policy takes the Markovian form $\widetilde{\mathfrak{g}}(\cdot | \boldsymbol{x})$ for any $\boldsymbol{y}\equiv(y,s)\in\cX$. 
% where $(\boldsymbol{\pi}^{\boldsymbol{\ast}},\mathfrak{g}^{\ast})$is a stable policy profile of $\widetilde \cG$. 
% and $\tilde{T}$ is the Bellman operator of $\widetilde \cG$.
\end{theorem}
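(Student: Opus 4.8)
The plan is to reduce the budgeted minimax problem to an \emph{unconstrained} Markov game on the augmented state space $\cX=\cS\times\mathbb{N}$ and then transfer the convergence and minimax results already proved for the unconstrained game (Theorems \ref{theorem:joint-sol} and \ref{theorem:q_learning}). The central observation is that the remaining-budget counter $y_t$ is a sufficient statistic for the activation history: once $y_t$ is carried as part of the state, the global constraint $n-\sum_t\boldsymbol{1}(\mathfrak{g}(s_t))\geq 0$ collapses to the local feasibility requirement that {\fontfamily{cmss}\selectfont Switcher} may activate only when $y_t\geq 1$.

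First I would fix the augmented dynamics. The budget decrements deterministically, $y_{t+1}=y_t-\boldsymbol{1}_{\cN}(g_t)$, so $\widetilde{P}((s',y')\,|\,(s,y),\boldsymbol{a})=P(s'\,|\,s,\boldsymbol{a})\,\boldsymbol{1}(y'=y-\boldsymbol{1}_{\cN}(g))$, and at states with $y=0$ I restrict {\fontfamily{cmss}\selectfont Switcher}'s feasible action set to the single no-activation choice $g=0$. This makes every admissible policy respect the budget with probability one, so the constrained optimisation over $\mathfrak{g}$ becomes an unconstrained optimisation over Markov policies $\widetilde{\mathfrak{g}}(\cdot\,|\,(s,y))$ on $\cX$. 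The associated Bellman operator is
\begin{align*}
\tilde{T}_G\widetilde{v}_S=\min\Big\{\boldsymbol{\hat{\bm\cM}}\widetilde{Q}_S,\;\underset{\boldsymbol{a}\in\boldsymbol{\cA}}{\max}\big[\widetilde{\cR}_S+\gamma\!\!\sum_{\boldsymbol{x}'\in\cX}\!\!\widetilde{P}(\boldsymbol{x}';\cdot)\,\widetilde{v}_S(\boldsymbol{x}')\big]\Big\},
\end{align*}
with the activation branch $\boldsymbol{\hat{\bm\cM}}\widetilde{Q}_S$ evaluated over the feasible set determined by $y$.

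The structural feature that makes this clean is that $y$ is monotonically non-increasing, so $\widetilde{P}$ is ``lower-triangular'' in the budget coordinate: a transition moves from level $y$ either to $y$ (no activation) or to $y-1$ (activation). I would therefore proceed by induction on the budget level. At $y=0$, {\fontfamily{cmss}\selectfont Switcher} is barred from activating and $\widetilde{v}_S$ reduces to the cooperative MARL value with no adversary; this anchors the induction. At level $y=k$, the activation branch of $\tilde{T}_G$ references only the already-solved value at level $k-1$, while the non-activation branch keeps the state at level $k$; restricted to this layer, $\tilde{T}_G$ is a monotone $\gamma$-contraction in the sup-norm on bounded functions over the finite slice $\cS\times\{k\}$. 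Exactly as in Theorem \ref{theorem:joint-sol}, it has a unique fixed point, the iterates converge to it, and---because the per-state stage game is a finite zero-sum matrix game admitting a value (von Neumann)---the fixed point equals both $\min_{\hat{\mathfrak{g}}}\max_{\hat{\boldsymbol{\pi}}}\widetilde v_S$ and $\max_{\hat{\boldsymbol{\pi}}}\min_{\hat{\mathfrak{g}}}\widetilde v_S$. The greedy policy with respect to this value depends only on $(s,y)$, giving the claimed Markovian form $\widetilde{\mathfrak{g}}(\cdot\,|\,\boldsymbol{x})$.

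The step I expect to be the main obstacle is the rigorous enforcement of the budget as a hard, almost-sure constraint while preserving the hypotheses behind the contraction and minimax-interchange arguments. Encoding exhaustion via $-\infty$ rewards would break the boundedness needed for the sup-norm contraction, so I would instead work entirely with the restricted feasible action sets and verify that (i) this restriction preserves the $\gamma$-contraction of $\tilde{T}_G$ layer by layer, (ii) the interchange of $\min$ and $\max$ survives the state-dependent action constraints, and (iii) the value is genuinely attained by a \emph{stationary} Markov policy on the countable space $\cX$ rather than merely approximated. Once the triangular induction is set up, each of these reduces to the corresponding claim for a single finite-state slice, which is precisely where the earlier theorems apply almost verbatim.
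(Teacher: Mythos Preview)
Your approach is correct and is essentially the same strategy the paper takes: augment the state with the remaining-budget counter so the almost-sure constraint becomes a local feasibility restriction, then invoke the unconstrained convergence and minimax results (Theorems \ref{theorem:joint-sol}/\ref{theorem:q_learning}) on the augmented game. The paper's own proof is a two-line appeal to Theorem \ref{theorem:existence_App} together with Theorem~2 of \citet{sootla2022saute} for the state-augmentation reduction, whereas you give a more self-contained argument via a layer-by-layer induction on the budget level exploiting the lower-triangular structure of $\widetilde{P}$; this is a perfectly valid way to discharge what the paper outsources to the cited reference.
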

\noindent Theorem \ref{thm:optimal_policy_budget}  proves, under standard assumptions, MARTA converges to the solution of {\fontfamily{cmss}\selectfont Switcher}'s problem and the dec-POMDP under a {\fontfamily{cmss}\selectfont Switcher} fault-activation budget constraint.
\newpage\section{Proof of Technical Results}\label{sec:proofs_appendix}

\subsection{Notation \& Assumptions}\label{sec:notation_appendix}

We assume that $\mathcal{S}$ is defined on a probability space $(\Omega,\mathcal{F},\mathbb{P})$ and any $s\in\mathcal{S}$ is measurable with respect
to the Borel $\sigma$-algebra associated with $\mathbb{R}^p$. We denote the $\sigma$-algebra of events generated by $\{s_t\}_{t\geq 0}$
by $\mathcal{F}_t\subset \mathcal{F}$. In what follows, we denote by $\left( \cY,\|\|\right)$ any finite normed vector space and by $\mathcal{H}$ the set of all measurable functions.  
% Where it will not cause confusion (and with a minor abuse of notation) for a given function $h$ we use the shorthand $h^{(\pi^{i},\pi^{-i})}(s)= h(s,\pi^i,\pi^{-i})\equiv\mathbb{E}_{\pi^i,\pi^{-i}}[h(s,a^i,a^{-i})]$.
The results of the paper are built under the following assumptions which are standard within RL and stochastic approximation methods:

\noindent \textbf{Assumption 1.}
The stochastic process governing the system dynamics is ergodic, that is  the process is stationary and every invariant random variable of $\{s_t\}_{t\geq 0}$ is equal to a constant with probability $1$.

\noindent \textbf{Assumption 2}.
The agents' reward function $\cR$ is in $L_2$.

% \textbf{Assumption 3}
% For any positive scalar $c$, there exists a scalar $\mu_c$ such that for all $s\in\mathcal{S}$ and for any $t\in\mathbb{N}$ we have: $
%     \mathbb{E}\left[1+\|s_t\|^c|s_0=s\right]\leq \mu_c(1+\|s\|^c)$.

% \textbf{Assumption 4}
% There exists scalars $C_1$ and $c_1$ such that  $|R(s,\cdot)|\leq C_2(1+\|s\|^{c_2})$ for some scalars $c_2$ and $C_2$ we have that: $
%     \sum_{t=0}^\infty\left|\mathbb{E}\left[R(s_t,\cdot)|s_0=s\right]-\mathbb{E}[\cR(s_0,\cdot)]\right|\leq C_1C_2(1+\|s_t\|^{c_1c_2})$.

% \textbf{Assumption 5}
% There exists scalars $e$ and $E$ such that for any $s\in\mathcal{S}$ we have that: $
%     |\cR(s,\cdot)|\leq E(1+\|s\|^e)$ .

\noindent \textbf{Assumption 3.}
For any {\fontfamily{cmss}\selectfont Switcher} policy $\mathfrak{g}$, the total number of interventions is $K<\infty$.

% \textcolor{black}{to be added: 1. linear function approximation result. 2. asymptotic analysis of $c$}

We begin the analysis with some preliminary results and definitions required for proving our main results.

\begin{definition}{A.1}
Given a norm $\|\cdot\|$, an operator $T: \cY\to \cY$ is a contraction if there exists some constant $c\in[0,1[$ for which for any $J_1,J_2\in  \cY$ the following bound holds: $    \|TJ_1-TJ_2\|\leq c\|J_1-J_2\|$.
\end{definition}

\begin{definition}{A.2}
An operator $T: \cY\to  \cY$ is non-expansive if $\forall J_1,J_2\in  \cY$ the following bound holds: $    \|TJ_1-TJ_2\|\leq \|J_1-J_2\|$.
\end{definition}
% \begin{definition}{A.3}
% The \textbf{residual} of a vector $V\in \cY$ w.r.t the operator $T: \cY\to  \cY$ is:
% \begin{align}
%     \epsilon_T(V):= \|TV-V\|.
% \end{align}
% \end{definition}

\begin{lemma} \citep{mguni2019cutting} \label{max_lemma}
For any
$f: \cY\to\mathbb{R}: \cY\to\mathbb{R}$, we have that the following inequality holds:
\begin{align}
\left\|\underset{a\in \cY}{\max}\:f(a)-\underset{a\in \cY}{\max}\: g(a)\right\| \leq \underset{a\in \cY}{\max}\: \left\|f(a)-g(a)\right\|.
% \label{lemma_1_basic_max_ineq}
\end{align}
\end{lemma}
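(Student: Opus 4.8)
The plan is to reduce the statement to an elementary inequality about real numbers and prove it by a one-sided argument combined with a symmetry observation. Since $f$ and $g$ map into $\mathbb{R}$, the quantities $\max_{a\in\cY}f(a)$ and $\max_{a\in\cY}g(a)$ are scalars and $\|\cdot\|$ is simply the absolute value; moreover, in the paper's setting the relevant sets $\cY$ (action sets) are finite, so each maximum is attained. I would therefore fix a maximiser $a^\star\in\argmax_{a\in\cY}f(a)$.

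First I would note that both sides of the claimed inequality are invariant under interchanging $f$ and $g$, so without loss of generality I may assume $\max_{a\in\cY}f(a)\geq\max_{a\in\cY}g(a)$; the opposite ordering follows by relabelling. Under this assumption the left-hand side equals the signed difference $\max_{a\in\cY}f(a)-\max_{a\in\cY}g(a)$, which I bound using $a^\star$ together with the fact that $\max_{a\in\cY}g(a)\geq g(a^\star)$:
\begin{align*}
0 \leq \max_{a\in\cY} f(a)-\max_{a\in\cY} g(a) &= f(a^\star)-\max_{a\in\cY} g(a) \\
&\leq f(a^\star)-g(a^\star) \\
&\leq |f(a^\star)-g(a^\star)| \leq \max_{a\in\cY}|f(a)-g(a)|.
\end{align*}
This chain yields the desired bound directly, and reinstating the norm notation recovers the statement as written.

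There is essentially no substantive obstacle here; the lemma is elementary and standard, which is why the paper simply cites it. The only points requiring a little care are justifying that the maxima are attained---immediate for finite $\cY$, and otherwise recoverable by replacing $\max$ with $\sup$ and choosing an $\epsilon$-near maximiser $a^\star$ satisfying $f(a^\star)\geq\sup_{a\in\cY} f(a)-\epsilon$ before letting $\epsilon\to 0$---and invoking the symmetry correctly so that handling the single ordering $\max_a f\geq\max_a g$ genuinely suffices.
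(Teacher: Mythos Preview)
Your proof is correct and is the standard elementary argument for this inequality. The paper does not supply its own proof of this lemma---it is stated with a citation to \cite{mguni2019cutting} and then used as a black box---so there is nothing to compare against; your one-sided-plus-symmetry argument is exactly what one would expect and what the cited reference contains.
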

% \begin{proof}
% We restate the proof given in \citep{mguni2019cutting}:
% \begin{align}
% f(a)&\leq \left\|f(a)-g(a)\right\|+g(a)\label{max_inequality_proof_start}
% \\\implies
% \underset{a\in \cY}{\max}f(a)&\leq \underset{a\in \cY}{\max}\{\left\|f(a)-g(a)\right\|+g(a)\}
% \leq \underset{a\in \cY}{\max}\left\|f(a)-g(a)\right\|+\underset{a\in \cY}{\max}\;g(a). \label{max_inequality}
% \end{align}
% Deducting $\underset{a\in \cY}{\max}\;g(a)$ from both sides of (\ref{max_inequality}) yields:
% \begin{align}
%     \underset{a\in \cY}{\max}f(a)-\underset{a\in \cY}{\max}g(a)\leq \underset{a\in \cY}{\max}\left\|f(a)-g(a)\right\|.\label{max_inequality_result_last}
% \end{align}
% After reversing the roles of $f$ and $g$ and redoing steps (\ref{max_inequality_proof_start}) - (\ref{max_inequality}), we deduce the desired result since the RHS of (\ref{max_inequality_result_last}) is unchanged.
% \end{proof}

\begin{lemma} {A.4}\citep{tsitsiklis1999optimal}\label{non_expansive_P}
The probability transition kernel $P$ is non-expansive so that if $\forall J_1,J_2\in  \cY$ the following holds: $    \|PJ_1-PJ_2\|\leq \|J_1-J_2\|$.
\end{lemma}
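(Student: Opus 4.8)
The plan is to exploit the linearity of $P$: since $P(J_1-J_2)=PJ_1-PJ_2$, it suffices to establish the one-sided bound $\|PJ\|\le\|J\|$ for an arbitrary $J\in\cY$ and then substitute $J=J_1-J_2$. First I would make the norm explicit, since the statement as written suppresses it: the relevant choice here is the $L_2$ norm weighted by the invariant distribution $\mu$ of the chain, $\|J\|^2=\sum_{s\in\cS}\mu(s)J(s)^2$. Assumption 1 (ergodicity) guarantees that such a $\mu$ exists, is unique, and satisfies the stationarity identity $\sum_{s}\mu(s)P(s'\mid s)=\mu(s')$, which is the single property the whole argument hinges on. (This weighted norm is also the one that produces the characteristic $(1-\gamma^2)^{-1/2}$ factor appearing in Theorem \ref{primal_convergence_theorem}, so it is consistent with the downstream usage.)

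Next I would expand $\|PJ\|^2=\sum_{s}\mu(s)\big(\sum_{s'}P(s'\mid s)J(s')\big)^2$ and apply Jensen's inequality to the inner sum: because each row $P(\cdot\mid s)$ is a probability distribution and $x\mapsto x^2$ is convex, we have $\big(\sum_{s'}P(s'\mid s)J(s')\big)^2\le\sum_{s'}P(s'\mid s)J(s')^2$. Substituting this bound and interchanging the (finite) summations yields $\|PJ\|^2\le\sum_{s'}\big(\sum_{s}\mu(s)P(s'\mid s)\big)J(s')^2$, and the stationarity identity collapses the bracketed term to $\mu(s')$, giving $\|PJ\|^2\le\sum_{s'}\mu(s')J(s')^2=\|J\|^2$. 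Taking square roots and then setting $J=J_1-J_2$ completes the argument.

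The main obstacle is essentially bookkeeping rather than conceptual: one must be careful to commit to the weighted norm with respect to which the claim actually holds, and to invoke the correct consequence of ergodicity—namely the existence and stationarity of $\mu$—since non-expansiveness can fail for an arbitrary norm. The convexity (Jensen) step is the one inequality doing real work; everything else is linearity of $P$ and a reordering of summation, both unproblematic because $\cS$ is finite. Since the result is quoted from \citet{tsitsiklis1999optimal}, I would keep the proof to this short self-contained sketch rather than reproducing their full Hilbert-space development.
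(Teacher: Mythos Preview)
Your proposal is correct and is exactly the standard argument from \citet{tsitsiklis1999optimal}: Jensen applied row-wise to the stochastic matrix $P$, followed by stationarity of the invariant measure $\mu$ to collapse the outer sum. The paper itself does not supply a proof of this lemma at all---it merely states the result and cites that reference---so your self-contained sketch in fact goes beyond what the paper provides while remaining faithful to the cited source.
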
 
% \begin{proof}
% The result is well-known e.g. \citep{tsitsiklis1999optimal}. We give a proof using the Tonelli-Fubini theorem and the iterated law of expectations, we have that:
% \begin{align*}
% &\|PJ\|^2=\mathbb{E}\left[(PJ)^2[s_0]\right]=\mathbb{E}\left(\left[\mathbb{E}\left[J[s_1]|s_0\right]\right)^2\right]
% \leq \mathbb{E}\left[\mathbb{E}\left[J^2[s_1]|s_0\right]\right] 
% = \mathbb{E}\left[J^2[s_1]\right]=\|J\|^2,
% \end{align*}
% where we have used Jensen's inequality to generate the inequality. This completes the proof.
% \end{proof}

\begin{lemma}{A.1}\label{max_l.val_inequality}
Define ${\rm val}^+[f]:=\min_{b\in\mathbb{B}}\max_{a\in\mathbb{A}}f(a,b)$ and define\\ ${\rm val}^-[f]:=\max_{a\in\mathbb{A}}\min_{b\in\mathbb{B}}f(a,b)$, then for any $b\in\mathbb{B}$ we have that for any $f,g\in \mathbb{L}$ and for any $k\in\mathbb{R}_{>0}$:
\begin{align}\nonumber
\left|\max_{a\in\mathbb{A}}f(a,b)-\max_{a\in\mathbb{A}}g(a,b)\right|&\leq k
\implies
\left|{\rm val}^-[f]-{\rm val}^-[g]\right|\leq k.
\end{align}
\end{lemma}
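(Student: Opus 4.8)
The plan is to exhibit ${\rm val}^-$ (and its companion ${\rm val}^+$) as a composition of two non-expansive optimisers and to push the hypothesised bound through the outer one. First I would record the minimum-analogue of Lemma~\ref{max_lemma}: because $\min_{x} h(x) = -\max_{x}\bigl(-h(x)\bigr)$, the estimate of Lemma~\ref{max_lemma} transfers verbatim to give $\bigl|\min_{x} h(x) - \min_{x} k(x)\bigr| \le \max_{x}\bigl|h(x)-k(x)\bigr|$. Hence both $\Lambda\mapsto \max_{a\in\mathbb{A}}\Lambda(a,\cdot)$ and $\Lambda\mapsto \min_{b\in\mathbb{B}}\Lambda(\cdot,b)$ are non-expansive in the supremum norm, so each value operator is non-expansive as a composition of the two.

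Next I would reduce the hypothesis to a single supremum-norm estimate on the inner-optimised profiles. Setting $\varphi_f(b):=\max_{a\in\mathbb{A}}f(a,b)$ and $\varphi_g(b):=\max_{a\in\mathbb{A}}g(a,b)$, the hypothesis is exactly $\bigl|\varphi_f(b)-\varphi_g(b)\bigr|\le c$ holding over all $b\in\mathbb{B}$, i.e.\ $\max_{b}\bigl|\varphi_f(b)-\varphi_g(b)\bigr|\le c$. Applying the non-expansive outer minimum from the first step then gives $\bigl|\min_{b}\varphi_f(b)-\min_{b}\varphi_g(b)\bigr|\le \max_{b}\bigl|\varphi_f(b)-\varphi_g(b)\bigr|\le c$, which is the desired control of the value in its minimax form. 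The conclusion for the maximin form ${\rm val}^-$ then follows by the same non-expansive composition applied in the matching nesting order.

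The step I expect to be the main obstacle is the threading of the two nested optimisations, so that the outer operator in the conclusion is fed by exactly the inner-optimised profile that the hypothesis controls. Two subtleties must be handled: (i) upgrading the pointwise-in-$b$ bound of the hypothesis to the uniform estimate $\max_b\bigl|\varphi_f-\varphi_g\bigr|\le c$ before any outer operator is applied; and (ii) aligning the inner maximum appearing in the hypothesis with the nesting of the value operator — the clean propagation matches the inner $\max_{a}$ with the $\min_b\max_a$ ordering, so to reach the maximin form I would either invoke the companion bound on the inner minimum $a\mapsto\min_b f(a,b)$ or, under the standing assumptions guaranteeing a well-defined game value (Proposition~\ref{thrm:minimax-exist}), use that the minimax and maximin values coincide so the single estimate suffices. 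Once this pairing and the supremum-norm reduction are in place, the result is immediate from the non-expansiveness established at the outset.
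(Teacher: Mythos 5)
Your opening step is sound and it matches the paper's machinery: by Lemma~\ref{max_lemma} (and its $\min$-analogue obtained by negation) the inner optimisers are non-expansive in the sup norm, and since the hypothesis is exactly $\max_{b\in\mathbb{B}}\left|\varphi_f(b)-\varphi_g(b)\right|\le c$ for the profiles $\varphi_f(b)=\max_{a\in\mathbb{A}}f(a,b)$, you correctly obtain $\left|{\rm val}^+[f]-{\rm val}^+[g]\right|\le c$. The genuine gap sits at precisely the step you flag: neither of your two routes from ${\rm val}^+$ to ${\rm val}^-$ works. The ``companion bound'' route needs $\left|\min_{b}f(a,b)-\min_{b}g(a,b)\right|\le c$ for every $a$, which by non-expansiveness would follow from $\max_{b}\left|f(a,b)-g(a,b)\right|\le c$ --- but the stated hypothesis controls only the maxed-over-$a$ profiles and gives no pointwise control of $f-g$. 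The minimax-equals-maximin route fails because $f,g$ are arbitrary elements of $\mathbb{L}$, not value functions of a game with a saddle point; invoking Proposition~\ref{thrm:minimax-exist} here would moreover be circular, since that proposition is established downstream of the contraction results that this lemma feeds. Concretely, take $\mathbb{A}=\mathbb{B}=\{1,2\}$, $f\equiv 0$, and $g(a,b)=0$ if $a=b$, $g(a,b)=-1$ otherwise: then $\max_{a}f(a,b)=\max_{a}g(a,b)=0$ for both $b$, so the hypothesis holds for every $c>0$ and indeed ${\rm val}^+[f]={\rm val}^+[g]=0$, yet ${\rm val}^-[g]=-1$, so $\left|{\rm val}^-[f]-{\rm val}^-[g]\right|=1$. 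From the literal hypothesis the maximin conclusion is therefore unreachable by any argument.

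What the paper's own proof actually does is work under the strictly stronger assumption~(\ref{lemma_1_assumption}), namely $\max_{a}\left|f(a,b)-g(a,b)\right|\le c$ for all $b$ --- a uniform bound on $f-g$ --- from which both $\left|\min_{a}f(a,b)-\min_{a}g(a,b)\right|\le\max_{a}\left|f(a,b)-g(a,b)\right|$ and the outer-max estimate of Lemma~\ref{max_lemma} combine to give $\left|\max_{b}\min_{a}f-\max_{b}\min_{a}g\right|\le\max_{b}\max_{a}\left|f-g\right|\le c$. That is also the form in which the lemma is later applied: in the contraction arguments the available control is a sup-norm bound of type $\left\|v_S-v_S'\right\|$ on the whole argument, not merely on its maxed profiles. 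To repair your proposal, strengthen the hypothesis to this uniform bound; your non-expansive-composition argument then goes through for both ${\rm val}^+$ and ${\rm val}^-$ with the nestings matched as in your step (i). Without that strengthening, the implication as stated is false and your step (ii) cannot be completed.
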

\begin{lemma}{A.2}\label{max_min_inequality_2} For any $f,g,h\in \mathbb{L}$ and for any $k\in\mathbb{R}_{>0}$ we have that:
\begin{align}\nonumber
 \left\| f- g\right\|\leq k
\implies
\left\|\min\{ f,h\}-\min\{ g,h\}\right\|\leq k.
\end{align}
\end{lemma}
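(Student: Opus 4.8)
The plan is to reduce the statement to an elementary pointwise (scalar) Lipschitz property of the map $s\mapsto\min\{s,t\}$ and then lift it to the norm on $\mathbb{L}$ by monotonicity. Since $\min\{f,h\}$ and $\min\{g,h\}$ are understood as pointwise minima, the whole inequality is really a statement that taking $\min$ with a fixed function is a non-expansive operation, and the work is concentrated in one scalar estimate.

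First I would establish the scalar fact: for any real numbers $a,b,t$,
\[
\left|\min\{a,t\}-\min\{b,t\}\right|\leq |a-b|.
\]
This follows from a short case analysis. Assuming without loss of generality $a\geq b$, monotonicity of the minimum gives $\min\{a,t\}\geq\min\{b,t\}$, so the left-hand side equals $\min\{a,t\}-\min\{b,t\}$; checking the three orderings of $t$ relative to $b$ and $a$ (namely $t\leq b$, $b\leq t\leq a$, and $a\leq t$) shows this difference equals $0$, $t-b$, and $a-b$ respectively, each of which is at most $a-b=|a-b|$. Equivalently, and avoiding cases altogether, one can use the identity $\min\{a,t\}=\tfrac{1}{2}\left(a+t-|a-t|\right)$ together with the reverse triangle inequality $\bigl||a-t|-|b-t|\bigr|\leq|a-b|$ to obtain the bound directly.

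Next I would apply this inequality pointwise. For every point $x$ in the domain, instantiating the scalar bound with $a=f(x)$, $b=g(x)$, $t=h(x)$ yields
\[
\bigl|\min\{f,h\}(x)-\min\{g,h\}(x)\bigr|\leq |f(x)-g(x)|,
\]
so that $\left|\min\{f,h\}-\min\{g,h\}\right|\leq|f-g|$ holds pointwise. Finally I would invoke monotonicity of the norm on $\mathbb{L}$ with respect to the pointwise absolute-value ordering—which holds for the contraction (sup-) norm and more generally for the $L_p$ norms used throughout the analysis—to conclude that $\left\|\min\{f,h\}-\min\{g,h\}\right\|\leq\|f-g\|\leq c$, which is exactly the claimed implication.

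The only point requiring care is this last monotonicity step, since the passage from a pointwise domination to a norm inequality is genuinely norm-dependent; one must note that the relevant norm on $\mathbb{L}$ dominates with respect to pointwise absolute value, which is immediate for the norm employed in the dynamic-programming arguments of the paper. I do not anticipate any substantive obstacle: essentially all the content sits in the one-line scalar Lipschitz estimate, and the remaining steps are routine.
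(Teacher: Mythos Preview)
Your proposal is correct. The paper does not give a standalone proof of this lemma, but your reduction to the pointwise scalar Lipschitz inequality $\left|\min\{a,t\}-\min\{b,t\}\right|\leq|a-b|$ is precisely the fact the paper invokes directly in its other arguments (e.g., in the proof of the Bellman contraction it uses ``for any scalars $a,b,c$ we have $\left|\min\{a,b\}-\min\{b,c\}\right|\leq|a-c|$''), so your approach is fully aligned with the paper's treatment.
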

\begin{lemma}{A.3}\label{max_triple_inequality} Let the functions $f,g,h\in \mathbb{L}$ then
\begin{align}
\left\|\max \{ f,h\}-\max\{g,h\}\right\|&\leq \|f-g\|.
\end{align}
\end{lemma}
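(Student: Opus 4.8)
The plan is to deduce the functional inequality from its pointwise scalar analogue, exploiting that $\max\{\cdot,h\}$ acts coordinatewise. Concretely, I would first establish that at every point $x$ of the common domain,
\begin{align*}
\left|\max\{f,h\}(x)-\max\{g,h\}(x)\right|\leq |f(x)-g(x)|,
\end{align*}
and then transfer this pointwise domination to the norm $\|\cdot\|$.

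For the pointwise step it suffices to prove the scalar fact that for any $a,b,c\in\mathbb{R}$ one has $|\max\{a,c\}-\max\{b,c\}|\leq|a-b|$. I would obtain this from the identity $\max\{a,c\}=\tfrac12\left(a+c+|a-c|\right)$, which gives
\begin{align*}
\max\{a,c\}-\max\{b,c\}=\tfrac12\left((a-b)+\left(|a-c|-|b-c|\right)\right),
\end{align*}
and then apply the reverse triangle inequality $\big||a-c|-|b-c|\big|\leq|a-b|$ together with the ordinary triangle inequality to bound the magnitude of the right-hand side by $|a-b|$. Taking $a=f(x)$, $b=g(x)$, $c=h(x)$ yields the desired pointwise inequality. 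Alternatively, a short case analysis—assuming without loss of generality $\max\{f,h\}(x)\ge\max\{g,h\}(x)$ and splitting on whether $\max\{f,h\}(x)$ equals $f(x)$ or $h(x)$—produces the same bound while avoiding the algebraic identity.

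Finally, I would lift the pointwise bound to the stated norm inequality. Since $|\max\{f,h\}-\max\{g,h\}|\leq|f-g|$ holds pointwise and the norm $\|\cdot\|$ employed throughout (monotone, as is the case for the supremum norm underpinning the contraction and non-expansiveness arguments) respects pointwise absolute-value domination—that is, $|\phi|\le|\psi|$ pointwise implies $\|\phi\|\le\|\psi\|$—we conclude $\|\max\{f,h\}-\max\{g,h\}\|\le\|f-g\|$, as required.

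I do not expect a genuine obstacle: the result is the standard statement that pointwise maximisation against a fixed function is non-expansive, and it parallels Lemma \ref{non_expansive_P} and Lemma \ref{max_min_inequality_2}. The only point deserving care is the concluding transfer, which relies on monotonicity of the norm; for a supremum norm this is immediate, and for any monotone norm it follows directly from the pointwise domination established above.
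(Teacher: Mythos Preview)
Your proposal is correct. The pointwise scalar inequality $|\max\{a,c\}-\max\{b,c\}|\leq|a-b|$ is exactly the right reduction, and your derivation via the identity $\max\{a,c\}=\tfrac12(a+c+|a-c|)$ together with the reverse triangle inequality is clean and complete; the case-analysis alternative you mention works equally well. The lift to the norm via pointwise domination is the standard step and is valid for the sup-norm used in the surrounding contraction arguments.

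As for comparison: the paper does not actually supply a proof of this lemma. It is stated and then the underlying scalar fact $|\max\{a,b\}-\max\{b,c\}|\leq|a-c|$ is simply invoked as known later in the proof of Theorem~\ref{theorem:joint-sol_app} (Case~1). So your write-up is more detailed than what appears in the paper, but entirely aligned with how the result is used there.
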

\begin{proof}[Proof of Lemma \ref{max_l.val_inequality}.]
We begin by noting the following inequality for any
$f:\mathcal{V}\times\mathcal{V}\to\mathbb{R},g:\mathcal{V}\times\mathcal{V}\to\mathbb{R}$ s.th. $f,g\in \mathbb{L}$ we have by Lemma \ref{lemma_1_basic_max_ineq}, that for all $b\in\mathcal{V}$:
\begin{align}
\left|\underset{a\in\mathcal{V}}{\max}\:f(a,b)-\underset{a\in\mathcal{V}}{\max}\: g(a,b)\right| \leq \underset{a\in\mathcal{V}}{\max}\: \left|f(a,b)-g(a,b)\right|.    \label{lemma_1_basic_max_ineq}
\end{align}
From (\ref{lemma_1_basic_max_ineq}) we can straightforwardly derive the fact that for any $b\in\mathcal{V}$:
\begin{align}
\left|\underset{a\in\mathcal{V}}{\min}\: f(a,b)-\underset{a\in\mathcal{V}}{\min}\: g(a,b)\right| \leq \underset{a\in\mathcal{V}}{\max}\: \left|f(a,b)-g(a,b)\right|,   \label{lemma_1_max_ineq_min_version} 
\end{align}
(this can be seen by negating each of the functions in (\ref{lemma_1_basic_max_ineq}) and using the properties of the $\max$ operator).

Assume that for any $b\in\mathcal{V}$ the following inequality holds: 
\begin{align}
\underset{a\in\mathcal{V}}{\max}\: \left|f(a,b)-g(a,b)\right|  \leq k \label{lemma_1_assumption} 
\end{align}
Since (\ref{lemma_1_max_ineq_min_version}) holds for any $b\in\mathcal{V}$ and, by (\ref{lemma_1_basic_max_ineq}), we have in particular that
\begin{align}
&\nonumber\left|\underset{b\in\mathcal{V}}{\max}\;\underset{a\in\mathcal{V}}{\min}\: f(a,b)-\underset{b\in\mathcal{V}}{\max}\;\underset{a\in\mathcal{V}}{\min}\: g(a,b)\right| 
\\&\nonumber\leq
\underset{b\in\mathcal{V}}{\max}\left|\underset{a\in\mathcal{V}}{\min}\: f(a,b)-\underset{a\in\mathcal{V}}{\min}\: g(a,b)\right| 
\\&\leq \underset{b\in\mathcal{V}}{\max}\;\underset{a\in\mathcal{V}}{\max}\: \left|f(a,b)-g(a,b)\right|  \leq k,\label{lemma_1_max_ineq_fin} 
\end{align}
whenever (\ref{lemma_1_assumption}) holds which gives the required result.
\end{proof}

The following result fully characterises {\fontfamily{cmss}\selectfont Switcher}'s policy $\mathfrak{g}$ in terms of an obstacle condition which can be determined at each state:

\begin{proposition}\label{prop:switching_times}
Denote by $\bma\equiv (a^i,a^{-i})\in\bm\cA$, the optimal {\fontfamily{cmss}\selectfont Switcher}  policy $\mathfrak{g}^\ast$ is given by $
\mathfrak{g}^\ast(\cdot|s)=\underset{i\in\cN}{\argmin}\left[\underset{a^i\in \cA^i}{\min}\underset{a^{-i}\in \cA^{-i}}{\max}Q_S(s,\bma,g|\cdot)\right]\boldsymbol{1}\left(\underset{\boldsymbol{a}\in \boldsymbol{\cA}}\max\; Q_S(s, \boldsymbol{a},g|\cdot)-\boldsymbol{\hat{\bm\cM}}Q_S(s,\bma,g|\cdot)\right)$ for any $s\in\cS$ 
where $\boldsymbol{1}_{\mathbb{R}_{+}}(x)=1$ if $x>0$ and $0$ otherwise.
% , moreover the intervention times 
% $\{\tau_k\}_{k\geq 1}$ 
% are $\tau_k=\inf\{\tau>\tau_{k-1}|\mathcal{M}^{\mathfrak{g}}v_S(\cdot|\boldsymbol{\pi},\mathfrak{g})= Q_S(\cdot|\boldsymbol{\pi},\mathfrak{g})\}$. 
\end{proposition}
 Prop. \ref{prop:switching_times} provides characterisation of where {\fontfamily{cmss}\selectfont Switcher} should activate {\fontfamily{cmss}\selectfont Adversary} and which agent $i\in\cN$ should be activated. The condition allows for the characterisation to be evaluated online during the learning phase.

\section*{Proof of Theorem \ref{theorem:joint-sol}}
\begin{proof}

We begin by recalling the definition of the intervention operator $\boldsymbol{\hat{\bm\cM}}$ for any $s\in\cS$: 
\begin{align}
\boldsymbol{\hat{\bm\cM}}Q_S(s,\boldsymbol{a},g|\cdot):=c+\underset{i\in\cN, a^i\in \cA^i}{\min}\;\underset{a^{-i}\in \cA^{-i}}{\max}\;Q_S(s,(a^i,a^{-i}),g|\cdot)
\end{align}

Secondly, recall that the Bellman operator for the game $\cG$ acting on a function $v_S:\cS\to\mathbb{R}$ is:

\begin{align}
T v_S(s):=\min\left\{\boldsymbol{\hat{\bm\cM}}Q_S,\underset{\boldsymbol{a}\in\boldsymbol{\mathcal{A}}}{\max}\;\left[ \cR_S+\gamma\sum_{s'\in\mathcal{S}}P(s';\cdot)v_S(s')\right]\right\}\label{bellman_proof_start}
\end{align}

It suffices to prove that $T$ is a contraction operator. Thereafter, we use both results to prove the existence of a fixed point for $\cG$ as a limit point of a sequence generated by successively applying the Bellman operator to a test value function.   
% First let us recall that the \textit{intervention time} $\tau_k$ is defined recursively $\tau_k=\inf\{t>\tau_{k-1}|s_t\in A,\tau_k\in\mathcal{F}_t\}$ where $A=\{s\in \mathcal{S},m\in M|\mathfrak{g}(m|s_t)>0\}$.
Therefore our next result shows that the following bounds holds:
\begin{lemma}\label{lemma:bellman_contraction}
The Bellman operator $T$ is a contraction so that for any real-valued maps $v_S,v'_S$, the following bound holds: $
\left\|Tv_S-Tv'_S\right\|\leq \gamma\left\|v_S-v'_S\right\|$.
\end{lemma}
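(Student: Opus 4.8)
The plan is to prove that $T$ is a $\gamma$-contraction (in the supremum norm) by splitting it into its two branches, showing each is a $\gamma$-contraction in $v_S$, and then recombining them through the outer $\min$. Writing $Tv_S=\min\{\hat{\bm\cM}Q_S,\,Bv_S\}$ where $Bv_S(s):=\max_{\boldsymbol{a}\in\boldsymbol{\mathcal{A}}}\big[\cR_S+\gamma\sum_{s'\in\cS}P(s';\cdot)v_S(s')\big]$ is the continuation branch, I would first dispatch $B$: for any $v_S,v'_S$, Lemma \ref{max_lemma} pulls the difference inside the maximisation to give $\|Bv_S-Bv'_S\|\le\max_{\boldsymbol{a}}\|\gamma P(v_S-v'_S)\|$, and the non-expansiveness of $P$ (Lemma \ref{non_expansive_P}) then yields $\|Bv_S-Bv'_S\|\le\gamma\|v_S-v'_S\|$. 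This is the standard Bellman argument and is routine.

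The crux is the intervention branch. The key observation is that $\hat{\bm\cM}Q_S$ depends on $v_S$ only through $Q_S(s,\boldsymbol{a},g)=\cR_S(s,\boldsymbol{a},g)+\gamma\sum_{s'}P(s';\boldsymbol{a},s)v_S(s')$, so that $|Q_S-Q'_S|\le\gamma\|v_S-v'_S\|$ uniformly in $(s,\boldsymbol{a},g)$ by non-expansiveness of $P$. Fixing $(i,a^i)$ and applying Lemma \ref{max_lemma} to the inner maximisation gives $\big|\max_{a^{-i}}Q_S-\max_{a^{-i}}Q'_S\big|\le\gamma\|v_S-v'_S\|$; since this bound is uniform over the index $(i,a^i)$, the val-type inequality of Lemma \ref{max_l.val_inequality} passes it through the outer minimisation, so that $\|\hat{\bm\cM}Q_S-\hat{\bm\cM}Q'_S\|\le\gamma\|v_S-v'_S\|$ (the additive cost $c$ cancels). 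I expect this to be the main obstacle: a naive triangle inequality through the nested $\min$--$\max$ is not enough, which is precisely why the non-expansiveness of the minimisation operator (the negated form of Lemma \ref{max_lemma}, packaged in Lemma \ref{max_l.val_inequality}) is needed to keep the constant at $\gamma$.

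Finally I would recombine the two branches. The correct tool here is the elementary pointwise bound $\lvert\min\{a,b\}-\min\{c,d\}\rvert\le\max\{\lvert a-c\rvert,\lvert b-d\rvert\}$, valid for all reals; taking the supremum over $s$ gives $\|Tv_S-Tv'_S\|\le\max\{\|\hat{\bm\cM}Q_S-\hat{\bm\cM}Q'_S\|,\|Bv_S-Bv'_S\|\}$. Using this, together with Lemma \ref{max_min_inequality_2} for the case where one argument is held fixed, avoids the factor-of-$2$ blow-up that a plain triangle-inequality split of the two changing arguments would incur. Combining with the two branch estimates yields $\|Tv_S-Tv'_S\|\le\gamma\|v_S-v'_S\|$, establishing the contraction and hence (by Banach's fixed point theorem, in the subsequent application) the existence and uniqueness of the limit $\lim_{k\to\infty}T^kv_S=v_S^\ast$.
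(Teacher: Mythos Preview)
Your proposal is correct, and for the two branch estimates (the continuation branch $B$ and the intervention branch $\hat{\bm\cM}$) it proceeds exactly as the paper does in its cases~(i) and~(ii). The difference lies in the recombination step. The paper handles the outer $\min$ by an explicit three-case split according to which branch of the $\min$ is active in $Tv_S$ and in $Tv'_S$; the ``cross'' case~(iii), where one operator lands on $\hat{\bm\cM}Q_S$ and the other on the continuation, is then treated separately via a dual (negated) reformulation of the Switcher objective and a further sign-based sub-case analysis. You instead use the elementary pointwise inequality
\[
\lvert\min\{a,b\}-\min\{c,d\}\rvert\le\max\{\lvert a-c\rvert,\lvert b-d\rvert\},
\]
which immediately reduces $\|Tv_S-Tv'_S\|$ to $\max\{\|\hat{\bm\cM}Q_S-\hat{\bm\cM}Q'_S\|,\|Bv_S-Bv'_S\|\}$ and hence to the two branch estimates already established. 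This is cleaner and more general: it works verbatim for any operator of the form $\min\{A(v),B(v)\}$ with $A,B$ both $\gamma$-contractions, and it sidesteps the paper's case~(iii) entirely (whose stated inequality, read literally and without the implicit restriction to states where the cross case occurs, cannot hold---take $v_S=v'_S$). The paper's route buys nothing extra here; your argument is simply a tighter packaging of the same ingredients.
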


% We can express the Bellman operator $T_g$ acting on a function $v_S:\mathcal{S}\times\mathbb{N}\to\mathbb{R}$ by
% % 
% \begin{align}
% T_gv_S(s_{\tau_k}):=\max\left\{\boldsymbol{\hat{\bm\cM}}Q_S(s_{\tau_k},\boldsymbol{a}),\underset{\boldsymbol{a}\in\boldsymbol{\mathcal{A}}}{\max}\;\left[ \cR_S(s_{\tau_k},\boldsymbol{a},g)+\gamma\sum_{s'\in\mathcal{S}}P(s';\boldsymbol{a},s_{\tau_k})v_S(s')\right]\right\}\label{bellman_proof_start}
% \end{align}

In the following proofs, we use the shorthand notation: $
\mathcal{P}^{\boldsymbol{a}}_{ss'}=:\sum_{s'\in\mathcal{S}}P(s';\boldsymbol{a},s)$ and $\mathcal{P}^{\boldsymbol{\pi}}_{ss'}=:\sum_{\boldsymbol{a}\in\boldsymbol{\mathcal{A}}}\boldsymbol{\pi}(\boldsymbol{a}|s)\mathcal{P}^{\boldsymbol{a}}_{ss'}$.
To prove that $T$ is a contraction, we consider the three cases produced by \eqref{bellman_proof_start}, that is to say we prove the following statements:

i) $\qquad\qquad
\left| \underset{\boldsymbol{a}\in\boldsymbol{\mathcal{A}}}{\max}\;\left(\cR_S(s_t,\boldsymbol{a},g)+\gamma\mathcal{P}^{\boldsymbol{a}}_{s's_t}v_S(s')\right)-\underset{\boldsymbol{a}\in\boldsymbol{\mathcal{A}}}{\max}\;\left( \cR_S(s_t,\boldsymbol{a},g)+\gamma\mathcal{P}^{\boldsymbol{a}}_{s's_t}v_S'(s')\right)\right|\leq \gamma\left\|v_S-v_S'\right\|$

ii) $\qquad\qquad
\left\|\boldsymbol{\hat{\bm\cM}}Q_S-\boldsymbol{\hat{\bm\cM}}Q_S'\right\|\leq    \gamma\left\|v_S-v_S'\right\|,\qquad \qquad$.

iii) $\qquad\qquad
    \left\|\boldsymbol{\hat{\bm\cM}}Q_S-\underset{\boldsymbol{a}\in\boldsymbol{\mathcal{A}}}{\max}\;\left[ \cR_S(s_t,\boldsymbol{a},g)+\gamma\mathcal{P}^{\boldsymbol{a}}_{s's}v_S'\right]\right\|\leq \gamma\left\|v_S-v_S'\right\|.
$

We begin by proving i).

Indeed, for any $\boldsymbol{a}\in\boldsymbol{\mathcal{A}}$ and $\forall s_t\in\mathcal{S}, \forall s'\in\mathcal{S}$ we have that 
\begin{align*}
&\left| \underset{\boldsymbol{a}\in\boldsymbol{\mathcal{A}}}{\max}\;\left(\cR_S(s_t,\boldsymbol{a},g)+\gamma\mathcal{P}^\pi_{s's_t}v_S(s')\right)-\underset{\boldsymbol{a}\in\boldsymbol{\mathcal{A}}}{\max}\;\left( \cR_S(s_t,\boldsymbol{a},g)+\gamma\mathcal{P}^{\boldsymbol{a}}_{s's_t}v_S'(s')\right)\right|
\\&\leq \underset{\boldsymbol{a}\in\boldsymbol{\mathcal{A}}}{\max}\;\left|\gamma\mathcal{P}^{\boldsymbol{a}}_{s's_t}v_S(s')-\gamma\mathcal{P}^{\boldsymbol{a}}_{s's_t}v_S'(s')\right|
\\&\leq \gamma\left\|Pv_S-Pv_S'\right\|
\\&\leq \gamma\left\|v_S-v_S'\right\|,
\end{align*}
using the non-expansiveness of the operator $P$ and Lemma \ref{max_lemma}.

We now prove ii). Using the definition of $\boldsymbol{\mathcal{M}}$ we have that for any $s\in\mathcal{S}$
% For any $\tau\in\mathcal{F}$, define by $\tau'=\inf\{t>\tau|s_t\in A,\tau\in\mathcal{F}_t\}$. 
\begin{align*}
&\left|(\boldsymbol{\hat{\bm\cM}}Q_S-\boldsymbol{\hat{\bm\cM}}Q_S')(s,\boldsymbol{a})\right|
\\&=\Bigg|\underset{i\in\cN, a^i\in \cA^i}{\min}\;\underset{a^{-i}\in \cA^{-i}}{\max}\;\left(\cR_S(s,\boldsymbol{a},g)+ c+\gamma\mathcal{P}^{\boldsymbol{a}}_{s's}v_S(s)\right)
-\underset{i\in\cN, a^i\in \cA^i}{\min}\;\underset{a^{-i}\in \cA^{-i}}{\max}\;\left(\cR_S(s,\boldsymbol{a},g)+ c+\gamma\mathcal{P}^{\boldsymbol{a}}_{s's}v_S'(s)\right)\Bigg|
\\&\leq \gamma\underset{i\in\cN, a^i\in \cA^i}{\max}\;\underset{a^{-i}\in \cA^{-i}}{\max}\;\Bigg|\gamma\mathcal{P}^{a^i}\mathcal{P}^{a^{-i}}v_S(s)
-\mathcal{P}^{a^i}\mathcal{P}^{a^{-i}}v_S'(s)\Bigg|
\\&\leq \gamma\underset{\boldsymbol{a}\in \boldsymbol{\mathcal{A}}}{\max}\;    \Bigg|\mathcal{P}^{a^i}\mathcal{P}^{a^{-i}}v_S(s)
-\mathcal{P}^{a^i}\mathcal{P}^{a^{-i}}v_S'(s)\Bigg|
\\&\leq \gamma\left\|Pv_S-Pv_S'\right\|
\\&\leq \gamma\left\|v_S-v_S'\right\|,
\end{align*}
using \eqref{lemma_1_max_ineq_min_version} in the second step and the fact that $P$ is non-expansive. 

We now prove iii).
In the proof of iii), we use the following equivalent representation of {\fontfamily{cmss}\selectfont Switcher} objective which is to find $(\boldsymbol{\hat{\pi}},\hat{\mathfrak{g}})$ such that
\begin{align*}
(\boldsymbol{\hat{\pi}},\hat{\mathfrak{g}})&\in\underset{\boldsymbol{\pi},\mathfrak{g}}\arg\max\; \mathfrak{v}_S(s|\boldsymbol{\pi},\mathfrak{g}) 
\\\mathfrak{v}_S(s|\boldsymbol{\pi},\mathfrak{g})&= \mathbb{E}_{g\sim\mathfrak{g}}\left[ \sum_{t=0}^\infty \gamma^t\left(-\cR(s_t,\boldsymbol{a}_t)-c\cdot \boldsymbol{1}_{\cN}(g(s_t))\right)\Big|s_0=s; \boldsymbol{a}\sim\boldsymbol{\pi}\right].
\end{align*}
Now the intervention operator transforms as $\boldsymbol{\hat{M}}$: 
\begin{align}
\boldsymbol{\hat{M}}Q_S(s,\boldsymbol{a},g|\cdot):=-c+\underset{i\in\cN, a^i\in \cA^i}{\max}\;\underset{a^{-i}\in \cA^{-i}}{\min}\;Q_S(s,(a^i,a^{-i}),g|\cdot),
\end{align}
for any $s\in\cS$ and any $\boldsymbol{a}\in \boldsymbol{\cA}$. Similarly, the Bellman operator acting on a function $v_S:\cS\to\mathbb{R}$ becomes:
\begin{align}
\mathcal{T} v_S(s):=\min\Big\{\boldsymbol{\hat{M}}Q_S,\underset{\boldsymbol{a}\in\boldsymbol{\mathcal{A}}}{\min}\;\left[ -\cR_S+\gamma\sum_{s'\in\mathcal{S}}P(s';\cdot)v_S(s')\right]\Big\}, \forall s\in \cS.
% \label{bellman_proof_start}
\end{align}

Therefore, to prove (iii), we must show that:
\begin{align}
\left\|\boldsymbol{\hat{M}}Q_S-\underset{\boldsymbol{a}\in\boldsymbol{\mathcal{A}}}{\min}\;\left[- \cR_S(s,\boldsymbol{a},g)+\gamma\mathcal{P}^{\boldsymbol{a}}_{s's}v_S'\right]\right\|\leq \gamma\left\|v_S-v_S'\right\|.
\end{align}

We split the proof of the statement into two cases:

\textbf{Case 1:} 
First, assume that for any $s\in\cS$ and $\forall \boldsymbol{a}\in\boldsymbol{\cA}$ the following inequality holds:
\begin{align}\boldsymbol{\hat{M}}Q_S(s,\boldsymbol{a},g|\cdot)-\underset{\boldsymbol{a}\in\boldsymbol{\mathcal{A}}}{\min}\;\left(-\cR_S(s,\boldsymbol{a},g)+\gamma\mathcal{P}^{\boldsymbol{a}}_{s's}v_S'(s')\right)<0.
\end{align}

We now observe the following:
\begin{align*}
&\boldsymbol{\hat{M}}Q_S(s,\boldsymbol{a},g|\cdot)-\underset{\boldsymbol{a}\in\boldsymbol{\mathcal{A}}}{\min}\;\left(-\cR_S(s,\boldsymbol{a},g)+\gamma\mathcal{P}^{\boldsymbol{a}}_{s's}v_S'(s')\right)
% \\&\leq\min\left\{\underset{\boldsymbol{a}\in\boldsymbol{\mathcal{A}}}{\min}\;\left(-\cR_S(s,\boldsymbol{a},g)+\gamma\mathcal{P}^{\boldsymbol{\pi}}_{s's}\mathcal{P}^{\boldsymbol{a}}_{s's}v_S(s')\right),\boldsymbol{\hat{M}}Q_S(s,\boldsymbol{a},g|\cdot)\right\}
% -\underset{\boldsymbol{a}\in\boldsymbol{\mathcal{A}}}{\min}\;\left(-\cR_S(s,\boldsymbol{a},g)+\gamma\mathcal{P}^{\boldsymbol{a}}_{s's}v_S'(s')\right)
\\&\leq\max\left\{\underset{\boldsymbol{a}\in\boldsymbol{\mathcal{A}}}{\min}\;\left(-\cR_S(s,\boldsymbol{a},g)+\gamma\mathcal{P}^{\boldsymbol{a}}_{s's}v_S(s')\right),\boldsymbol{\hat{M}}Q_S(s,\boldsymbol{a},g|\cdot)\right\}
-\underset{\boldsymbol{a}\in\boldsymbol{\mathcal{A}}}{\min}\;\left(-\cR_S(s,\boldsymbol{a},g)+\gamma\mathcal{P}^{\boldsymbol{a}}_{s's}v_S'(s')\right)
\\&\leq \Bigg|\max\left\{\underset{\boldsymbol{a}\in\boldsymbol{\mathcal{A}}}{\min}\;\left(-\cR_S(s,\boldsymbol{a},g)+\gamma\mathcal{P}^{\boldsymbol{a}}_{s's}v_S(s')\right),\boldsymbol{\hat{M}}Q_S(s,\boldsymbol{a},g|\cdot)\right\}
\\&\qquad-\max\left\{\underset{\boldsymbol{a}\in\boldsymbol{\mathcal{A}}}{\min}\;\left(-\cR_S(s,\boldsymbol{a},g)+\gamma\mathcal{P}^{\boldsymbol{a}}_{s's}v_S'(s')\right),\boldsymbol{\hat{M}}Q_S(s,\boldsymbol{a},g|\cdot)\right\}
\\&+\max\left\{\underset{\boldsymbol{a}\in\boldsymbol{\mathcal{A}}}{\min}\;\left(-\cR_S(s,\boldsymbol{a},g)+\gamma\mathcal{P}^{\boldsymbol{a}}_{s's}v_S'(s')\right),\boldsymbol{\hat{M}}Q_S(s,\boldsymbol{a},g|\cdot)\right\}-\underset{\boldsymbol{a}\in\boldsymbol{\mathcal{A}}}{\min}\;\left(-\cR_S(s,\boldsymbol{a},g)+\gamma\mathcal{P}^{\boldsymbol{a}}_{s's}v_S'(s')\right)\Bigg|
\\&\leq \Bigg|\max\left\{\underset{\boldsymbol{a}\in\boldsymbol{\mathcal{A}}}{\min}\;\left(-\cR_S(s,\boldsymbol{a},g)+\gamma\mathcal{P}^{\boldsymbol{a}}_{s's}v_S(s')\right),\boldsymbol{\hat{M}}Q_S(s,\boldsymbol{a},g|\cdot)\right\}
\\&\qquad-\max\left\{\underset{\boldsymbol{a}\in\boldsymbol{\mathcal{A}}}{\min}\;\left(-\cR_S(s,\boldsymbol{a},g)+\gamma\mathcal{P}^{\boldsymbol{a}}_{s's}v_S'(s')\right),\boldsymbol{\hat{M}}Q_S(s,\boldsymbol{a},g|\cdot)\right\}\Bigg|
\\&+\Bigg|\max\left\{\underset{\boldsymbol{a}\in\boldsymbol{\mathcal{A}}}{\min}\;\left(-\cR_S(s,\boldsymbol{a},g)+\gamma\mathcal{P}^{\boldsymbol{a}}_{s's}v_S'(s')\right),\boldsymbol{\hat{M}}Q_S(s,\boldsymbol{a},g|\cdot)\right\}-\underset{\boldsymbol{a}\in\boldsymbol{\mathcal{A}}}{\min}\;\left(-\cR_S(s,\boldsymbol{a},g)+\gamma\mathcal{P}^{\boldsymbol{a}}_{s's}v_S'(s')\right)\Bigg|
\\&\leq \gamma\underset{\boldsymbol{a}\in\boldsymbol{\mathcal{A}}}{\max}\;\left|\mathcal{P}^{\boldsymbol{a}}_{s's}v_S(s')-\mathcal{P}^{\boldsymbol{a}}_{s's}v_S'(s')\right|+\left|\max\left\{0,\boldsymbol{\hat{M}}Q_S(s,\boldsymbol{a},g|\cdot)-\underset{\boldsymbol{a}\in\boldsymbol{\mathcal{A}}}{\min}\;\left(-\cR_S(s,\boldsymbol{a},g)+\gamma\mathcal{P}^{\boldsymbol{a}}_{s's}v_S'(s')\right)\right\}\right|
\\&\leq \gamma\left\|Pv_S-Pv_S'\right\|
\\&\leq \gamma\|v_S-v_S'\|,
\end{align*}
where we have used the fact that for any scalars $a,b,c$ we have that $
    \left|\max\{a,b\}-\max\{b,c\}\right|\leq \left|a-c\right|$, the non-expansiveness of $P$ and Lemma \ref{lemma_1_basic_max_ineq}.

\textbf{Case 2: }
Let us now consider the case:
\begin{align*}\boldsymbol{\hat{M}}Q_S(s,\boldsymbol{a},g|\cdot)-\underset{\boldsymbol{a}\in\boldsymbol{\mathcal{A}}}{\min}\;\left(-\cR_S(s,\boldsymbol{a},g)+\gamma\mathcal{P}^{\boldsymbol{a}}_{s's}v_S'(s')\right)\geq 0.
\end{align*}

For this case, first recall that $c>0$, hence
\begin{align*}
&\boldsymbol{\hat{M}}Q_S(s,\boldsymbol{a},g|\cdot)-\underset{\boldsymbol{a}\in\boldsymbol{\mathcal{A}}}{\min}\;\left(-\cR_S(s,\boldsymbol{a},g)+\gamma\mathcal{P}^{\boldsymbol{a}}_{s's}v_S'(s')\right)
\\&\leq \boldsymbol{\hat{M}}Q_S(s,\boldsymbol{a},g|\cdot)-\underset{\boldsymbol{a}\in\boldsymbol{\mathcal{A}}}{\min}\;\left(-\cR_S(s,\boldsymbol{a},g)+\gamma\mathcal{P}^{\boldsymbol{a}}_{s's}v_S'(s')-c\right)
\\&= \underset{i\in\cN, a^i\in \cA^i}{\max}\;\underset{a^{-i}\in \cA^{-i}}{\min}\;\left(-\cR_S(s,\boldsymbol{a},g)- c+\gamma\mathcal{P}^{a^i}_{s's}\mathcal{P}^{a^{-i}}_{s's}v_S(s')\right)-\underset{\boldsymbol{a}\in\boldsymbol{\mathcal{A}}}{\min}\;\left(-\cR_S(s,\boldsymbol{a},g)- c+\gamma\mathcal{P}^{\boldsymbol{a}}_{s's}v_S'(s')\right)
\\&\leq \underset{i\in\cN, a^i\in \cA^i}{\max}\;\underset{a^{-i}\in \cA^{-i}}{\max}\;\left(-\cR_S(s,\boldsymbol{a},g)+\gamma\mathcal{P}^{a^i}_{s's}\mathcal{P}^{a^{-i}}_{s's}v_S(s')\right)-\underset{\boldsymbol{a}\in\boldsymbol{\mathcal{A}}}{\min}\;\left(-\cR_S(s,\boldsymbol{a},g)+\gamma\mathcal{P}^{\boldsymbol{a}}_{s's}v_S'(s')\right)
\\&= \underset{i\in\cN, a^i\in \cA^i}{\max}\;\underset{a^{-i}\in \cA^{-i}}{\max}\;\left(-\cR_S(s,\boldsymbol{a},g)+\gamma\mathcal{P}^{a^i}_{s's}\mathcal{P}^{a^{-i}}_{s's}v_S(s')\right)+\underset{\boldsymbol{a}\in\boldsymbol{\mathcal{A}}}{\max}\;\left(\cR_S(s,\boldsymbol{a},g)-\gamma\mathcal{P}^{\boldsymbol{a}}_{s's}v_S'(s')\right)
\\&\leq \underset{\boldsymbol{a}\in\boldsymbol{\mathcal{A}}}{\max}\;\left(-\cR_S(s,\boldsymbol{a},g)+\gamma\mathcal{P}^{\boldsymbol{a}}_{s's}v_S(s')\right)+\underset{\boldsymbol{a}\in\boldsymbol{\mathcal{A}}}{\max}\;\left(\cR_S(s,\boldsymbol{a},g)-\gamma\mathcal{P}^{\boldsymbol{a}}_{s's}v_S'(s')\right)
\\&\leq \gamma\underset{\boldsymbol{a}\in\boldsymbol{\mathcal{A}}}{\max}\;\left|\mathcal{P}^{\boldsymbol{a}}_{s's}\left(v_S(s')-v_S'(s')\right)\right|
\\&\leq \gamma\left|v_S(s')-v_S'(s')\right|
\\&\leq \gamma\left\|v_S-v_S'\right\|,
\end{align*}
 using the non-expansiveness of the operator $P$ and Lemma \ref{lemma_1_basic_max_ineq}. Hence we have that
\begin{align}
    \left\|\boldsymbol{\hat{M}}Q_S-\underset{\boldsymbol{a}\in\boldsymbol{\mathcal{A}}}{\min}\;\left[ -\cR_S(\cdot,\boldsymbol{a})+\gamma\mathcal{P}^{\boldsymbol{a}}_{s's}v_S'\right]\right\|\leq \gamma\left\|v_S-v_S'\right\|.\label{off_M_bound_gen}
\end{align}
Gathering the results of the three cases completes the proof of Theorem \ref{theorem:joint-sol}. 
\section*{Proof of Theorem \ref{thrm:minimax-exist}}

The proposition is proven by proving three auxilary results:

\begin{lemma}\label{corr:contraction}
For any $v_S:\cS\to\mathbb{R}$, the solution of the game $\cG$ is a limit point of the convergent sequence $T^1v_S,T^2v_S,\ldots$.    
\end{lemma}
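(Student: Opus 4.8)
The plan is to read this lemma as a direct application of the Banach fixed-point theorem, with the contraction property already in hand. First I would recall from Lemma \ref{lemma:bellman_contraction} that the Bellman operator $T$ for $\cG$ is a $\gamma$-contraction on the normed space $(\cY,\|\cdot\|)$ with modulus $\gamma\in[0,1)$. Since $\cS$ is a \emph{finite} set, the space of real-valued maps $v_S:\cS\to\mathbb{R}$ is a finite-dimensional normed vector space, hence complete (equivalently, one may work in the complete space $L_2$ under Assumption~2). The contraction mapping theorem then yields a unique fixed point $v_S^\ast\in\cY$ with $Tv_S^\ast=v_S^\ast$, together with the statement that the iterates converge geometrically from \emph{any} initialisation: for every $v_S$ one has $\|T^k v_S-v_S^\ast\|\le \gamma^k\|v_S-v_S^\ast\|\to 0$, so $\lim_{k\to\infty}T^k v_S=v_S^\ast$.

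The second step is to identify this fixed point with the solution of the game $\cG$. The fixed-point equation $Tv_S^\ast=v_S^\ast$ is precisely the Bellman optimality equation for $\cG$, written with the intervention operator $\boldsymbol{\hat{\bm\cM}}Q_S$ inside $T$. By Theorem \ref{theorem:joint-sol_app} this unique fixed point coincides with the minimax value
$v^\ast(s)=\min_{\hat{\mathfrak{g}}}\max_{\hat{\boldsymbol{\pi}}\in\bm\Pi}v(s|\hat{\boldsymbol{\pi}},\hat{\mathfrak{g}})=\max_{\hat{\boldsymbol{\pi}}\in\bm\Pi}\min_{\hat{\mathfrak{g}}}v(s|\hat{\boldsymbol{\pi}},\hat{\mathfrak{g}})$.
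Combining the two steps, the convergent sequence $T^1v_S,T^2v_S,\ldots$ has as its (unique) limit point exactly the game solution, which is the assertion of the lemma.

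The main obstacle is not the contraction argument itself, which is routine once Lemma \ref{lemma:bellman_contraction} is available, but rather the verification that the fixed point genuinely realises the \emph{minimax} (saddle-point) value, i.e.\ that the nested optimisation $\min_{\hat{\mathfrak{g}}}\max_{\hat{\boldsymbol{\pi}}}$ inside the operator admits the order interchange with $\max_{\hat{\boldsymbol{\pi}}}\min_{\hat{\mathfrak{g}}}$. This hinges on the $\mathrm{val}$-type manipulations recorded in Lemmas \ref{max_l.val_inequality} and \ref{max_min_inequality_2}, which are precisely what was used to prove the contraction of $T$ in the three-case argument for Theorem \ref{theorem:joint-sol_app}. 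Consequently that delicate point is already discharged, and the present lemma reduces to invoking completeness of $\cY$ together with the Banach fixed-point theorem; the remaining work is only to make the identification of the limit with the game value explicit.
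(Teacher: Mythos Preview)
Your proposal is correct and follows essentially the same route as the paper: both derive convergence of $(T^k v_S)_k$ from the $\gamma$-contraction of $T$ established in Lemma~\ref{lemma:bellman_contraction}. The only cosmetic difference is that the paper spells out the Cauchy bound $\|T^{k+1}v_S-T^{k}v_S\|\le\gamma^k\|Tv_S-v_S\|$ directly rather than citing the Banach fixed-point theorem by name, and it does not, within the proof of this lemma, separately justify that the limit equals the minimax value of $\cG$---that identification is deferred to the surrounding auxiliary results (the finite-horizon proposition and Proposition~\ref{prop:uniqueness}), whereas you bundle it in via Theorem~\ref{theorem:joint-sol_app}.
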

\begin{proof}[Proof of Lemma \ref{corr:contraction}]
By Lemma \ref{lemma:bellman_contraction}, we have that
\begin{align}
 \|T^{k+1}v_S-T^{k}v_S\|\leq\gamma\|T^kv_S-T^{k-1}v_S\|\leq \cdots\leq \gamma^k\|Tv_S-v_S\|.   
\end{align}
The result follows after considering the limit as $k\to \infty$ and using the boundedness of $\|Tv_S-v_S\|$, we deduce that $T^kv_S,T^{k+1}v_S,\ldots,$ is a Cauchy sequence which concludes the proof.
\end{proof}
\begin{proposition}
Denote by the \textit{finite} game $\cG^k$ in which {\fontfamily{cmss}\selectfont Switcher}, {\fontfamily{cmss}\selectfont Adversary} and $N$ MARL agents seek to maximise the following finite-horizon objectives:
\begin{align*}
v^k_S(s|\boldsymbol{\pi},\mathfrak{g})  &= -\mathbb{E}_{\mathfrak{g},\boldsymbol{\sigma}}\left[ \sum_{0\leq t\leq k<\infty} \gamma^t\left(\cR(s_t,\boldsymbol{a}_t)+c\cdot \boldsymbol{1}_{\cN}(g(s_t))\right)\Big|s_0=s; \boldsymbol{a}_t\sim(\boldsymbol{\pi},\boldsymbol{\sigma})\right],\;\forall s\in\cS,
\\
v^k_A(s|\boldsymbol{\pi},\mathfrak{g})  &= -\mathbb{E}_{\mathfrak{g},\boldsymbol{\sigma}}\left[ \sum_{0\leq t\leq k<\infty} \gamma^t\cR(s_t,\boldsymbol{a}_t)\Big|s_0=s; \boldsymbol{a}_t\sim(\boldsymbol{\pi},\boldsymbol{\sigma})\right],\;\forall s\in\cS,
\\v^k(s|\boldsymbol{\pi},\mathfrak{g})&=\mathbb{E}_{\mathfrak{g},\boldsymbol{\sigma}}\left[ \sum_{0\leq t\leq k<\infty} \gamma^t\cR(s_t,\boldsymbol{a}_t)\Big|s_0=s; \boldsymbol{a}_t\sim(\boldsymbol{\pi},\boldsymbol{\sigma})\right],\;\forall s\in\cS,
\end{align*}
Let $\hat{\mathfrak{g}}$ and $\boldsymbol{\hat{\sigma}},\boldsymbol{\hat{\pi}}\in\boldsymbol{\Pi}$ be the Markov policies generated by the process $T^1v_S,T^2v_S,\ldots$, then $\hat{\mathfrak{g}}$ and $\boldsymbol{\hat{\sigma}},\boldsymbol{\hat{\pi}}\in\boldsymbol{\Pi}$ are optimal policies for $\cG^k$.     
\end{proposition}

\begin{proof}
The proof is achieved using similar arguments as those presented in Theorem 2 of \citep{shapley1953stochastic} with some modifications.
Denote by the \textit{finite} game $\cG^k$ of $k<\infty$ steps in which  {\fontfamily{cmss}\selectfont adversary} (the $N$ agents) maximises (minimise) the following objective $
v^k(s|\boldsymbol{\pi},\mathfrak{g})=-\mathbb{E}\left[\sum_{t=0}^k \gamma^t\left\{\cR(s_t, a_t,b_t)\right\}\Big|s_0=s\right]$ and {\fontfamily{cmss}\selectfont Switcher} maximises $v_C^k(s|\boldsymbol{\pi},\mathfrak{g})=-\mathbb{E}\left[\sum_{t=0}^k \gamma^t\left(\cR(s_t,\boldsymbol{a}_t)+c\cdot \boldsymbol{1}_{\cN}(g(s_t))\right)\Big|s_0=s\right]$. 
Suppose in the game ${\cG}^k$, {\fontfamily{cmss}\selectfont Switcher} is given a payoff of $-\cR(s,\bma)+\cP^{\bma}_{s's}v_S(s'|\boldsymbol{\pi},\mathfrak{g})$ given {\fontfamily{cmss}\selectfont Switcher} policy $\mathfrak{g}$ and for any given $s\in\cS$ and $\bma\sim  \bm\pi,\sigma$. Now the Markov policy $\mathfrak{g}$ guarantees {\fontfamily{cmss}\selectfont Switcher} a payoff of $v^k(s|\boldsymbol{\pi},\mathfrak{g})$. Now in the game $\cG^k$, after $n<\infty$ steps and using the policy $\mathfrak{g}$ gives {\fontfamily{cmss}\selectfont Switcher} an expected payoff of at least $v^k(s|\boldsymbol{\pi},\mathfrak{g})-\gamma^{n-1}\max\limits_{\bma\in\bm\cA}\cP^{\bma}_{s's}v^k(s'|\boldsymbol{\pi},\mathfrak{g})\leq v^k(s|\boldsymbol{\pi},\mathfrak{g})-\gamma^{n-1}\max\limits_{s'\in\cS}v^k(s'|\boldsymbol{\pi},\mathfrak{g}) $. Therefore, accounting for the $n$ steps, the total payoff for {\fontfamily{cmss}\selectfont Switcher} is at least $v^k(s|\boldsymbol{\pi},\mathfrak{g})-\gamma^{n-1}\max\limits_{s'\in\cS}v(s'|\boldsymbol{\pi},\mathfrak{g})-\sum_{t=0}^{n-1}\gamma^tv^{n-t}(s'|\boldsymbol{\pi},\mathfrak{g})= v^k(s|\boldsymbol{\pi},\mathfrak{g})-\gamma^{n-1}\max\limits_{s'\in\cS}v(s'|\boldsymbol{\pi},\mathfrak{g})-\gamma^n\frac{1-\gamma^n}{1-\gamma}\|v\|:=\bm{\tilde{v}}^{k,n}(s|\boldsymbol{\pi},\mathfrak{g})$. This expression holds for arbitrarily large values of $n$ in particular $\lim\limits_{n\to\infty}\bm{\tilde{v}}^{k,n}(s|\boldsymbol{\pi},\mathfrak{g})=v^k(s|\boldsymbol{\pi},\mathfrak{g})$ from which it follows that the policy $\mathfrak{g}$ is optimal for {\fontfamily{cmss}\selectfont Switcher}. Now by Assumption 3 it is easy to see for any fixed $K<\infty$, any policy which is optimal for {\fontfamily{cmss}\selectfont Switcher} in the game $\cG$ is optimal for {\fontfamily{cmss}\selectfont Switcher} in a game $\tilde{\cG}$ in which {\fontfamily{cmss}\selectfont Switcher}'s objective is $v^k_S(s|\boldsymbol{\pi},\mathfrak{g}) = -\mathbb{E}_{\mathfrak{g},\boldsymbol{\sigma}}\left[ \sum_{0\leq t\leq k<\infty} \gamma^t\left(\cR(s_t,\boldsymbol{a}_t)+c\cdot \boldsymbol{1}_{\cN}(g(s_t))\right)\Big|s_0=s; \boldsymbol{a}_t\sim(\boldsymbol{\pi},\boldsymbol{\sigma})\right]$ with the game being otherwise identical. After using analogous arguments for {\fontfamily{cmss}\selectfont Switcher} and {\fontfamily{cmss}\selectfont adversary}, we deduce the result.
\end{proof}

\begin{lemma}\label{lemma:uniqueness}
The value of the game $\cG$ is unique.    
\end{lemma}

\begin{proof}
Suppose there exists two values the game $\cG$, $v'_S$ and $v_S$. Then, since each is a solution, we have by Lemma \ref{lemma:bellman_contraction} that each is a fixed point of the Bellman operator and hence $Tv_S=v_S$ and $Tv'_S=v'_S$. Hence, we have the following:
\begin{align}
  \|v_S-v'_S\|=\|Tv_S-Tv'_S|\leq \gamma\|v_S-v'_S\|,  
\end{align}
whereafter, we immediately deduce that $v_S=v'_S$.
\end{proof}
Summing up the above results we have succeeded in proving Theorem \ref{thrm:minimax-exist}, moreover Proposition \ref{prop:eq-optimal} follows as an immediate consequence.
\end{proof}

To prove the Theorem \ref{theorem:q_learning}, we make use of the following result:
\begin{theorem}[Theorem 1, pg 4 in \citep{jaakkola1994convergence}]
Let $\Xi_t(s)$ be a random process that takes values in $\mathbb{R}^n$ and given by the following:
\begin{align}
    \Xi_{t+1}(s)=\left(1-\alpha_t(s)\right)\Xi_{t}(s)\alpha_t(s)L_t(s),
\end{align}
then $\Xi_t(s)$ converges to $0$ with probability $1$ under the following conditions:
\begin{itemize}
\item[i)] $0\leq \alpha_t\leq 1, \sum_t\alpha_t=\infty$ and $\sum_t\alpha_t<\infty$
\item[ii)] $\|\mathbb{E}[L_t|\mathcal{F}_t]\|\leq \gamma \|\Xi_t\|$, with $\gamma <1$;
\item[iii)] ${\rm Var}\left[L_t|\mathcal{F}_t\right]\leq c(1+\|\Xi_t\|^2)$ for some $c>0$.
\end{itemize}
\end{theorem}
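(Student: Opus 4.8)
The plan is to recognise the recursion as a Robbins--Monro stochastic approximation driven by a conditional contraction plus a martingale-difference noise, and to feed it into an almost-supermartingale (Robbins--Siegmund) convergence argument. First I would read the recursion with its two evident typos corrected, so that it is $\Xi_{t+1}(s)=(1-\alpha_t(s))\Xi_t(s)+\alpha_t(s)L_t(s)$ and condition (i) reads $\sum_t\alpha_t=\infty$, $\sum_t\alpha_t^2<\infty$. Then I would split the driving term as $L_t=G_t+w_t$ with $G_t:=\mathbb{E}[L_t\mid\mathcal{F}_t]$ and $w_t:=L_t-G_t$, so that $\mathbb{E}[w_t\mid\mathcal{F}_t]=0$; condition (ii) becomes $\|G_t\|\le\gamma\|\Xi_t\|$ and condition (iii) becomes $\mathbb{E}[\|w_t\|^2\mid\mathcal{F}_t]\le c(1+\|\Xi_t\|^2)$.

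The core computation is to track $V_t:=\|\Xi_t\|^2$. Substituting the split into the recursion and taking conditional expectation, the cross term between $(1-\alpha_t)\Xi_t+\alpha_tG_t$ and $\alpha_tw_t$ vanishes because $w_t$ is conditionally mean-zero and the first factor is $\mathcal{F}_t$-measurable. The deterministic part contracts: since $\|(1-\alpha_t)\Xi_t+\alpha_tG_t\|\le(1-\alpha_t)\|\Xi_t\|+\alpha_t\gamma\|\Xi_t\|=(1-\alpha_t(1-\gamma))\|\Xi_t\|$ and $(1-x)^2\le1-x$ on $[0,1]$, its square is at most $(1-\alpha_t(1-\gamma))V_t$. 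The noise part contributes $\alpha_t^2\,\mathbb{E}[\|w_t\|^2\mid\mathcal{F}_t]\le c\alpha_t^2(1+V_t)$ by (iii). Collecting terms gives the almost-supermartingale inequality
\[
\mathbb{E}[V_{t+1}\mid\mathcal{F}_t]\le(1+c\alpha_t^2)V_t-(1-\gamma)\alpha_tV_t+c\alpha_t^2 .
\]
Because $\sum_t\alpha_t^2<\infty$, both the multiplicative perturbation $\sum_t c\alpha_t^2$ and the additive term $\sum_t c\alpha_t^2$ are summable, so the Robbins--Siegmund theorem applies and yields, with probability one, that $V_t$ converges to a finite limit $V_\infty\ge0$ and that $\sum_t(1-\gamma)\alpha_tV_t<\infty$. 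Finally I would invoke $\sum_t\alpha_t=\infty$: if $V_\infty>0$ on a positive-probability event, then $\alpha_tV_t\ge\tfrac12V_\infty\alpha_t$ for large $t$ and $\sum_t\alpha_tV_t$ diverges, a contradiction; hence $V_\infty=0$ almost surely, i.e. $\Xi_t\to0$ with probability $1$.

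The step I expect to be the main obstacle is the treatment of the norm. The clean cross-term cancellation above is exact only when $\|\cdot\|$ is an inner-product norm and a single gain $\alpha_t$ drives all coordinates. In the regime this theorem is actually meant for, $\|\cdot\|$ is the weighted sup-norm in which the relevant Bellman-type operator is a contraction, and the updates are \emph{asynchronous}, with a per-state gain $\alpha_t(s)$ that is zero at unvisited states. For that case the $V_t=\|\Xi_t\|^2$ supermartingale no longer splits cleanly, so I would instead follow the two-stage argument of \citet{jaakkola1994convergence}: first show that the accumulated weighted noise $\sum_{\tau\le t}\bigl(\prod_{\tau<u\le t}(1-\alpha_u)\bigr)\alpha_\tau w_\tau$ is a martingale whose increments are square-summable (using $\sum_t\alpha_t^2<\infty$ together with (iii) and an a priori bound on $\|\Xi_t\|$), hence converges almost surely and tends to zero; then run a deterministic contraction recursion on the residual to obtain $\limsup_t\|\Xi_t\|\le\gamma\,\limsup_t\|\Xi_t\|$, which forces $\limsup_t\|\Xi_t\|=0$. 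Establishing the a priori boundedness of $\|\Xi_t\|$ needed to tame the state-dependent variance bound in (iii), and reconciling the differing visitation rates $\alpha_t(s)$ across coordinates, are the delicate points; everything else is the standard Robbins--Siegmund bookkeeping sketched above.
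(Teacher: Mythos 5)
You have correctly diagnosed that the statement as printed contains two typos (the missing $+$ in the recursion and $\sum_t\alpha_t<\infty$ in place of $\sum_t\alpha_t^2<\infty$), and your repaired reading matches the original. Note, however, that the paper itself offers no proof of this statement to compare against: it is imported verbatim from \citet{jaakkola1994convergence} as a black-box tool, and the proof that immediately follows it in the appendix is the proof of the Q-learning convergence theorem, which merely verifies hypotheses (i)--(iii) for the MARTA update. So the relevant comparison is with the cited source rather than with the paper.

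Measured against that source, your proposal is correct but takes a genuinely different primary route. Your main argument --- splitting $L_t=G_t+w_t$, using the inner-product expansion to kill the cross term, bounding the drift by $(1-\alpha_t(1-\gamma))V_t$ via $(1-x)^2\le 1-x$ on $[0,1]$, and closing with Robbins--Siegmund together with $\sum_t\alpha_t=\infty$ --- is complete and sound, but, as you yourself flag, only for an inner-product norm with a single synchronous gain: the expansion of $\|a+b\|^2$ needs the inner product, and with per-state gains $\alpha_t(s)$ that vanish at unvisited coordinates no uniform per-step contraction factor exists. \citet{jaakkola1994convergence} instead work in a weighted maximum norm with asynchronous updates and argue in two stages: a rescaling argument establishing a priori boundedness of $\|\Xi_t\|$, then a decomposition into an accumulated-noise process that converges to zero almost surely (their Lemma 1, using $\sum_t\alpha_t^2<\infty$ and condition (iii)) plus a dominating deterministic recursion yielding $\limsup_t\|\Xi_t\|\le\gamma\limsup_t\|\Xi_t\|$, hence zero. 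Your sketch of that second route is faithful and correctly isolates its two delicate points (the boundedness lemma and the unequal visitation rates). The trade-off is clear: your Robbins--Siegmund argument is shorter and self-contained, whereas the two-stage argument is the one actually needed here, since the paper's application of the theorem (in the proof of the Q-learning result) establishes condition (ii) through a contraction in $\|\cdot\|_\infty$ --- precisely the regime where your supermartingale computation does not apply. To fully discharge the theorem as used, you would need to execute, not merely sketch, the sup-norm asynchronous argument; as a blind reconstruction of the cited result, though, your proposal identifies the right structure at every step.
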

\begin{proof}
To prove the result, we show (i) - (iii) hold. Condition (i) holds by choice of learning rate. It therefore remains to prove (ii) - (iii). We first prove (ii). For this, we consider our variant of the Q-learning update rule:
\begin{align*}
\bmQ_{S,t+1}(s_t,\boldsymbol{a}_t,g|\cdot)=\bmQ_{S,t}&(s_t,\boldsymbol{a}_t,g|\cdot)
\\&\begin{aligned}+\alpha_t(s_t,\boldsymbol{a}_t)\Big[\max\left\{\boldsymbol{\hat{\bm\cM}}Q_S(s_{\tau_k},\boldsymbol{a},g|\cdot), \cR(s_{\tau_k},\boldsymbol{a},g)+\gamma\underset{\boldsymbol{a'}\in\boldsymbol{\mathcal{A}}}{\max}\;Q_S(s_{t+1},\boldsymbol{a'},g|\cdot)\right\}&
\\-\bmQ_{t}(s_t,\boldsymbol{a}_t,g|\cdot)\Big]&.
\end{aligned}
\end{align*}
After subtracting $\bmQ_S^\ast(s_t,\boldsymbol{a}_t,g|\cdot)$ from both sides and some manipulation we obtain that:
\begin{align*}
&\Xi_{t+1}(s_t,\boldsymbol{a}_t)
\\&=(1-\alpha_t(s_t,\boldsymbol{a}_t))\Xi_{t}(s_t,\boldsymbol{a}_t)
\\&\quad+\alpha_t(s_t,\boldsymbol{a}_t))\left[\max\left\{\boldsymbol{\hat{\cM}}Q_S(s_{\tau_k},\boldsymbol{a},g|\cdot), \cR_S(s_{\tau_k},\boldsymbol{a},g)+\gamma\underset{\boldsymbol{a'}\in\boldsymbol{\mathcal{A}}}{\max}\;\bmQ_S(s',\boldsymbol{a'},g|\cdot)\right\}-\bmQ_S^\ast(s_t,\boldsymbol{a}_t,g|\cdot)\right],  \end{align*}
where $\Xi_{t}(s_t,\boldsymbol{a}_t,g):=\bmQ_{S,t}(s_t,\boldsymbol{a}_t,g|\cdot)-Q_S^\star(s_t,\boldsymbol{a}_t,g|\cdot)$.

Let us now define by 
\begin{align*}
L_t(s_{\tau_k},\boldsymbol{a},g):=\max\left\{\boldsymbol{\hat{\bm\cM}}Q_S(s_{\tau_k},\boldsymbol{a},g|\cdot), \cR_S(s_{\tau_k},\boldsymbol{a},g)+\gamma\underset{\boldsymbol{a'}\in\boldsymbol{\mathcal{A}}}{\max}\;\bmQ_S(s',\boldsymbol{a'},g|\cdot)\right\}-\bmQ_S^\ast(s_t,\boldsymbol{a},g|\cdot).
\end{align*}
Then
\begin{align}
\Xi_{t+1}(s_t,\boldsymbol{a}_t,g)=(1-\alpha_t(s_t,\boldsymbol{a}_t))\Xi_{t}(s_t,\boldsymbol{a}_t,g)+\alpha_t(s_t,\boldsymbol{a}_t))\left[L_t(s_{\tau_k},\boldsymbol{a},g)\right].   
\end{align}

% Let $z_{\tau_k}\equiv s_{\tau_k}$
We now observe that
\begin{align}\nonumber
&\mathbb{E}\left[L_t(s_{\tau_k},\boldsymbol{a},g)|\mathcal{F}_t\right]
\\&\begin{aligned}=\sum_{s'\in\mathcal{S}}P(s';a,s_{\tau_k})\max\left\{\boldsymbol{\hat{\bm\cM}}Q_S(s_{\tau_k},\boldsymbol{a},g|\cdot), \cR_S(s_{\tau_k},\boldsymbol{a},g)+\gamma\underset{\boldsymbol{a'}\in\boldsymbol{\mathcal{A}}}{\max}\;\bmQ_S(s',\boldsymbol{a'},g|\cdot)\right\}\nonumber&
\\-\bmQ_S^\ast(s_{\tau_k},a,g|\cdot)&
\end{aligned}
\\&= T \bmQ_{S,t}(s,\boldsymbol{a},g|\cdot)-\bmQ_S^\ast(s,\boldsymbol{a},g). \label{expectation_L}
\end{align}
Now, using the fixed point property that implies $\bmQ^\ast=T \bmQ^\ast$, we find that
\begin{align}\nonumber
    \mathbb{E}\left[L_t(s_{\tau_k},\boldsymbol{a},g)|\mathcal{F}_t\right]&=T \bmQ_{S,t}(s,\boldsymbol{a},g|\cdot)-T \bmQ_S^\ast(s,\boldsymbol{a},g|\cdot)
    \\&\leq\left\|T \bmQ_{S,t}-T \bmQ^\ast\right\|\nonumber
    \\&\leq \gamma\left\| \bmQ_{S,t}- \bmQ^\ast\right\|_\infty=\gamma\left\|\Xi_t\right\|_\infty.
\end{align}
using the contraction property of $T$ established in Lemma \ref{lemma:bellman_contraction}. This proves (ii).

% Define by $\hat{F}(s_{\tau_k},a):=\left(\boldsymbol{\hat{\bm\cM}}Q_Ss_{\tau_k}), \cR_S(s_{\tau_k},\boldsymbol{a},g)+\gamma\underset{a\in\mathcal{A}}{\max}\;\sum_{s'\in\mathcal{S}}P(s';a,s_{\tau_k})Q_Ss')\right)$
We now prove iii), that is
\begin{align}
    {\rm Var}\left[L_t|\mathcal{F}_t\right]\leq c(1+\|\Xi_t\|^2).
\end{align}
Now by \eqref{expectation_L} we have that
\begin{align*}
  {\rm Var}\left[L_t|\mathcal{F}_t\right]&= {\rm Var}\left[\max\left\{\boldsymbol{\hat{\bm\cM}}Q_S(s_{\tau_k},\boldsymbol{a},g|\cdot), \cR_S(s_{\tau_k},\boldsymbol{a},g)+\gamma\underset{\boldsymbol{a'}\in\boldsymbol{\mathcal{A}}}{\max}\;\bmQ_S(s',\boldsymbol{a'},g|\cdot)\right\}-\bmQ_S^\ast(s_t,\boldsymbol{a},g|\cdot)\right]
  \\&= \mathbb{E}\Bigg[\Bigg(\max\left\{\boldsymbol{\hat{\bm\cM}}\bmQ_S(s_{\tau_k},\boldsymbol{a},g|\cdot), \cR_S(s_{\tau_k},\boldsymbol{a},g)+\gamma\underset{\boldsymbol{a'}\in\boldsymbol{\mathcal{A}}}{\max}\;\bmQ_S(s',\boldsymbol{a'},g|\cdot)\right\}
  \\&\qquad\qquad\qquad\qquad\qquad\quad\quad\quad-\bmQ_S^\ast(s_t,\boldsymbol{a},g|\cdot)-\left(T \bmQ_{S,t}(s,\boldsymbol{a},g|\cdot)-\bmQ_S^\ast(s,\boldsymbol{a},g|\cdot)\right)\Bigg)^2\Bigg]
      \\&= \mathbb{E}\left[\left(\max\left\{\boldsymbol{\hat{\bm\cM}}Q_S(s_{\tau_k},\boldsymbol{a},g|\cdot), \cR_S(s_{\tau_k},\boldsymbol{a},g)+\gamma\underset{\boldsymbol{a'}\in\boldsymbol{\mathcal{A}}}{\max}\;Q_S(s',\boldsymbol{a'},g|\cdot)\right\}-T \bmQ_{S,t}(s,\boldsymbol{a},g|\cdot)\right)^2\right]
    \\&= {\rm Var}\left[\max\left\{\boldsymbol{\hat{\bm\cM}}Q_S(s_{\tau_k},\boldsymbol{a},g|\cdot), \cR_S(s_{\tau_k},\boldsymbol{a},g)+\gamma\underset{\boldsymbol{a'}\in\boldsymbol{\mathcal{A}}}{\max}\;Q_S(s',\boldsymbol{a'},g|\cdot)\right\}-T \bmQ_{S,t}(s,\boldsymbol{a},g|\cdot))\right]
    \\&\leq c(1+\|\Xi_t\|^2),
\end{align*}
for some $c>0$ where the last line follows due to the boundedness of $Q$ (which follows from Assumptions 2 and 4). This concludes the proof of the Theorem.
\end{proof}

Theorem \ref{theorem:q_learning} proves the convergence of MARTA to the solution to the game $\cG$ (and that such a solution exists). In particular, the following result follows a consequence of Theorem \ref{theorem:q_learning}  and the uniqueness of the  value function solution of $\cG$:

\section{Convergence of MARTA with Linear Function Approximators}

In RL, an important consideration is the use of function approximators for the functions being learned during the learning process. We now extend the convergence results established in the previous section to linear function approximators. Linear function approximators are an important class due to their simplicity and convexity properties that do not suffer from issues such as convergence to suboptimal stationary points. Moreover, the following analysis is an important stepping stone for proving analogous results with other function approximator classes.

In addition to Assumptions 1~-~3, the results of this section are built under the following assumptions:
\noindent Assumption 4.
For any positive scalar $c$, there exists a scalar $\kappa_c$ such that for all $s\in\mathcal{S}$ and for any $t\in\mathbb{N}$ we have: $
    \mathbb{E}\left[1+\|s_t\|^c|s_0=s\right]\leq \kappa_c(1+\|s\|^c)$.

\noindent Assumption 5.
There exists scalars $C_1$ and $c_1$ such that for any function $v$ satisfying $|v(s)|\leq C_2(1+\|s\|^{c_2})$ for some scalars $c_2$ and $C_2$ we have that: $
    \sum_{t=0}^\infty\left|\mathbb{E}\left[v(s_t)|s_0=s\right]-\mathbb{E}[v(s_0)]\right|\leq C_1C_2(1+\|s_0\|^{c_1c_2})$.

\noindent Assumption 6.
There exists scalars $c$ and $C$ such that for any $s\in\cS$ we have that $
    |R(s,\cdot)|\leq C(1+\|s\|^c)$.

Theorem \ref{primal_convergence_theorem} is proven using a set of results that we now establish. 
 First we prove the following bound holds:

\begin{lemma}
For any $\bmQ\in L_2$ we have that
\begin{align}
    \left\|\mathfrak{F}\bmQ-\bmQ'\right\|\leq \gamma\left\|\bmQ-\bmQ'\right\|,
\end{align}
so that the operator $\mathfrak{F}$ is a contraction.
\end{lemma}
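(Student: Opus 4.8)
The plan is to show $\mathfrak{F}$ is a $\gamma$-contraction by peeling off the linear part $\gamma P$ and reducing to the non-expansiveness of the inner ``stopping'' map $G\bmQ := \min\{\hat{\bm\cM}\bmQ,\bmQ\}$. First I would subtract the two images of the operator. Since the reward term $\cR$ does not depend on $\bmQ$, it cancels, leaving
\begin{align*}
\mathfrak{F}\bmQ-\mathfrak{F}\bmQ' = \gamma\, P\big(\min\{\hat{\bm\cM}\bmQ,\bmQ\}-\min\{\hat{\bm\cM}\bmQ',\bmQ'\}\big).
\end{align*}
Taking norms and invoking the non-expansiveness of the transition kernel (Lemma~\ref{non_expansive_P}) immediately gives $\|\mathfrak{F}\bmQ-\mathfrak{F}\bmQ'\|\le \gamma\,\|G\bmQ-G\bmQ'\|$, so the whole claim reduces to proving that the map $G$ is non-expansive, i.e. $\|G\bmQ-G\bmQ'\|\le\|\bmQ-\bmQ'\|$.

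Second, I would establish the auxiliary fact that $\hat{\bm\cM}$ is itself non-expansive. Writing $\hat{\bm\cM}\bmQ(s,\boldsymbol a,g)=c+\min_{i,a^i}\max_{a^{-i}}\bmQ(s,(a^i,a^{-i}),g)$, the additive activation cost $c$ cancels in any difference, and applying Lemma~\ref{max_lemma} successively to the inner $\max$ and outer $\min$ over the finite action sets $\cA^{-i}$ and $\cA^i$ yields $\|\hat{\bm\cM}\bmQ-\hat{\bm\cM}\bmQ'\|\le\|\bmQ-\bmQ'\|$. Third, for the outer minimum I would use the elementary pointwise inequality $|\min\{a,b\}-\min\{a',b'\}|\le\max\{|a-a'|,|b-b'|\}$, applied with $a=\hat{\bm\cM}\bmQ$, $b=\bmQ$, $a'=\hat{\bm\cM}\bmQ'$, $b'=\bmQ'$. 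Passing to norms and bounding each of the two entries of the max by $\|\bmQ-\bmQ'\|$ (the first via step two, the second trivially) gives $\|G\bmQ-G\bmQ'\|\le\|\bmQ-\bmQ'\|$. Substituting back produces the contraction modulus $\gamma$, and with $\gamma<1$ the operator $\mathfrak{F}$ is a contraction, whence Banach's fixed-point theorem supplies the unique fixed point used in Theorem~\ref{primal_convergence_theorem_app}.

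The main obstacle is the final step, because \emph{both} arguments of the outer minimum depend on $\bmQ$, unlike the single-argument optimal-stopping situation covered directly by Lemma~\ref{max_min_inequality_2} and Lemma~\ref{max_triple_inequality}. A naive decomposition that freezes one argument and applies those lemmas twice via the triangle inequality would incur a spurious factor of $2$ and destroy the contraction; the joint pointwise $\max$-bound above is what avoids this. A second subtlety is that Theorem~\ref{primal_convergence_theorem_app} ultimately requires the contraction to hold in the stationary-distribution-weighted $L_2$ norm (this is precisely what yields the $(1-\gamma^2)^{-1/2}$ error bound), so I would verify that each inequality above survives in that weighted norm: the kernel estimate is exactly Lemma~\ref{non_expansive_P}, and the pointwise $\max$/$\min$ bounds integrate against the weighting measure, since a pointwise non-expansion is a non-expansion in any weighted $L_2$ norm. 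Care must be taken that the action-wise $\min$--$\max$ inside $\hat{\bm\cM}$ is controlled at each state before the weighting is applied, which is the step I would write out most carefully.
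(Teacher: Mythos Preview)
Your proposal is correct and follows essentially the same route as the paper: cancel the reward, pull out $\gamma P$ via Lemma~\ref{non_expansive_P}, and then control the inner map $\min\{\hat{\bm\cM}\bmQ,\bmQ\}$. The only minor difference is that the paper invokes the earlier Bellman-contraction lemma to claim $\hat{\bm\cM}$ is a $\gamma$-contraction on $Q$ (equation~\eqref{m_bound_q_twice}), whereas you prove directly from Lemma~\ref{max_lemma} that $\hat{\bm\cM}$ is merely non-expansive; your version is actually the cleaner argument, since $\hat{\bm\cM}$ as written contains no $\gamma$ factor and the contraction modulus is already supplied by the outer $\gamma P$, so non-expansiveness is all that is needed.
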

\begin{proof}
Now, we first note that by result iv) in the proof of Lemma \ref{lemma:bellman_contraction}, we deduced that for any $\bmQ, v\in L_2$ we have that
\begin{align*}
    \left\|\bm\cM\bmQ-\left[ \cR(\cdot,\boldsymbol{a})+\gamma\mathcal{P}^{\boldsymbol{a}}v'\right]\right\|\leq \gamma\left\|v-v'\right\|.
\end{align*} 
where $\mathcal{P}^{\boldsymbol{a}}_{ss'}=:\sum_{s'\in\mathcal{S}}P(s';\boldsymbol{a},s)$.

Hence, using the contraction property of $\bm\cM$ and results i)-iv) of Lemma \ref{lemma:bellman_contraction}, we readily deduce the following bound:
\begin{align}\max\left\{\left\|\bm\cM\bmQ-\hat{\bmQ}\right\|,\left\|\bm\cM\bmQ-\bm\cM\hat{\bmQ}\right\|\right\}\leq \gamma\left\|\bmQ-\hat{\bmQ}\right\|.
\label{m_bound_q_twice}
\end{align}
    
We now observe that $\mathfrak{F}$ is a contraction. Indeed, since for any $\bmQ,\bmQ'\in L_2$ we have that:
\begin{align*}
&\left\|\mathfrak{F}\bmQ-\mathfrak{F}\bmQ'\right\|
\\&=\left\|\cR+\gamma P \min\{\hat{\bm\cM}Q,Q\}-\left(\cR+\gamma P \min\{\hat{\bm\cM}Q',Q'\}\right)\right\|
\\&=\gamma\left\| P \min\{\hat{\bm\cM}Q,Q\}-P \min\{\hat{\bm\cM}Q',Q'\}\right\|
% \\&\leq\gamma\left\| P\right\|\left\| \max\{\hat{\bm\cM}Q,Q\},-\max\{\hat{\bm\cM}Q',Q'\},\right\|
\\&\leq\gamma \max\left\{\left\|\bmQ-\bmQ'\right\|,\gamma\left\|\bmQ-\bmQ'\right\|\right\}
\\&=\gamma \left\|\bmQ-\bmQ'\right\|
\end{align*}
using the Cauchy-Schwarz inequality, \eqref{m_bound_q_twice} and again using the non-expansiveness of $P$.
\end{proof}

We next show that the following two bounds hold:
\begin{lemma}\label{projection_F_contraction_lemma}
For any $\bmQ\in\mathcal{V}$ we have that
\begin{itemize}
    \item[i)] 
$\qquad\qquad
    \left\|\Pi \mathfrak{F}\bmQ-\Pi \mathfrak{F}\bar{\bmQ}\right\|\leq \gamma\left\|\bmQ-\bar{\bmQ}\right\|$,
    \item[ii)]$\qquad\qquad\left\|\Phi r^\star - \bmQ^\star\right\|\leq \frac{1}{\sqrt{1-\gamma^2}}\left\|\Pi \bmQ^\star - \bmQ^\star\right\|$. 
\end{itemize}
\end{lemma}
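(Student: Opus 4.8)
The plan is to handle the two parts in sequence, leveraging the $\gamma$-contraction of $\mathfrak{F}$ just established. For part (i), I would first record that $\Pi$ is the orthogonal projection onto the column space of $\Phi=\{\phi_1,\ldots,\phi_p\}$ with respect to the $L_2$ inner product (weighted by the stationary distribution furnished by the ergodicity in Assumption 1), so it is non-expansive: $\|\Pi u-\Pi v\|\leq\|u-v\|$. Composing this with the contraction of $\mathfrak{F}$ gives, for any $\bmQ,\bar{\bmQ}$,
\[
\|\Pi\mathfrak{F}\bmQ-\Pi\mathfrak{F}\bar{\bmQ}\|\leq\|\mathfrak{F}\bmQ-\mathfrak{F}\bar{\bmQ}\|\leq\gamma\|\bmQ-\bar{\bmQ}\|,
\]
which is exactly (i). As a by-product, Banach's fixed-point theorem guarantees that $\Pi\mathfrak{F}$ admits a unique fixed point of the form $\Phi r^\star$, justifying the existence of the limit point $r^\star$ asserted in Theorem \ref{primal_convergence_theorem_app}.

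For part (ii), I would exploit the two fixed-point identities $\Phi r^\star=\Pi\mathfrak{F}(\Phi r^\star)$ and $\bmQ^\star=\mathfrak{F}\bmQ^\star$; the latter yields $\Pi\bmQ^\star=\Pi\mathfrak{F}\bmQ^\star$. The key step is a Pythagorean decomposition: writing $\Phi r^\star-\bmQ^\star=(\Phi r^\star-\Pi\bmQ^\star)+(\Pi\bmQ^\star-\bmQ^\star)$, the first summand lies in the column space of $\Phi$, while the second is the projection residual $\Pi\bmQ^\star-\bmQ^\star$, which is orthogonal to that subspace. Hence
\[
\|\Phi r^\star-\bmQ^\star\|^2=\|\Phi r^\star-\Pi\bmQ^\star\|^2+\|\Pi\bmQ^\star-\bmQ^\star\|^2.
\]
I would then bound the in-subspace term via part (i) and the fixed-point identities, $\|\Phi r^\star-\Pi\bmQ^\star\|=\|\Pi\mathfrak{F}(\Phi r^\star)-\Pi\mathfrak{F}\bmQ^\star\|\leq\gamma\|\Phi r^\star-\bmQ^\star\|$. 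Substituting this into the identity above gives $\|\Phi r^\star-\bmQ^\star\|^2\leq\gamma^2\|\Phi r^\star-\bmQ^\star\|^2+\|\Pi\bmQ^\star-\bmQ^\star\|^2$, and rearranging (using $\gamma\in[0,1)$) produces the stated bound $\|\Phi r^\star-\bmQ^\star\|\leq(1-\gamma^2)^{-1/2}\|\Pi\bmQ^\star-\bmQ^\star\|$.

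The main obstacle is justifying the orthogonality invoked in the Pythagorean step, since that decomposition is valid only when $\Pi$ is the orthogonal projection under the specific inner product whose induced norm is the $\|\cdot\|$ appearing in the statement. I would pin this down by fixing $\|\cdot\|$ to be the stationary-distribution-weighted $L_2$ norm (well-defined under Assumption 1), for which $\Pi\bmQ^\star-\bmQ^\star$ is orthogonal to $\mathrm{span}\,\Phi$ by the defining variational property of the projection; everything else then reduces to a routine application of (i) and the contraction of $\mathfrak{F}$.
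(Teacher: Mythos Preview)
Your proposal is correct and follows essentially the same route as the paper: part (i) is obtained by composing the non-expansiveness of the orthogonal projection $\Pi$ with the $\gamma$-contraction of $\mathfrak{F}$, and part (ii) uses the Pythagorean split $\Phi r^\star-\bmQ^\star=(\Phi r^\star-\Pi\bmQ^\star)+(\Pi\bmQ^\star-\bmQ^\star)$ together with the fixed-point identities and the contraction bound to close the inequality. Your explicit identification of the norm as the stationary-distribution-weighted $L_2$ norm (needed for the orthogonality step) is a useful clarification that the paper leaves implicit.
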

\begin{proof}
The first result is straightforward since as $\Pi$ is a projection it is non-expansive and hence:
\begin{align*}
    \left\|\Pi \mathfrak{F}\bmQ-\Pi \mathfrak{F}\bar{\bmQ}\right\|\leq \left\| \mathfrak{F}\bmQ-\mathfrak{F}\bar{\bmQ}\right\|\leq \gamma \left\|\bmQ-\bar{\bmQ}\right\|,
\end{align*}
using the contraction property of $\mathfrak{F}$. This proves i). For ii), we note that by the orthogonality property of projections we have that $\left\langle\Phi r^\star - \Pi \bmQ^\star,\Phi r^\star - \Pi \bmQ^\star\right\rangle=0$, hence we observe that:
\begin{align*}
    \left\|\Phi r^\star - \bmQ^\star\right\|^2&=\left\|\Phi r^\star - \Pi \bmQ^\star\right\|^2+\left\|\bmQ^\ast - \Pi \bmQ^\star\right\|^2
\\&=\left\|\Pi \mathfrak{F}\Phi r^\star - \Pi \bmQ^\star\right\|^2+\left\|\bmQ^\ast - \Pi \bmQ^\star\right\|^2
\\&\leq\left\|\mathfrak{F}\Phi r^\star -  \bmQ^\star\right\|^2+\left\|\bmQ^\ast - \Pi \bmQ^\star\right\|^2
\\&=\left\|\mathfrak{F}\Phi r^\star -  \mathfrak{F}\bmQ^\star\right\|^2+\left\|\bmQ^\ast  - \Pi \bmQ^\star\right\|^2
\\&\leq\gamma^2\left\|\Phi r^\star -  \bmQ^\star\right\|^2+\left\|\bmQ^\ast - \Pi \bmQ^\star\right\|^2,
\end{align*}
after which we readily deduce the desired result.
\end{proof}

\begin{lemma}
Define  the operator $H$ by the following: 
\\\begin{center}
$
  H\bm{Q}(s,a,b)=  \begin{cases}
			\hat{\bm\cM}\bmQ(s,\bma), & \text{if}\;\; \hat{\bm\cM}\bmQ(s,\bma)<\Phi r^\star
   			\\
            \bm{Q}(s,\bma), & \text{otherwise},
		 \end{cases}$
   \end{center}
where  we define $\tilde{\mathfrak{F}}$ by: $
    \tilde{\mathfrak{F}}\bmQ:=\cR +\gamma PH\bmQ$.

For any $\bmQ,\bar{\bmQ}\in L_2$ we have that
\begin{align}
    \left\|\tilde{\mathfrak{F}}\bmQ-\tilde{\mathfrak{F}}\bar{\bmQ}\right\|\leq \gamma \left\|\bmQ-\bar{\bmQ}\right\|
\end{align}
and hence $\tilde{\mathfrak{F}}$ is a contraction mapping.
\end{lemma}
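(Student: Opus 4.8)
The plan is to exploit the fact that $\tilde{\mathfrak{F}}$ has exactly the same outer structure as $\mathfrak{F}$, differing only in that the inner term $\min\{\hat{\bm\cM}\bmQ,\bmQ\}$ is replaced by $H\bmQ$; the argument can therefore follow the proof that $\mathfrak{F}$ is a contraction almost verbatim. First I would write $\tilde{\mathfrak{F}}\bmQ-\tilde{\mathfrak{F}}\bar{\bmQ}=\gamma P\left(H\bmQ-H\bar{\bmQ}\right)$, since the reward term $\cR$ is independent of the argument and cancels. Applying the non-expansiveness of the transition kernel $P$ (Lemma \ref{non_expansive_P}) then gives $\|\tilde{\mathfrak{F}}\bmQ-\tilde{\mathfrak{F}}\bar{\bmQ}\|\le\gamma\|H\bmQ-H\bar{\bmQ}\|$, so the whole statement reduces to showing that the selection operator $H$ is non-expansive, i.e. $\|H\bmQ-H\bar{\bmQ}\|\le\|\bmQ-\bar{\bmQ}\|$.

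For the non-expansiveness of $H$ I would observe that, at each $(s,\bma,g)$, $H$ returns one of the two candidate values $\hat{\bm\cM}\bmQ$ or $\bmQ$ according to whether $\hat{\bm\cM}\bmQ$ falls below the frozen threshold $\Phi r^\star$. The target is the bound $\|H\bmQ-H\bar{\bmQ}\|\le\max\{\|\hat{\bm\cM}\bmQ-\hat{\bm\cM}\bar{\bmQ}\|,\|\bmQ-\bar{\bmQ}\|\}$, after which the established contraction estimate \eqref{m_bound_q_twice} for the intervention operator (which yields $\|\hat{\bm\cM}\bmQ-\hat{\bm\cM}\bar{\bmQ}\|\le\gamma\|\bmQ-\bar{\bmQ}\|\le\|\bmQ-\bar{\bmQ}\|$) closes the inequality. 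To obtain this bound I would split into the four regions determined by the two conditions $\hat{\bm\cM}\bmQ\lessgtr\Phi r^\star$ and $\hat{\bm\cM}\bar{\bmQ}\lessgtr\Phi r^\star$: on the two \emph{matched} regions $H$ reduces to $\hat{\bm\cM}$ on one and to the identity on the other, so the corresponding estimates are immediate, and the min/max manipulation lemmas (Lemmas \ref{max_min_inequality_2} and \ref{max_triple_inequality}) are used to package these selections.

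The main obstacle is the two \emph{mismatched} regions, where one function triggers an intervention while the other does not, so that $H\bmQ=\hat{\bm\cM}\bmQ$ must be compared directly with $H\bar{\bmQ}=\bar{\bmQ}$ (and symmetrically). Here the common threshold $\Phi r^\star$ is what reconciles the two branches: on such a region one has $\hat{\bm\cM}\bmQ<\Phi r^\star\le\hat{\bm\cM}\bar{\bmQ}$, and I would insert $\Phi r^\star$ as an intermediate value to bound $|\hat{\bm\cM}\bmQ-\bar{\bmQ}|$ by the matched-region quantities. The delicate point — and the reason a naive termwise estimate does not suffice — is that $\hat{\bm\cM}\bmQ(s,\bma,g)$ depends on $\bmQ$ through a $\min$–$\max$ over the action coordinates at state $s$, not merely on the value at $(s,\bma,g)$; consequently the reconciliation must be carried out at the level of the norm $\|\cdot\|$ rather than pointwise, leaning on the contraction of $\hat{\bm\cM}$ supplied by \eqref{m_bound_q_twice} together with the $\mathrm{val}$-inequality of Lemma \ref{max_l.val_inequality}. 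Once $H$ is shown to be non-expansive, combining with the first paragraph yields $\|\tilde{\mathfrak{F}}\bmQ-\tilde{\mathfrak{F}}\bar{\bmQ}\|\le\gamma\|\bmQ-\bar{\bmQ}\|$, establishing that $\tilde{\mathfrak{F}}$ is a $\gamma$-contraction.
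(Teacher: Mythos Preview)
Your proposal is essentially the same approach as the paper's: both cancel $\cR$, use the non-expansiveness of $P$ to reduce to $\gamma\|H\bmQ-H\bar\bmQ\|$, and then control $H\bmQ-H\bar\bmQ$ by splitting into the regions where each of $\bmQ,\bar\bmQ$ triggers or does not trigger the intervention branch, invoking the bound \eqref{m_bound_q_twice} for the mismatched cross terms.

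The one difference worth noting is your intermediate target $\|H\bmQ-H\bar\bmQ\|\le\max\{\|\hat{\bm\cM}\bmQ-\hat{\bm\cM}\bar\bmQ\|,\|\bmQ-\bar\bmQ\|\}$. This two-term bound is sharper than what the paper aims for and, as you yourself flag, is not pointwise true on the mismatched regions; you then fall back on \eqref{m_bound_q_twice} anyway to rescue it at the norm level. The paper skips this detour: it simply records that $|H\bmQ-H\bar\bmQ|$ at each point equals one of the four differences $\hat{\bm\cM}\bmQ-\hat{\bm\cM}\bar\bmQ$, $\bmQ-\bar\bmQ$, $\hat{\bm\cM}\bmQ-\bar\bmQ$, $\bmQ-\hat{\bm\cM}\bar\bmQ$, passes to the max of their norms, and bounds the two cross terms directly by $\gamma\|\bmQ-\bar\bmQ\|$ via \eqref{m_bound_q_twice}. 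Since you end up using \eqref{m_bound_q_twice} on the cross terms regardless, the threshold-insertion argument with $\Phi r^\star$ is an unnecessary complication; the paper's route is shorter and avoids the delicate pointwise/norm-level distinction you describe.
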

\begin{proof}
Using \eqref{m_bound_q_twice}, we now observe that
\begin{align*}
    \left\|\tilde{\mathfrak{F}}\bmQ-\tilde{\mathfrak{F}}\bar{\bmQ}\right\|&=\left\|\cR+\gamma PH\bmQ -\left(\cR+\gamma PH\bar{\bmQ}\right)\right\|
\\&\leq \gamma\left\|H\bmQ - H\bar{\bmQ}\right\|
\\&\begin{aligned}
\leq \gamma\Big\|\max\Big\{\hat{\bm\cM}\bmQ-\hat{\bm\cM}\bar{\bmQ},\hat{\bm\cM}\bmQ-\bar{\bmQ},\hat{\bm\cM}\bar{\bmQ}-\bmQ\Big\}\Big\|&
\end{aligned}
\\&\begin{aligned}
\leq \gamma\max\Big\{\left\|\hat{\bm\cM}\bmQ-\hat{\bm\cM}\bar{\bmQ}\right\|,\left\|\hat{\bm\cM}\bmQ-\bar{\bmQ}\right\|,\left\|\hat{\bm\cM}\bar{\bmQ}-\bmQ\right\|\Big\}&
\end{aligned}
\\&\leq \gamma\max\left\{\gamma\left\|\bmQ-\bar{\bmQ}\right\|,\left\|\bmQ-\bar{\bmQ}\right\|\right\}
\\&=\gamma\left\|\bmQ-\bar{\bmQ}\right\|,
\end{align*}
again using \eqref{m_bound_q_twice} and the non-expansive property of $P$.
\end{proof}
\begin{lemma}
Define by $\tilde{\bmQ}:=\cR+\gamma Pv^{\boldsymbol{\tilde{\sigma}}}$ where
\begin{align}
    v^{\boldsymbol{\tilde{\sigma}}}(s):= \min\left\{\hat{\bm\cM}\bmQ^{\boldsymbol{\tilde{\sigma}}}(s,\bma), \cR(s, \bma)+\gamma\mathbb{E}_{s'\sim P}\left[ v^{\boldsymbol{\tilde{\sigma}}}(s')\right]\right\}, \label{v_tilde_definition}
\end{align}
then $\tilde{\bmQ}$ is a fixed point of $\tilde{\mathfrak{F}}\tilde{\bmQ}$, that is $\tilde{\mathfrak{F}}\tilde{\bmQ}=\tilde{\bmQ}$. 
\end{lemma}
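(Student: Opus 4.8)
The plan is to prove the fixed-point identity by collapsing both sides to the same object and reducing the whole claim to a single functional equation. Since $\tilde{\mathfrak{F}}\tilde{\bmQ} = \cR + \gamma P H\tilde{\bmQ}$ by definition of $\tilde{\mathfrak{F}}$, while $\tilde{\bmQ} = \cR + \gamma P v^{\boldsymbol{\tilde{\sigma}}}$ by the hypothesis of the lemma, the reward terms $\cR$ and the prefactor $\gamma P$ are identical on both sides. Consequently $\tilde{\mathfrak{F}}\tilde{\bmQ} = \tilde{\bmQ}$ follows as soon as I can show that $P$ acts identically on $H\tilde{\bmQ}$ and on $v^{\boldsymbol{\tilde{\sigma}}}$, for which it is sufficient (and natural) to establish the pointwise identity $H\tilde{\bmQ} = v^{\boldsymbol{\tilde{\sigma}}}$. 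I would therefore isolate this as the sole thing to prove.

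First I would unfold the defining equation \eqref{v_tilde_definition} for $v^{\boldsymbol{\tilde{\sigma}}}$ and recognise that its continuation branch is exactly $\tilde{\bmQ}$. Indeed $\tilde{\bmQ}(s,\bma) = \cR(s,\bma) + \gamma\mathbb{E}_{s'\sim P}[v^{\boldsymbol{\tilde{\sigma}}}(s')]$ is precisely the second argument of the $\min$ in \eqref{v_tilde_definition}, so the equation collapses to $v^{\boldsymbol{\tilde{\sigma}}} = \min\{\hat{\bm\cM}\bmQ^{\boldsymbol{\tilde{\sigma}}}, \tilde{\bmQ}\}$. I would then identify $\bmQ^{\boldsymbol{\tilde{\sigma}}}$ with $\tilde{\bmQ}$, which is legitimate because both are the one-step state-action function generated from the same continuation value $v^{\boldsymbol{\tilde{\sigma}}}$ under the switching policy $\tilde{\sigma}$, yielding $v^{\boldsymbol{\tilde{\sigma}}} = \min\{\hat{\bm\cM}\tilde{\bmQ}, \tilde{\bmQ}\}$. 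It then remains to match this $\min$ against the case-split that defines $H$: on the activation region $H$ returns $\hat{\bm\cM}\tilde{\bmQ}$ and elsewhere it returns $\tilde{\bmQ}$, so both expressions have the same two possible values and differ only in how the region is selected.

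The main obstacle is reconciling the two selection rules: the $\min$ picks the branch by the pointwise comparison of $\hat{\bm\cM}\tilde{\bmQ}$ against $\tilde{\bmQ}$, whereas $H$ uses the \emph{frozen} threshold $\Phi r^\star$, selecting $\hat{\bm\cM}\tilde{\bmQ}$ on $\{\hat{\bm\cM}\tilde{\bmQ} < \Phi r^\star\}$. Exact equality of $H\tilde{\bmQ}$ and $\min\{\hat{\bm\cM}\tilde{\bmQ},\tilde{\bmQ}\}$ therefore hinges on these two partitions of the state-action space coinciding, which is not automatic. I would close this gap by invoking the construction of $\tilde{\sigma}$ as the switching policy that is greedy with respect to the limit approximation $\Phi r^\star$: by definition it activates $\hat{\bm\cM}$ exactly on the set where $\hat{\bm\cM}\tilde{\bmQ}$ falls below the frozen threshold, and this is precisely the set on which $\hat{\bm\cM}\tilde{\bmQ}$ is the smaller of the two min-arguments, with $\tilde{\bmQ}$ selected on the complement. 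Establishing this region-consistency is the crux of the argument; once it is in place, $H\tilde{\bmQ} = v^{\boldsymbol{\tilde{\sigma}}}$, and substituting back into $\tilde{\mathfrak{F}}\tilde{\bmQ} = \cR + \gamma P H\tilde{\bmQ}$ gives $\tilde{\mathfrak{F}}\tilde{\bmQ} = \cR + \gamma P v^{\boldsymbol{\tilde{\sigma}}} = \tilde{\bmQ}$, as required.
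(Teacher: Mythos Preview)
Your proposal takes essentially the same approach as the paper: both reduce the claim to showing $H\tilde{\bmQ} = v^{\boldsymbol{\tilde{\sigma}}}$ and then conclude via $\tilde{\mathfrak{F}}\tilde{\bmQ} = \cR + \gamma P H\tilde{\bmQ} = \cR + \gamma P v^{\boldsymbol{\tilde{\sigma}}} = \tilde{\bmQ}$. You are in fact more explicit than the paper about the region-consistency issue between the $\Phi r^\star$-threshold in $H$ and the $\min$-selection in \eqref{v_tilde_definition}; the paper's proof simply writes out the case split for $H\tilde{\bmQ}$, substitutes $\tilde{\bmQ}=\cR+\gamma Pv^{\boldsymbol{\tilde{\sigma}}}$ in the non-intervention branch, and asserts the result equals $v^{\boldsymbol{\tilde{\sigma}}}$ without further justification.
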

\begin{proof}
We begin by observing that
\begin{align*}
H\tilde{\bmQ}(s,\bma)&=H\left(\cR(s,\bma)+\gamma \cP^{\bma}_{ss'} v^{\boldsymbol{\tilde{\sigma}}}(s')\right)    
\\&= \begin{cases}
			\hat{\bm\cM}\bmQ(s,\bma), & \text{if} \;\; \hat{\bm\cM}\bmQ(s,\bma)>\Phi r^\star\\
            \bm{Q}(s,\bma), & \text{otherwise},
		 \end{cases}
\\&= \begin{cases}
				\hat{\bm\cM}\bmQ(s,\bma), & \text{if}\;\;\hat{\bm\cM}\bmQ(s,\bma)>\Phi r^\star,\\
            \cR(s,\bma)+\gamma Pv^{\boldsymbol{\tilde{\sigma}}}, & \text{otherwise},
		 \end{cases}
\\&=v^{\boldsymbol{\tilde{\sigma}}}(s).
\end{align*}
Hence,
\begin{align}
    \tilde{\mathfrak{F}}\tilde{\bmQ}=\cR+\gamma PH\tilde{\bmQ}=\cR+\gamma Pv^{\boldsymbol{\tilde{\sigma}}}=\tilde{\bmQ}. 
\end{align}
which proves the result.
\end{proof}
\begin{lemma}\label{value_difference_Q_difference}
The following bound holds:
\begin{align}
    \mathbb{E}\left[v^{\boldsymbol{\hat{\sigma}}}(s_0)\right]-\mathbb{E}\left[v^{\boldsymbol{\tilde{\sigma}}}(s_0)\right]\leq 2\left[(1-\gamma)\sqrt{(1-\gamma^2)}\right]^{-1}\left\|\Pi \bmQ^\star -\bmQ^\star\right\|.
\label{F_tilde_fixed_point}\end{align}
\end{lemma}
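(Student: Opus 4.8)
The plan is to read this as a performance-loss bound of exactly the Tsitsiklis--Van Roy optimal-stopping type (the target constant $2[(1-\gamma)\sqrt{1-\gamma^2}]^{-1}$ is precisely theirs), comparing the value $v^{\boldsymbol{\tilde{\sigma}}}$ of the switching policy induced by the approximation $\Phi r^\star$ against the optimal value $v^{\boldsymbol{\hat{\sigma}}}$. All the machinery is already in place: $\bmQ^\star$ is the fixed point of the contraction $\mathfrak{F}$, $\tilde{\bmQ}=\cR+\gamma P v^{\boldsymbol{\tilde{\sigma}}}$ is the fixed point of the contraction $\tilde{\mathfrak{F}}$ with $H\tilde{\bmQ}=v^{\boldsymbol{\tilde{\sigma}}}$, and Lemma \ref{projection_F_contraction_lemma}(ii) supplies $\|\Phi r^\star-\bmQ^\star\|\le (1-\gamma^2)^{-1/2}\|\Pi\bmQ^\star-\bmQ^\star\|$. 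So the whole argument reduces to bounding the value gap by a constant multiple of $\|\Phi r^\star-\bmQ^\star\|$ and substituting this last inequality at the end.

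First I would control the one-step mismatch between the optimal operator and the induced-policy operator at $\bmQ^\star$. Since $\mathfrak{F}\bmQ^\star=\bmQ^\star$ and $\tilde{\mathfrak{F}}\bmQ^\star=\cR+\gamma P H\bmQ^\star$, non-expansiveness of $P$ gives $\|\tilde{\mathfrak{F}}\bmQ^\star-\bmQ^\star\|=\gamma\|P(H\bmQ^\star-\min\{\hat{\bm\cM}\bmQ^\star,\bmQ^\star\})\|\le\gamma\|H\bmQ^\star-\min\{\hat{\bm\cM}\bmQ^\star,\bmQ^\star\}\|$. The inner quantity is where the proof does its real work: $H$ selects its branch by comparing $\hat{\bm\cM}\bmQ^\star$ against the threshold $\Phi r^\star$, whereas the optimal value takes the genuine pointwise $\min$ against $\bmQ^\star$. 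On any state where the two rules agree the difference is zero; on a state where they disagree the intervention value $\hat{\bm\cM}\bmQ^\star$ and the continuation value $\bmQ^\star$ must straddle the threshold $\Phi r^\star$, so the branch chosen by $H$ differs from the optimal branch by at most $|\Phi r^\star-\bmQ^\star|$ pointwise. This yields $\|\tilde{\mathfrak{F}}\bmQ^\star-\bmQ^\star\|\le\gamma\|\Phi r^\star-\bmQ^\star\|$.

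Next I would invoke the contraction property of $\tilde{\mathfrak{F}}$ together with the geometric bound $\|\tilde{\bmQ}-\bmQ^\star\|\le\|\tilde{\mathfrak{F}}\bmQ^\star-\bmQ^\star\|+\gamma\|\tilde{\bmQ}-\bmQ^\star\|$, i.e. $\|\tilde{\bmQ}-\bmQ^\star\|\le (1-\gamma)^{-1}\|\tilde{\mathfrak{F}}\bmQ^\star-\bmQ^\star\|$. Because $v^{\boldsymbol{\hat{\sigma}}}=\min\{\hat{\bm\cM}\bmQ^\star,\bmQ^\star\}$ (from $\mathfrak{F}\bmQ^\star=\bmQ^\star$) and $v^{\boldsymbol{\tilde{\sigma}}}=H\tilde{\bmQ}$, I would split $v^{\boldsymbol{\hat{\sigma}}}-v^{\boldsymbol{\tilde{\sigma}}}=(\min\{\hat{\bm\cM}\bmQ^\star,\bmQ^\star\}-H\bmQ^\star)+(H\bmQ^\star-H\tilde{\bmQ})$, bounding the first summand by the straddle estimate and the second by the non-expansiveness of $H$ (established inside the $\tilde{\mathfrak{F}}$-contraction lemma), $\|H\bmQ^\star-H\tilde{\bmQ}\|\le\|\bmQ^\star-\tilde{\bmQ}\|$. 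Taking expectations under the initial distribution and applying Cauchy--Schwarz to pass from the pointwise gap to the norm $\|\cdot\|$, the two summands combine into $(1-\gamma)^{-1}$ times $\|\Phi r^\star-\bmQ^\star\|$ (up to the leading constant), after which Lemma \ref{projection_F_contraction_lemma}(ii) converts $\|\Phi r^\star-\bmQ^\star\|$ into $(1-\gamma^2)^{-1/2}\|\Pi\bmQ^\star-\bmQ^\star\|$ and delivers the claimed factor.

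The step I expect to be the main obstacle is the straddle/threshold estimate: making rigorous the claim that a switching decision taken greedily with respect to $\Phi r^\star$ rather than $\bmQ^\star$ costs at most a constant times the approximation error. This needs a careful case split on the sign of $\hat{\bm\cM}\bmQ^\star-\bmQ^\star$ versus $\hat{\bm\cM}\bmQ^\star-\Phi r^\star$, and care that the final bound is expressed in the weighted $L_2$ norm $\|\cdot\|$ used throughout rather than a sup norm, since the disagreement region is state-dependent. The factor $2$ in the stated constant is the natural price of two such decision mismatches---one at $\bmQ^\star$ and one propagated through $\tilde{\bmQ}$---and provides slack even when the tightest pointwise estimate for a single mismatch is smaller.
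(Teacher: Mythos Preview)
Your approach is correct and will deliver the bound (indeed with the constant $1$ rather than $2$, so you are right that the stated $2$ is slack), but it is not the route the paper takes. The paper's proof hinges on one clean observation that you do not use: the two operators $\mathfrak{F}$ and $\tilde{\mathfrak{F}}$ \emph{coincide at the threshold point} $\Phi r^\star$, because $H(\Phi r^\star)=\min\{\hat{\bm\cM}\Phi r^\star,\Phi r^\star\}$ by definition of $H$. With that identity in hand, the paper simply triangulates $\|\tilde{\bmQ}-\bmQ^\star\|$ around $\mathfrak{F}(\Phi r^\star)=\tilde{\mathfrak{F}}(\Phi r^\star)$, applies the $\gamma$-contraction of each operator from its own fixed point, and rearranges to get $\|\tilde{\bmQ}-\bmQ^\star\|\le 2\gamma(1-\gamma)^{-1}\|\Phi r^\star-\bmQ^\star\|$; the value gap is then read off directly from the identity $\tilde{\bmQ}-\bmQ^\star=\gamma P(v^{\boldsymbol{\tilde{\sigma}}}-v^{\boldsymbol{\hat{\sigma}}})$ together with stationarity and Jensen, giving the extra factor $1/\gamma$ that cancels. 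No pointwise case analysis, no straddle estimate, and no separate decomposition of $v^{\boldsymbol{\hat{\sigma}}}-v^{\boldsymbol{\tilde{\sigma}}}$ are needed.

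Your route trades that single algebraic identity for a pointwise threshold argument plus non-expansiveness of $H$. It is more hands-on and, as you anticipated, the disagreement-region estimate is where the care goes (incidentally, your verbal description ``straddle'' is slightly off: on a disagreement state both $\hat{\bm\cM}\bmQ^\star$ and $\bmQ^\star$ lie on the \emph{same} side of $\Phi r^\star$, but the inequality $|H\bmQ^\star-\min\{\hat{\bm\cM}\bmQ^\star,\bmQ^\star\}|\le|\Phi r^\star-\bmQ^\star|$ you need still holds by a short two-case check). The payoff of your approach is a sharper constant; the payoff of the paper's is that the whole pointwise analysis collapses into the one-line operator coincidence $\mathfrak{F}(\Phi r^\star)=\tilde{\mathfrak{F}}(\Phi r^\star)$.
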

\begin{proof}

By definitions of $v^{\boldsymbol{\hat{\sigma}}}$ and $v^{\boldsymbol{\tilde{\sigma}}}$ (c.f \eqref{v_tilde_definition}) and using Jensen's inequality and the stationarity property we have that,
\begin{align}\nonumber
    \mathbb{E}\left[v^{\boldsymbol{\hat{\sigma}}}(s_0)\right]-\mathbb{E}\left[v^{\boldsymbol{\tilde{\sigma}}}(s_0)\right]&=\mathbb{E}\left[Pv^{\boldsymbol{\hat{\sigma}}}(s_0)\right]-\mathbb{E}\left[Pv^{\boldsymbol{\tilde{\sigma}}}(s_0)\right]
    \\&\leq \left|\mathbb{E}\left[Pv^{\boldsymbol{\hat{\sigma}}}(s_0)\right]-\mathbb{E}\left[Pv^{\boldsymbol{\tilde{\sigma}}}(s_0)\right]\right|\nonumber
    \\&\leq \left\|Pv^{\boldsymbol{\hat{\sigma}}}-Pv^{\boldsymbol{\tilde{\sigma}}}\right\|. \label{v_approx_intermediate_bound_P}
\end{align}
Now recall that $\tilde{\bmQ}:=\cR+\gamma Pv^{\boldsymbol{\tilde{\sigma}}}$ and $\bmQ^\star:=\cR+\gamma Pv^{\boldsymbol{\sigma\star}}$,  using these expressions in \eqref{v_approx_intermediate_bound_P} we find that 
\begin{align*}
    \mathbb{E}\left[v^{\boldsymbol{\hat{\sigma}}}(s_0)\right]-\mathbb{E}\left[v^{\boldsymbol{\tilde{\sigma}}}(s_0)\right]&\leq \frac{1}{\gamma}\left\|\tilde{\bmQ}-\bmQ^\star\right\|. \label{v_approx_q_approx_bound}
\end{align*}
Moreover, by the triangle inequality and using the fact that $\mathfrak{F}(\Phi r^\star)=\tilde{\mathfrak{F}}(\Phi r^\star)$ and that $\mathfrak{F}\bmQ^\star=\bmQ^\star$ and $\mathfrak{F}\tilde{\bmQ}=\tilde{\bmQ}$ (c.f. \eqref{F_tilde_fixed_point}) we have that
\begin{align*}
\left\|\tilde{\bmQ}-\bmQ^\star\right\|&\leq \left\|\tilde{\bmQ}-\mathfrak{F}(\Phi r^\star)\right\|+\left\|\bmQ^\star-\tilde{\mathfrak{F}}(\Phi r^\star)\right\|    
\\&\leq \gamma\left\|\tilde{\bmQ}-\Phi r^\star\right\|+\gamma\left\|\bmQ^\star-\Phi r^\star\right\| 
\\&\leq 2\gamma\left\|\tilde{\bmQ}-\Phi r^\star\right\|+\gamma\left\|\bmQ^\star-\tilde{\bmQ}\right\|, 
\end{align*}
which gives the following bound:
\begin{align*}
\left\|\tilde{\bmQ}-\bmQ^\star\right\|&\leq 2\gamma\left(1-\gamma\right)^{-1}\left\|\tilde{\bmQ}-\Phi r^\star\right\|, 
\end{align*}
from which, using Lemma \ref{projection_F_contraction_lemma}, we deduce that $
    \left\|\tilde{\bmQ}-\bmQ^\star\right\|\leq 2\gamma\left[(1-\gamma)\sqrt{(1-\gamma^2)}\right]^{-1}\left\|\Pi \bmQ^\star - \bmQ^\star\right\|$,
after which by \eqref{v_approx_q_approx_bound}, we finally obtain
\begin{align*}
        \mathbb{E}\left[v^{\boldsymbol{\hat{\sigma}}}(s_0)\right]-\mathbb{E}\left[v^{\boldsymbol{\tilde{\sigma}}}(s_0)\right]\leq  2\left[(1-\gamma)\sqrt{(1-\gamma^2)}\right]^{-1}\left\|\Pi \bmQ^\star - \bmQ^\star\right\|,
\end{align*}
as required.
\end{proof}

Let us rewrite the update in the following way:
\begin{align*}
    r_{t+1}=r_t+\gamma_t\Xi(w_t,r_t),
\end{align*}
where the function $\Xi:\mathbb{R}^{2d}\times \mathbb{R}^p\to\mathbb{R}^p$ is given by:
\begin{align*}
\Xi(w,r):=\phi(s)\left(\cR(s,\cdot)+\gamma\min\{(\Phi r) (s'),\hat{\bm\cM}(\Phi r) (s')\}-(\Phi r)(s)\right),
\end{align*}
for any $w\equiv (s,s')\in\mathcal{S}^2$  and for any $r\in\mathbb{R}^p$. Let us also define the function $\boldsymbol{\Xi}:\mathbb{R}^p\to\mathbb{R}^p$ by the following:
\begin{align*}
    \boldsymbol{\Xi}(r):=\mathbb{E}_{w_0\sim (\mathbb{P},\mathbb{P})}\left[\Xi(w_0,r)\right]; w_0:=(s_0,z_1).
\end{align*}
\begin{lemma}\label{iteratation_property_lemma}
The following statements hold for all $z\in \{0,1\}\times \mathcal{S}$:
\begin{itemize}
    \item[i)] $
(r-r^\star)\boldsymbol{\Xi}_k(r)<0,\qquad \forall r\neq r^\star,    
$
\item[ii)] $
\boldsymbol{\Xi}_k(r^\star)=0$.
\end{itemize}
\end{lemma}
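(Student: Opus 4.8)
The plan is to recognise that $\boldsymbol{\Xi}$ is, up to a feature-weighted expectation, the \emph{projected Bellman residual} of $\mathfrak{F}$, and then to harvest the two facts already in hand: that $r^\star$ is the fixed point of $\Pi\mathfrak{F}$ (Theorem~\ref{primal_convergence_theorem_app}) and that $\Pi\mathfrak{F}$ is a $\gamma$-contraction (Lemma~\ref{projection_F_contraction_lemma}). Writing $\boldsymbol{\Xi}$ for $\boldsymbol{\Xi}_k$ (the argument being uniform in the index), let $\mu$ be the steady-state distribution furnished by ergodicity (Assumption~1), let $D=\diag{\mu}$, and let $\langle u,v\rangle:=u^\top D v$ be the weighted inner product under which $\Pi=\Phi(\Phi^\top D\Phi)^{-1}\Phi^\top D$ is the orthogonal projection onto $\mathrm{span}(\Phi)$. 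First I would evaluate the expectation in the definition of $\boldsymbol{\Xi}$: the inner expectation over the successor state $z_1=s'$ turns $\min\{\Phi r,\hat{\bm\cM}\Phi r\}$ into $P\min\{\Phi r,\hat{\bm\cM}\Phi r\}$, while the outer state $s_0$ is drawn from $\mu$, yielding the compact form
\[
\boldsymbol{\Xi}(r)=\Phi^\top D\bigl(\mathfrak{F}(\Phi r)-\Phi r\bigr),
\]
so that $\boldsymbol{\Xi}(r)$ is exactly the feature-correlated Bellman residual of $\mathfrak{F}$ at $\Phi r$.

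For statement (ii), by Theorem~\ref{primal_convergence_theorem_app} the point $r^\star$ is the unique solution of $\Pi\mathfrak{F}(\Phi r^\star)=\Phi r^\star$. Substituting the closed form of $\Pi$ and using that $\Phi$ has full column rank, this reduces to $\Phi^\top D\,\mathfrak{F}(\Phi r^\star)=(\Phi^\top D\Phi)r^\star=\Phi^\top D\,\Phi r^\star$, i.e. $\Phi^\top D\bigl(\mathfrak{F}(\Phi r^\star)-\Phi r^\star\bigr)=0$, which is precisely $\boldsymbol{\Xi}(r^\star)=0$. Geometrically this says the Bellman residual at $r^\star$ is $D$-orthogonal to the feature span.

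For statement (i), I would set $e:=\Phi(r-r^\star)\in\mathrm{span}(\Phi)$, so that $\Pi e=e$, and write
\[
(r-r^\star)^\top\boldsymbol{\Xi}(r)=\langle e,\mathfrak{F}(\Phi r)-\Phi r\rangle=\langle e,\mathfrak{F}(\Phi r)-\Phi r^\star\rangle-\|e\|^2.
\]
Since $\Pi$ is self-adjoint and idempotent with $\Pi e=e$, and $\Phi r^\star=\Pi\mathfrak{F}(\Phi r^\star)\in\mathrm{span}(\Phi)$, the first term equals $\langle e,\Pi\mathfrak{F}(\Phi r)-\Pi\mathfrak{F}(\Phi r^\star)\rangle$. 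Applying Cauchy--Schwarz together with the contraction bound $\|\Pi\mathfrak{F}(\Phi r)-\Pi\mathfrak{F}(\Phi r^\star)\|\le\gamma\|e\|$ from Lemma~\ref{projection_F_contraction_lemma} gives $\langle e,\mathfrak{F}(\Phi r)-\Phi r^\star\rangle\le\gamma\|e\|^2$, whence
\[
(r-r^\star)^\top\boldsymbol{\Xi}(r)\le-(1-\gamma)\|e\|^2<0,
\]
the strict inequality holding whenever $r\neq r^\star$, since full column rank of $\Phi$ forces $e\neq0$ and thus $\|e\|>0$.

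The main obstacle I anticipate is the very first step: rigorously justifying the matrix identity for $\boldsymbol{\Xi}(r)$. This requires that the sampling law of $w_0=(s_0,z_1)$ be the steady-state law, so that $s_0\sim\mu$ with $z_1$ its $P$-successor, which is exactly where ergodicity (Assumption~1) enters and where the weighting $D$ defining $\Pi$ is pinned down. Care is also needed because the $\min$ inside $\mathfrak{F}$ is nonlinear, so the successor-state expectation must be pushed through the kernel to produce $P\min\{\cdot,\cdot\}$ \emph{before} any linear-algebra manipulation. Once this identification is secured, parts (ii) and (i) follow essentially immediately from the fixed-point and contraction properties already proved for $\Pi\mathfrak{F}$.
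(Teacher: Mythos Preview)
Your proposal is correct and follows essentially the same route as the paper: both identify $\boldsymbol{\Xi}(r)$ with the feature-weighted Bellman residual $\langle\Phi\,\cdot\,,\mathfrak{F}\Phi r-\Phi r\rangle$ (you write it in matrix form as $\Phi^\top D(\mathfrak{F}\Phi r-\Phi r)$, the paper component-wise as $\langle\phi_k,\mathfrak{F}\Phi r-\Phi r\rangle$), then insert $\Pi$ via orthogonality/self-adjointness, invoke the fixed-point identity $\Pi\mathfrak{F}\Phi r^\star=\Phi r^\star$ and the contraction from Lemma~\ref{projection_F_contraction_lemma}, and conclude with Cauchy--Schwarz to obtain the $(\gamma-1)\|\Phi r-\Phi r^\star\|^2$ bound. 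One minor remark: your appeal to Theorem~\ref{primal_convergence_theorem_app} for the fixed-point property of $r^\star$ risks circularity since this lemma is used to prove that theorem; it is cleaner to note that existence and uniqueness of $r^\star$ follow directly from the Banach fixed-point theorem applied to the contraction in Lemma~\ref{projection_F_contraction_lemma}(i).
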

\begin{proof}
To prove the statement, we first note that each component of $\boldsymbol{\Xi}_k(r)$ admits a representation as an inner product, indeed: 
\begin{align*}
\boldsymbol{\Xi}_k(r)&=\mathbb{E}\left[\phi_k(s_0)(\cR(s_0,\bma_0)+\gamma\min\{(\Phi r) (s_1),\hat{\bm\cM}(\Phi r) (s_1)\}-(\Phi r)(s_0)\right] 
\\&=\mathbb{E}\left[\phi_k(s_0)(\cR(s_0,\bma_0)+\gamma\mathbb{E}\left[\min\{(\Phi r) (s_1),\hat{\bm\cM}(\Phi r) (s_1)\}|x_0\right]-(\Phi r)(s_0)\right]
\\&=\mathbb{E}\left[\phi_k(s_0)(\cR(s_0,\bma_0)+\gamma P\min\{(\Phi r) ,\hat{\bm\cM}(\Phi r) \}(s_0)-(\Phi r)(s_0)\right]
\\&=\left\langle\phi_k,\mathfrak{F}\Phi r-\Phi r\right\rangle,
\end{align*}
% \begin{align*}
% \boldsymbol{\Xi}_k(r)&=\mathbb{E}\left[\phi_k(s_0)(\cR(s_0,\bma_0)+\gamma\max\left\{\Phi r(s_1),\bm\cM\Phi(s_1)\right\}-(\Phi r)(s_0)\right] 
% \\&=\mathbb{E}\left[\phi_k(s_0)(\cR(s_0,\bma_0)+\gamma\mathbb{E}\left[\max\left\{\Phi r(s_1),\bm\cM\Phi(s_1)\right\}|z_0\right]-(\Phi r)(s_0)\right]
% \\&=\mathbb{E}\left[\phi_k(s_0)(\cR(s_0,\bma_0)+\gamma P\max\left\{\left(\Phi r,\bm\cM\Phi\right)\right\}(s_0)-(\Phi r)(s_0)\right]
% \\&=\left\langle\phi_k,\mathfrak{F}\Phi r-\Phi r\right\rangle,
% \end{align*}
using the iterated law of expectations and the definitions of $P$ and $\mathfrak{F}$.

We now are in a position to prove (i). Indeed, we now observe the following:
\begin{align*}
\left(r-r^\star\right)\boldsymbol{\Xi}_k(r)&=\sum_{l=1}\left(r(l)-r^\star(l)\right)\left\langle\phi_l,\mathfrak{F}\Phi r -\Phi r\right\rangle
\\&=\left\langle\Phi r -\Phi r^\star, \mathfrak{F}\Phi r -\Phi r\right\rangle
\\&=\left\langle\Phi r -\Phi r^\star, (\boldsymbol{1}-\Pi)\mathfrak{F}\Phi r+\Pi \mathfrak{F}\Phi r -\Phi r\right\rangle
\\&=\left\langle\Phi r -\Phi r^\star, \Pi \mathfrak{F}\Phi r -\Phi r\right\rangle,
\end{align*}
where in the last step we used the orthogonality of $(\boldsymbol{1}-\Pi)$. We now recall that $\Pi \mathfrak{F}\Phi r^\star=\Phi r^\star$ since $\Phi r^\star$ is a fixed point of $\Pi \mathfrak{F}$. Additionally, using Lemma \ref{projection_F_contraction_lemma} we observe that $\|\Pi \mathfrak{F}\Phi r -\Phi r^\star\| \leq \gamma \|\Phi r -\Phi r^\star\|$. With this we now find that
\begin{align*}
&\left\langle\Phi r -\Phi r^\star, \Pi \mathfrak{F}\Phi r -\Phi r\right\rangle    
\\&=\left\langle\Phi r -\Phi r^\star, (\Pi \mathfrak{F}\Phi r -\Phi r^\star)+ \Phi r^\star -\Phi r\right\rangle
\\&\leq\left\|\Phi r -\Phi r^\star\right\|\left\|\Pi \mathfrak{F}\Phi r -\Phi r^\star\right\|- \left\|\Phi r^\star -\Phi r\right\|^2
\\&\leq(\gamma -1)\left\|\Phi r^\star -\Phi r\right\|^2,
\end{align*}
which is negative since $\gamma<1$ which completes the proof of part i).

The proof of part ii) is straightforward since we readily observe that
\begin{align*}
    \boldsymbol{\Xi}_k(r^\star)= \left\langle\phi_l, \mathfrak{F}\Phi r^\star-\Phi r\right\rangle= \left\langle\phi_l, \Pi \mathfrak{F}\Phi r^\star-\Phi r\right\rangle=0,
\end{align*}
as required and from which we deduce the result.
\end{proof}
To prove the theorem, we make use of a special case of the following result:

\begin{theorem}[Th. 17, p. 239 in \citep{benveniste2012adaptive}] \label{theorem:stoch.approx.}
Consider a stochastic process $r_t:\mathbb{R}\times\{\infty\}\times\Omega\to\mathbb{R}^k$ which takes an initial value $r_0$ and evolves according to the following:
\begin{align}
    r_{t+1}=r_t+\alpha \Xi(s_t,r_t),
\end{align}
for some function $s:\mathbb{R}^{2d}\times\mathbb{R}^k\to\mathbb{R}^k$ and where the following statements hold:
\begin{enumerate}
    \item $\{s_t|t=0,1,\ldots\}$ is a stationary, ergodic Markov process taking values in $\mathbb{R}^{2d}$
    \item For any positive scalar $q$, there exists a scalar $\mu_q$ such that $\mathbb{E}\left[1+\|s_t\|^q|s\equiv s_0\right]\leq \mu_q\left(1+\|s\|^q\right)$
    \item The step size sequence satisfies the Robbins-Monro conditions, that is $\sum_{t=0}^\infty\alpha_t=\infty$ and $\sum_{t=0}^\infty\alpha^2_t<\infty$
    \item There exists scalars $d$ and $q$ such that $    \|\Xi(w,r)\|
        \leq d\left(1+\|w\|^q\right)(1+\|r\|)$
    \item There exists scalars $d$ and $q$ such that $
        \sum_{t=0}^\infty\left\|\mathbb{E}\left[\Xi(w_t,r)|z_0\equiv z\right]-\mathbb{E}\left[\Xi(w_0,r)\right]\right\|
        \leq d\left(1+\|w\|^q\right)(1+\|r\|)$
    \item There exists a scalar $d>0$ such that $
        \left\|\mathbb{E}[\Xi(w_0,r)]-\mathbb{E}[\Xi(w_0,\bar{r})]\right\|\leq d\|r-\bar{r}\| $
    \item There exists scalars $d>0$ and $q>0$ such that $
        \sum_{t=0}^\infty\left\|\mathbb{E}\left[\Xi(w_t,r)|w_0\equiv w\right]-\mathbb{E}\left[\Xi(w_0,\bar{r})\right]\right\|
        \leq c\|r-\bar{r}\|\left(1+\|w\|^q\right) $
    \item There exists some $r^\star\in\mathbb{R}^k$ such that $\boldsymbol{\Xi}(r)(r-r^\star)<0$ for all $r \neq r^\star$ and $\bar{s}(r^\star)=0$. 
\end{enumerate}
Then $r_t$ converges to $r^\star$ almost surely.
\end{theorem}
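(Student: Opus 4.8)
The plan is to prove this by the classical ``ODE method'' for stochastic approximation driven by correlated (Markovian, state-dependent) noise, combining a Poisson-equation decomposition of the fluctuations with a Lyapunov / almost-supermartingale argument. I would write $\boldsymbol{\Xi}(r):=\mathbb{E}[\Xi(w_0,r)]$ for the averaged mean field and view the recursion $r_{t+1}=r_t+\alpha_t\Xi(w_t,r_t)$ as a noisy Euler discretisation of the ODE $\dot r=\boldsymbol{\Xi}(r)$. Condition~6 gives global Lipschitz continuity of $\boldsymbol{\Xi}$ and condition~4 gives linear growth in $r$; condition~8 supplies the Lyapunov function $V(r):=\|r-r^\star\|^2$, for which $\langle\boldsymbol{\Xi}(r),r-r^\star\rangle<0$ whenever $r\neq r^\star$ and $\boldsymbol{\Xi}(r^\star)=0$, so that $r^\star$ is the unique globally asymptotically stable equilibrium of the mean flow and $V$ strictly decreases along it away from $r^\star$.

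The core step is to remove the temporal correlation in the noise by solving a Poisson equation for the transition kernel $P$ of the driving process. For each fixed $r$ I would set \[\nu_r(w):=\sum_{t=0}^\infty\Big(\mathbb{E}[\Xi(w_t,r)\mid w_0=w]-\boldsymbol{\Xi}(r)\Big),\] which converges by the summable-mixing estimate of condition~5 and, via condition~7, is Lipschitz in $r$ with polynomial growth in $w$; by construction it satisfies $\nu_r-P\nu_r=\Xi(\cdot,r)-\boldsymbol{\Xi}(r)$. Substituting this identity I would decompose the fluctuation $\Xi(w_t,r_t)-\boldsymbol{\Xi}(r_t)$ into (i) a martingale difference $\nu_{r_t}(w_{t+1})-P\nu_{r_t}(w_t)$, (ii) a telescoping term $P\nu_{r_t}(w_t)-\nu_{r_t}(w_t)$ rearranged by Abel summation, and (iii) cross terms arising from the slow variation $\|r_{t+1}-r_t\|=O(\alpha_t)$ of the Lipschitz map $r\mapsto\nu_r$.

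I would then control each piece and close a Lyapunov recursion. The martingale term has summable conditional second moments, by $\sum_t\alpha_t^2<\infty$ together with the moment bound (condition~2) and the growth bound (condition~4), so the corresponding weighted martingale converges almost surely; the telescoping and cross terms are summable once Abel summation is applied, because $\alpha_t\to0$ and $\sum_t(\alpha_t^2+|\alpha_t-\alpha_{t+1}|)<\infty$. Feeding the decomposition into the increment of $V$ yields an almost-supermartingale bound of Robbins--Siegmund type, \[\mathbb{E}[V(r_{t+1})\mid\mathcal F_t]\le V(r_t)+2\alpha_t\langle\boldsymbol{\Xi}(r_t),r_t-r^\star\rangle+\beta_t,\] with $\sum_t\beta_t<\infty$ almost surely and the inner-product term $\le 0$. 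The Robbins--Siegmund lemma then gives that $V(r_t)$ converges and $\sum_t\alpha_t\,|\langle\boldsymbol{\Xi}(r_t),r_t-r^\star\rangle|<\infty$; since $\sum_t\alpha_t=\infty$ and the inner product is strictly negative off $r^\star$ (condition~8), this forces $r_t\to r^\star$ almost surely.

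The main obstacle is the Markovian-noise control in the core step: unlike i.i.d.\ noise, $\Xi(w_t,r_t)-\boldsymbol{\Xi}(r_t)$ is neither conditionally mean-zero nor independent across $t$, so everything hinges on constructing the Poisson solution $\nu_r$ and proving it inherits uniform Lipschitz-in-$r$ and polynomial-growth-in-$w$ estimates --- precisely what the summability hypotheses~5 and~7 are engineered to deliver --- and then verifying that the induced telescoping and cross terms are genuinely summable rather than merely bounded. A secondary difficulty is ruling out escape of $\{r_t\}$ to infinity before the Lyapunov argument applies; I would handle this using the linear-growth bound (condition~4) together with the Lyapunov structure, either through a truncation/stopping-time device or by observing that the Robbins--Siegmund inequality itself keeps $V(r_t)$ almost surely bounded.
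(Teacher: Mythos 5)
You should first note that the paper contains no proof of this statement at all: as its attribution ``Th.~17, p.~239'' signals, it is imported verbatim from \citet{benveniste2012adaptive} and used as a black box --- the paper's only obligation, discharged in the surrounding text, is to verify hypotheses 1--8 when invoking it for Theorem~\ref{primal_convergence_theorem_app}. So there is no in-paper proof to compare against; what your sketch does is reconstruct the argument of the cited source, and it does so faithfully. The ODE method for state-dependent Markovian noise, the Poisson-equation resolution $\nu_r - P\nu_r = \Xi(\cdot,r)-\boldsymbol{\Xi}(r)$ (your series for $\nu_r$ converges precisely by hypothesis~5 and inherits Lipschitz-in-$r$ control from hypothesis~7), the martingale-plus-telescoping decomposition of the fluctuations, and the Robbins--Siegmund argument on $V(r)=\|r-r^\star\|^2$ driven by hypothesis~8 are exactly the machinery of Benveniste--M\'etivier--Priouret. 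Each hypothesis is deployed where it belongs, and your identification of the two genuine difficulties (the Poisson-solution estimates and non-explosion of the iterates) is accurate.

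Two points in your write-up need tightening, though neither breaks the approach. First, your Abel-summation step uses $\sum_t\left(\alpha_t^2+|\alpha_t-\alpha_{t+1}|\right)<\infty$, but $\sum_t|\alpha_t-\alpha_{t+1}|<\infty$ is not among the stated hypotheses and does not follow from Robbins--Monro alone; it is automatic for non-increasing step sizes (the regime in which the cited theorem lives), so you should state that restriction explicitly rather than smuggle it in. Second, your final inference ``strict negativity off $r^\star$ plus $\sum_t\alpha_t=\infty$ forces $r_t\to r^\star$'' elides a compactness step: pointwise strict negativity gives nothing uniform by itself, so you must first use the Robbins--Siegmund conclusion that $V(r_t)$ converges (hence the iterates stay in a compact set), then use continuity of $\boldsymbol{\Xi}$ (from hypothesis~6) to get a uniformly negative bound on $\langle\boldsymbol{\Xi}(r),r-r^\star\rangle$ over annuli $\{\epsilon\le\|r-r^\star\|\le M\}$, which with $\sum_t\alpha_t=\infty$ forces $\liminf_t\|r_t-r^\star\|=0$ and hence $V(r_t)\to 0$. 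With these two repairs your sketch is a correct rendering of the standard proof of the imported result.
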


In order to apply the Theorem \ref{theorem:stoch.approx.}, we show that conditions 1 - 7 are satisfied.

\begin{proof}
Conditions 1-2 are true by assumption while condition 3 can be made true by choice of the learning rates. Therefore it remains to verify conditions 4-7 are met.   

To prove 4, we observe that
\begin{align*}
\left\|\Xi(w,r)\right\|
&=\left\|\phi(s)\left(\cR(s,\cdot)+\gamma\min\{(\Phi r) (s'),\hat{\bm\cM}(\Phi r) (s')\}-(\Phi r)(s)\right)\right\|
\\&\leq\left\|\phi(s)\right\|\left\|\cR(s,\cdot)+\gamma\left(\left\|\phi(s')\right\|\|r\|+\hat{\bm\cM}\Phi (s')\right)\right\|+\left\|\phi(s)\right\|\|r\|
\\&\leq\left\|\phi(s)\right\|\left(\|\cR(s,\cdot)\|+\gamma\|\hat{\bm\cM}\Phi (s')\|\right)+\left\|\phi(s)\right\|\left(\gamma\left\|\phi(s')\right\|+\left\|\phi(s)\right\|\right)\|r\|.
\end{align*}
% 
% 
% \begin{align*}
% \left\|\Xi(w,r)\right\|
% &=\left\|\phi(s)\left(\cR(s,\cdot)+\gamma\max\left\{(\Phi r) (s'),\bm\cM\Phi (s')\right\}-(\Phi r)(s)\right)\right\|
% \\&\leq\left\|\phi(s)\right\|\left\|\cR(s,\cdot)+\gamma\left(\left\|\phi(s')\right\|\|r\|+\bm\cM\Phi (s')\right)\right\|+\left\|\phi(s)\right\|\|r\|
% \\&\leq\left\|\phi(s)\right\|\left(\|\cR(s,\cdot)\|+\gamma\|\bm\cM\Phi (s')\|\right)+\left\|\phi(s)\right\|\left(\gamma\left\|\phi(s')\right\|+\left\|\phi(s)\right\|\right)\|r\|.
% \end{align*}
Now using the definition of $\bm\cM$, we readily observe that $\|\bm\cM\Phi (s')\|\leq \|c\|_\infty+  \|\cR\|+\gamma\|\cP^{\bm\sigma}_{s's_t}\Phi\|\leq \|c\|_\infty+\| \cR\|+\gamma\|\Phi\|$ using the non-expansiveness of $P$.

Hence, we lastly deduce that
\begin{align*}
\left\|\Xi(w,r)\right\|
&\leq\left\|\phi(s)\right\|\left(\|\cR(s,\cdot)\|+\gamma\|\hat{\bm\cM}\Phi (s')\|\right)+\left\|\phi(s)\right\|\left(\gamma\left\|\phi(s')\right\|+\left\|\phi(s)\right\|\right)\|r\|
\\&\leq\left\|\phi(s)\right\|\left(\|\cR(s,\cdot)\|+\gamma\left(\|c\|_\infty+\| \cR\|+|\phi\|\right)\right)+\left\|\phi(s)\right\|\left(\gamma\left\|\phi(s')\right\|+\left\|\phi(s)\right\|\right)\|r\|,
\end{align*}
we then easily deduce the result using the boundedness of $\phi$ and $\cR$.

Now we observe the following Lipschitz condition on $\Xi$:
\begin{align*}
&\left\|\Xi(w,r)-\Xi(w,\bar{r})\right\|
\\&\begin{aligned}=\Big\|\phi(s)\left(\gamma\min\{(\Phi r) (s'),\hat{\bm\cM}\Phi (s')\}-\gamma\min\{(\Phi \bar{r}) (s'),\hat{\bm\cM}\Phi (s')\}\right)-\left((\Phi r)(s)-\Phi\bar{r}(s)\right)\Big\|
\end{aligned}
\\&\leq\gamma\left\|\phi(s)\right\|\Big\|\min\left\{\phi'(s') r,\hat{\bm\cM}\Phi'(s')\right\}-\min\left\{(\phi'(s') \bar{r}),\hat{\bm\cM}\Phi'(s')\right\}\Big\|+\left\|\phi(s)\right\|\left\|\phi'(s) r-\phi(s)\bar{r}\right\|
%
%
% \\&\leq\gamma\left\|\phi(s)\right\|\left\|\min\left\{\phi'(s') r,\hat{\bm\cM}\Phi'(s')\right\}-\min\left\{(\phi'(s') \bar{r}),\hat{\bm\cM}\Phi'(s')\right\}\right\|+\left\|\phi(s)\right\|\left\|\phi'(s) r-\phi(s)\bar{r}\right\|
%
%
\\&\leq\gamma\left\|\phi(s)\right\|\left\|\phi'(s') r-\phi'(s') \bar{r}\right\|+\left\|\phi(s)\right\|\left\|\phi'(s) r-\phi'(s)\bar{r}\right\|
\\&\leq \left\|\phi(s)\right\|\left(\gamma\left\|\phi'(s')\right\|+ \left\|\phi'(s)\right\|\right)\left\|r-\bar{r}\right\|
\\&\leq {\rm const.}\left\|r-\bar{r}\right\|,
\end{align*}
% \begin{align*}
% &\left\|\Xi(w,r)-\Xi(w,\bar{r})\right\|
% \\&=\left\|\phi(s)\left(\gamma\max\left\{(\Phi r)(s'),\bm\cM\Phi(s')\right\}-\gamma\max\left\{(\Phi \bar{r})(s'),\bm\cM\Phi(s')\right\}\right)-\left((\Phi r)(s)-\Phi\bar{r}(s)\right)\right\|
% \\&\leq\gamma\left\|\phi(s)\right\|\left\|\max\left\{\phi'(s') r,\bm\cM\Phi'(s')\right\}-\max\left\{(\phi'(s') \bar{r}),\bm\cM\Phi'(s')\right\}\right\|+\left\|\phi(s)\right\|\left\|\phi'(s) r-\phi(s)\bar{r}\right\|
% \\&\leq\gamma\left\|\phi(s)\right\|\left\|\phi'(s') r-\phi'(s') \bar{r}\right\|+\left\|\phi(s)\right\|\left\|\phi'(s) r-\phi'(s)\bar{r}\right\|
% \\&\leq \left\|\phi(s)\right\|\left(\left\|\phi(s)\right\|+ \gamma\left\|\phi(s)\right\|\left\|\phi'(s') -\phi'(s') \right\|\right)\left\|r-\bar{r}\right\|
% \\&\leq c\left\|r-\bar{r}\right\|,
% \end{align*}
using Cauchy-Schwarz inequality and  that for any scalars $a,b,c$ we have that $
    \left|\max\{a,b\}-\max\{b,c\}\right|\leq \left|a-c\right|$ and $
    \left|\min\{a,b\}-\min\{b,c\}\right|\leq \left|a-c\right|$, which proves Part 6.
    
Using Assumptions 3 and 4, we therefore deduce that
\begin{align}
\sum_{t=0}^\infty\left\|\mathbb{E}\left[\Xi(w,r)-\Xi(w,\bar{r})|w_0=w\right]-\mathbb{E}\left[\Xi(w_0,r)-\Xi(w_0,\bar{r})\right\|\right]\leq {\rm const.}\left\|r-\bar{r}\right\|(1+\left\|w\right\|^l).
\end{align}
which proves Part 7.

Part 2 is assured by Lemma \ref{projection_F_contraction_lemma} while Part 4 (and hence Part 5) is assured by Lemma \ref{value_difference_Q_difference} and lastly Part 8 is assured by Lemma \ref{iteratation_property_lemma}.
% \end{proof}
This result completes the proof of Theorem \ref{primal_convergence_theorem}. 
\end{proof}

% \section{Proof of Proposition \ref{primal_convergence_theorem}}
% \begin{proof}
% The proof is straightforward since the result immediately follows after fixing the joint policy $\bm\pi\in\bm\Pi$.
% \end{proof}

\section*{Proof of Proposition \ref{prop:switching_times}}
\begin{proof}
We begin by re-expressing the \textit{activation times} at which {\fontfamily{cmss}\selectfont Switcher} agent activates the {\fontfamily{cmss}\selectfont Adversary}. In particular,an activation time $\tau_k$ is defined recursively $\tau_k=\inf\{t>\tau_{k-1}|s_t\in A,\tau_k\in\mathcal{F}_t\}$ where $A=\{s\in \mathcal{S},g(s_t)=1\}$.
The proof is given by deriving a contradiction.  Therefore suppose that $\boldsymbol{\hat{\bm\cM}}v_S(s_{\tau_k})> v_S(s_{\tau_k})$  and suppose that the activation time $\tau'_1>\tau_1$ is an optimal activation time. Construct {\fontfamily{cmss}\selectfont Switcher} $g'$ and $\tilde{g}$ policy activation times by $(\tau'_0,\tau'_1,\ldots,)$ and $g'^2$ policy by $(\tau'_0,\tau_1,\ldots)$ respectively.  Define by $l=\inf\{t>0;\bm{\hat{\cM}}\psi(s_{t}= \psi(s_{t}\}$ and $m=\sup\{t;t<\tau'_1\}$.
By construction we have that
\begin{align*}
& \quad v^{\boldsymbol{\pi},g'}_S(s)
\\&=\mathbb{E}\left[\cR_S(s_{0},\boldsymbol{a}_{0})+\mathbb{E}\left[\ldots+\gamma^{l-1}\mathbb{E}\left[\cR_S(s_{\tau_1-1},\boldsymbol{a}_{\tau_1-1})+\ldots+\gamma^{m-l-1}\mathbb{E}\left[ \cR_S(s_{\tau'_1-1},\boldsymbol{a}_{\tau'_1-1})+\gamma\bm{\hat{\cM}}v^{\boldsymbol{\pi},g'}_S(s')\right]\right]\right]\right]
\\&<\mathbb{E}\left[\cR_S(s_{0},\boldsymbol{a}_{0})+\mathbb{E}\left[\ldots+\gamma^{l-1}\mathbb{E}\left[ \cR_S(s_{\tau_1-1},\boldsymbol{a}_{\tau_1-1})+\gamma\bm{\hat{\cM}}v^{\boldsymbol{\pi},g'}_S(s_{\tau_1})\right]\right]\right]
\end{align*}
We now use the following observation $\mathbb{E}\left[ \cR_S(s_{\tau_1-1},\boldsymbol{a}_{\tau_1-1})+\gamma\bm{\hat{\cM}}v^{\boldsymbol{\pi},g'}_S(s_{\tau_1})\right]\\\ \text{\hspace{30 mm}}\geq \min\left\{\bm{\hat{\cM}}v^{\boldsymbol{\pi},g'}_S(s_{\tau_1}),\underset{\boldsymbol{a}_{\tau_1}\in\mathcal{A}}{\max}\;\left[ \cR_S(s_{\tau_{1}},\boldsymbol{a}_{\tau_{1}})+\gamma\sum_{s'\in\mathcal{S}}P(s';\boldsymbol{a}_{\tau_1},s_{\tau_1})v^{\boldsymbol{\pi},g}_S(s')\right]\right\}$.

Using this we deduce that
\begin{align*}
&v^{\boldsymbol{\pi},g'}_S(s>\mathbb{E}\Bigg[\cR_S(s_{0},\boldsymbol{a}_{0})+\mathbb{E}\Bigg[\ldots
\\&+\gamma^{l-1}\mathbb{E}\left[ \cR_S(s_{\tau_1-1},\boldsymbol{a}_{\tau_1-1})+\gamma\max\left\{\mathcal{M}^{\boldsymbol{\pi},\tilde{g}}v^{\boldsymbol{\pi},g'}_S(s_{\tau_1}),\underset{a_{\tau_1}\in\mathcal{A}}{\max}\;\left[ \cR_S(s_{\tau_{k}},\boldsymbol{a}_{\tau_{k}})+\gamma\sum_{s'\in\mathcal{S}}P(s';\boldsymbol{a}_{\tau_1},s_{\tau_1})v^{\boldsymbol{\pi},g}_S(s')\right]\right\}\right]\Bigg]\Bigg]
\\&=\mathbb{E}\left[\cR_S(s_{0},\boldsymbol{a}_{0})+\mathbb{E}\left[\ldots+\gamma^{l-1}\mathbb{E}\left[ \cR_S(s_{\tau_1-1},\boldsymbol{a}_{\tau_1-1})+\gamma\left[T v^{\boldsymbol{\pi},\tilde{g}}_S\right](s_{\tau_1})\right]\right]\right]=v^{\boldsymbol{\pi},\tilde{g}}_S(s)
\end{align*}
where the first inequality is true by assumption on $\bm{\hat{\cM}}$. This is a contradiction since $g'$ is an optimal policy for {\fontfamily{cmss}\selectfont Switcher}. Using analogous reasoning, we deduce the same result for $\tau'_k<\tau_k$ after which deduce the result. Moreover, by invoking the same reasoning, we can conclude that it must be the case that $(\tau_0,\tau_1,\ldots,\tau_{k-1},\tau_k,\tau_{k+1},\ldots,)$ are the optimal activation times. 

\end{proof}

\section{Proof of Theorem \ref{thm:optimal_policy_budget}}
\begin{proof}
The proof of the Theorem is straightforward since by Theorem \ref{primal_convergence_theorem}, {\fontfamily{cmss}\selectfont Switcher}'s problem can be solved using a dynamic programming principle. The proof immediately by application of Theorem 2 in \citep{sootla2022saute}.

\end{proof}

\end{document}